\documentclass[a4paper,fleqn,numbers]{cas-sc}

\usepackage{graphicx}
\usepackage{subcaption}
\usepackage{amssymb}

\usepackage{longtable}
\usepackage{booktabs}
\usepackage{array}
\usepackage{amsmath,amssymb}

\usepackage{xcolor}

\stmAuthorSetup{color=black}

\ExplSyntaxOn
\RenewDocumentCommand \firstname {} 
{
  \textcolor{\l_stm_augroup_color_tl}
  {\seq_use:Nn \l_stm_au_seq { ~ }}
}
\ExplSyntaxOff

\usepackage[numbers,sort&compress]{natbib}

\usepackage{float}
\usepackage{placeins}
\def\tsc#1{\csdef{#1}{\textsc{\lowercase{#1}}\xspace}}
\tsc{WGM}
\tsc{QE}

\newtheorem{theorem}{Theorem}
\newtheorem{lemma}[theorem]{Lemma}
\newdefinition{rmk}{Remark}
\newproof{pf}{Proof}
\newdefinition{property}{Property}
\newdefinition{assumption}{Assumption}
\newcommand{\norm}[1]{\left\lVert #1 \right\rVert}

\begin{document}
\let\WriteBookmarks\relax
\renewcommand{\topfraction}{0.95}
\renewcommand{\bottomfraction}{0.90}
\renewcommand{\textfraction}{0.05}
\renewcommand{\floatpagefraction}{0.75}
\setcounter{topnumber}{4}
\setcounter{bottomnumber}{3}
\setcounter{totalnumber}{6}

\shorttitle{}    

\shortauthors{A. Haghighat et al.}  

\title [mode = title]{Adaptive Finite-Time Position-Force Control of Teleoperation Systems With Time-Varying Delays Using a Liquid State Machine Uncertainty Estimator}  



%

\author[1]{Shayan Akbari Haghighat}
\ead{S_akabari@elec.iust.ac.ir}

\author[1]{Mohammadali Ghaemifar}
\ead{m_ghaemifar@elec.iust.ac.ir}

\author[1]{Armin Attarzadeh}
\ead{armin_attarzadeh@elec.iust.ac.ir}

\author[1]{Mohammadreza Piri Sangdeh}
\ead{piri_m@elec.iust.ac.ir}



\affiliation[1]{organization={Department of Electrical Engineering, Iran University of Science and Technology (IUST) },
                city={Tehran},
                country={Iran}}


\begin{abstract}
Teleoperation systems are increasingly used in medical, rehabilitation, and remote manipulation applications, where accurate position/force tracking and stable interaction are essential. In such applications, the remote environment may exhibit viscoelasticity, frictional memory, contact transitions, and other dynamic interaction effects, causing the system response to depend not only on the current state but also on its previous evolution. This history dependence, together with communication delays and uncertain nonlinear dynamics, makes accurate uncertainty compensation particularly challenging. Conventional feedforward neural approximators do not inherently retain temporal information, while fully recurrent architectures may introduce additional computational and online training complexity. To address this limitation, this article introduces the first application of a liquid state machine (LSM) to bilateral teleoperation control. A finite-time adaptive controller is developed using a hybrid position/force auxiliary error system with velocity and force filters, while the LSM is employed to estimate uncertain dynamics by exploiting its intrinsic temporal processing and fading-memory capabilities with a simple adaptation mechanism. Closed-loop stability and finite-time convergence are established through a Lyapunov–Krasovskii framework. Simulations in spring–damper and generalized Maxwell viscoelastic environments demonstrate improved position and force tracking and lower mean execution time compared with an RBFNN-based controller.
\end{abstract}



\begin{keywords}
Bilateral teleoperation
\sep Adaptive control
\sep Finite-time control
\sep Time-varying delay
\sep Position-force tracking
\sep Liquid state machine
\sep Uncertainty estimation
\end{keywords}

\maketitle


\section*{}
\label{sec:nomenclature}

\begingroup
\renewcommand{\arraystretch}{1.15}

\begin{table}[pos=!htbp]
\centering
\caption{Nomenclature of abbreviations and variables used throughout this paper.}
\label{tab:nomenclature}

\begin{tabular}{@{}>{\raggedright\arraybackslash}p{0.20\linewidth}%
                  >{\raggedright\arraybackslash}p{0.73\linewidth}@{}}

\toprule
\textbf{Symbol} & \textbf{Description} \\
\midrule

$q_j,\ \dot q_j,\ \ddot q_j$ & Joint position, velocity and acceleration \\
$M_j,\ C_j,\ G_j,\ B_j$ & Inertia matrix, Coriolis matrix, gravity and friction vectors \\
$M_j^{0},\ C_j^{0},\ G_j^{0}$ & Nominal parts of $M_j$, $C_j$, $G_j$ \\
$\Delta M_j,\ \Delta C_j,\ \Delta G_j$ & Unknown parts of $M_j$, $C_j$, $G_j$ \\
$\lambda_{j,\min},\ \lambda_{j,\max}$ & Smallest and largest eigenvalue bounds of $M_j$ \\
$\tau_j$ & Control input torque \\
$\tau_h,\ \tau_e$ & Operator and environment torques \\
$F_h^{*},\ F_e^{*}$ & Exogenous operator and environment torques \\
$D_{he,j},\ S_{he,j}$ & Damping and stiffness matrices of the operator/environment model \\
$T_j(t)$ & Time-varying communication delay \\
$\bar T_j,\ \delta_j$ & Bounds of $T_j(t)$ and $|\dot T_j(t)|$ \\
$e_j$ & Position tracking error \\
$e_{\tau j}$ & Force tracking error \\
$\phi_j$ & Force error filter output \\
$\nu_j$ & Velocity feedback filter output \\
$L_j^{\tau},\ L_j^{v}$ & Force and velocity filter gains \\
$\xi_j$ & Hybrid auxiliary error system \\
$\chi_{j1},\ \chi_{j2}$ & Weighting matrices in $\xi_j$ \\
$e_{jv}$ & Delay-consistent velocity error \\
$e_j^{\nu}$ & Filtered auxiliary velocity error \\
$\psi_j$ & Finite-time term \\
$\lambda_{j1},\ \lambda_{j2}$ & Weighting matrices in $\psi_j$ \\
$\sigma_{j1},\ \sigma_{j2}$ & Fractional powers in $\psi_j$ \\
$\zeta_j$ & Integral auxiliary variable \\
$\operatorname{sig}(x)^{r}$ & Componentwise signed power \\
$u_j$ & Estimator input vector \\
$\pi_{j1},\ \pi_{j2},\ \pi_{j3}$ & Grouping terms in $u_j$ and $\Pi_j$ \\
$\Pi_j$ & Lumped model uncertainty \\
$s_j[k]$ & Reservoir spike indicator \\
$x_{L,j}$ & Filtered reservoir state \\
$X_j$ & Augmented feature vector \\
$\tau^{\mathrm{flt}}$ & Postsynaptic filter time constant \\
$\rho_j$ & Adaptation leakage coefficient \\
$\kappa$ & Retention threshold \\
$W_j^{*},\ \hat W_j,\ \tilde W_j$ & Ideal readout weights, estimate and error \\
$\varepsilon_j^{r}$ & Reservoir approximation error \\
$\omega_j^{r}$ & Bound on $\lVert \varepsilon_j^{r}-F_{he,j}^{*}\rVert^{2}$ \\
$K_{j1},\ K_{j2}$ & Linear and finite-time control gains \\
$\sigma$ & Fractional power of the finite-time control term \\
$a_{j1},\ a_{j2}$ & Young's inequality constants \\
$\gamma_j^{W},\ \gamma_j^{\delta},\ \gamma_j^{\omega}$ & Adaptation gains \\
$\hat\delta_{j'},\ \tilde\delta_{j'}$ & Estimate of $\delta_{j'}$ and its error \\
$\hat\omega_j^{r},\ \tilde\omega_j^{r}$ & Estimate of $\omega_j^{r}$ and its error \\
$\mathcal{D}$ & Residual set of $\zeta_j$ \\
$T_{\mathrm{fin}}$ & Reaching time bound \\

\bottomrule

\end{tabular}
\end{table}

\endgroup


\section{Introduction}
\label{sec:introduction}
Teleoperation continues to expand in applications requiring accurate bilateral interaction under uncertain and networked conditions. Recent examples include bilateral surgical teleoperation with spatial constraints and communication delays \cite{NEWAPP1}, haptic tele-echocardiography under network latency \cite{NEWAPP2}, cloud-based upper-limb telerehabilitation with bilateral control \cite{NEWAPP3}, and therapist-in-the-loop haptic teleoperation for remote arm-exoskeleton guidance \cite{NEWAPP4}. In such systems, communication delay arises from network transmission and can degrade performance or destabilize the closed loop. Anderson and Spong \cite{NEW1} introduced a scattering controller for delayed teleoperators, while Niemeyer and Slotine \cite{NEW2} developed stable adaptive teleoperation using wave variables. Munir and Book \cite{NEW3} subsequently combined wave variables with prediction to improve Internet teleoperation, and Chen et al. \cite{NEW4} proposed a delay-compensated four-channel wave-variable architecture for multilateral teleoperation. Since these approaches may reduce transparency and tracking performance, Lyapunov methods were also developed.
 Slawinski and Mut \cite{NEW5} studied PD-like control for delayed bilateral manipulators; Wais et al. \cite{NEW6} proposed a robust $H_\infty$ integral sliding-mode controller for nonlinear teleoperation with variable delay; and Yang et al. \cite{NEW7} considered neural output-feedback synchronization. These studies \cite{NEW5,NEW6} established important delay-robust control frameworks, but they generally retain explicit robot models, prescribed uncertainty bounds, or passivity/dissipation conditions. Consequently, uncertain manipulator dynamics and poorly known operator/environment interactions motivated adaptive teleoperation control.

Classical adaptive control has been widely used when the uncertain dynamics can be represented through known regressors. Liu and Khong \cite{NEW8} developed bilateral adaptive control for uncertain robot kinematics and dynamics. Chen et al. \cite{NEW9} considered nonlinear uncertain teleoperation while imposing guaranteed transient performance. Zhai and Xia \cite{NEW10} addressed a semi-autonomous teleoperation system with input uncertainties through structured adaptation. More recently, Lu et al. \cite{ID1} incorporated uncertain robot, operator, and environment parameters into a position-error-based adaptive force-feedback architecture without acceleration measurement. Khanzadeh et al. \cite{ID105} extended regressor-based adaptation to a single-master/multiple-slave teleoperation system with time-varying formation and communication delays. These controllers \cite{ID1,ID105} obtain their adaptive structure from a prescribed regression model; accordingly, friction, contact-dependent effects, actuator nonlinearities, and other unmodeled terms outside the selected parameterization cannot be reconstructed directly by the adaptive law. This limitation has led to extensive use of fuzzy systems and neural networks as nonlinear uncertainty approximators.

Adaptive fuzzy and neural control has therefore become a major branch of delayed teleoperation research. Li et al. \cite{NEW11} used multiple adaptive fuzzy systems for nonlinear teleoperators with stochastic time-varying communication delays, whereas Sun et al. \cite{NEW12} employed type-2 fuzzy modeling for dynamic uncertainty and time-varying delay. Yang et al. \cite{NEW13} combined wave variables with neural-network compensation for uncertain delayed teleoperation. Wang et al. \cite{NEW14} developed adaptive neural synchronization for bilateral systems with backlash-like hysteresis. Huang et al. \cite{ID7} subsequently combined RBFNN approximation with adaptive sliding-mode tracking and delayed force-feedback reference generation. Chen and Zhang \cite{ID3} used RBFNN residual learning together with position-only communication and velocity reconstruction, obtaining an acceleration-free design under asymmetric time-varying delays. Non-RBF alternatives have also been investigated: Kebria et al. \cite{ID87} developed interval type-2 fuzzy-neural adaptation for Internet teleoperation, while an MLP-based estimator \cite{ID88} was used for delayed reference/control reconstruction under random Internet delays. Moran-Armenta et al. \cite{ID4} more recently embedded a functional-link neural network in a P+d framework under asymmetric time-varying delays. Nevertheless, the guarantees in these representative adaptive schemes \cite{ID3,ID4} are mainly asymptotic or uniformly ultimately bounded rather than finite/fixed-time. Hence, finite-time designs have been introduced to improve transient convergence.

Finite-time teleoperation control has developed through several distinct routes. Yang et al. \cite{NEW15} proposed adaptive fuzzy finite-time coordination for networked nonlinear bilateral teleoperation. Zhai and Xia \cite{NEW16} incorporated input saturation and varying time delays into a finite-time controller. Terminal-sliding formulations were developed for delayed bilateral synchronization by Yang et al. \cite{NEW17}, and position-error constraints \cite{NEW18} were subsequently incorporated into a finite-time design. Wang et al. \cite{NEW19} proposed an adaptive finite-time bilateral controller, while Zhang et al. \cite{NEW20} developed adaptive finite-time synchronization for varying delays. Jittering time delays were considered by Wang et al. \cite{NEW21} using an adaptive finite-time architecture. More recent learning-based studies further diversified this line. Wang et al. \cite{ID11} combined RBFNN uncertainty approximation with nonsingular terminal sliding control and obtained finite-time position synchronization under unequal constant delays. Zhang et al. \cite{ID13} established practical fixed-time RBFNN synchronization under asymmetric time-varying delays. Li et al. \cite{ID69} introduced a zeroing-neural-dynamics/interval-type-2 fuzzy-neural finite-time design in 2026, again focusing on bilateral position synchronization. Force-capable finite/fixed-time schemes are fewer. Yang et al. \cite{ID72} studied fixed-time position--force synchronization for a rigid-master/flexible-joint-slave system; Li et al. \cite{ID75} combined finite-time synchronization with force reflection and output constraints; and their fixed-time extension \cite{ID76} imposed time-varying output constraints while reconstructing environment interaction. Thus, although finite/fixed-time synchronization is now well represented, simultaneous fast position--force behavior \cite{ID72,ID75} remains a narrower research line than position-only coordination.

Force/torque feedback has also been addressed independently of finite-time synchronization. Hashemzadeh and Tavakoli \cite{NEW22} developed position and force tracking for nonlinear teleoperators under varying delays, and Ganjefar et al. \cite{NEW23} extended position/force tracking to actuator nonlinearities with time-varying delay. Observer-based force reconstruction was studied by Amini et al. \cite{NEW24} for delayed bilateral systems, while Chan et al. \cite{NEW25} introduced an improved extended active observer for nonlinear position/force tracking. Sun et al. \cite{NEW26} used a type-2 fuzzy-neural moving-horizon force observer to improve force compliance and human perception, and Yuan et al. \cite{NEW27} considered force-reflecting control under time-varying delays. Recent learning-assisted force estimators include an RBFNN controller with a sensorless environment-force observer \cite{ID10} and an RBF environment model \cite{ID81} combined with disturbance-observer/sliding-mode control. However, these observer-oriented schemes \cite{ID10,ID81} depend on assumed robot or interaction models and do not by themselves remove all implementation-sensitive derivatives. Acceleration-level quantities remain explicit in representative terminal/RBF controllers such as the finite-time design of Wang et al. \cite{ID11}, and the RBF sliding-mode architecture of Huang et al. \cite{ID7} requires filtered reference acceleration. In contrast, Zhang et al. \cite{ID5} constructed force- and velocity-filtered auxiliary errors to obtain acceleration-free practical finite-time position--force tracking; Chen and Zhang \cite{ID3} used local velocity reconstruction for acceleration-free position synchronization; and non-neural finite-time controllers \cite{ID52,ID56} have also avoided acceleration sensing through terminal-sliding/regressor structures. Therefore, acceleration avoidance is achievable, but the strongest acceleration-free neural position--force result \cite{ID5} still relies on a conventional RBFNN uncertainty estimator.

RBFNNs remain particularly common in adaptive finite/fixed-time teleoperation, and the detailed designs reveal several narrower structural limitations. In the acceleration-free synchronization controller of Chen and Zhang \cite{ID3}, the Gaussian basis structure is selected in advance and online adaptation is concentrated on the output weights. Zhang et al. \cite{ID5} explicitly specify the number of hidden nodes together with Gaussian centers and widths in their finite-time position--force implementation. Huang et al. \cite{ID7} likewise use fixed RBF basis functions inside the adaptive sliding-mode controller while proving boundedness of the weight errors rather than identification of the ideal dynamics. Another P+d design \cite{ID64} retains the same pattern: fixed RBF basis parameters are used while the network weights and residual bounds are updated online. This group of works therefore places the burden of covering the operating region on a designer-selected Gaussian dictionary rather than adapting the representation itself. A separate limitation concerns the relationship between fast tracking and neural adaptation. Wang et al. \cite{ID11} establish finite-time tracking while the RBF weight errors remain bounded; Zhang et al. \cite{ID13} obtain practical fixed-time synchronization without ideal-weight convergence. Thus, finite/fixed-time error dynamics do not imply finite/fixed-time learning of the uncertainty model in these architectures. The burden increases further in hybrid force designs: the RBFNN in Zhang et al. \cite{ID5} approximates a composite uncertainty built from multiple auxiliary and interaction variables, whereas the constrained designs of Li et al. \cite{ID75,ID76} employ separate neural estimators for robot uncertainty and environment-force reconstruction. Non-RBF networks demonstrate that alternative representations are possible--including the functional-link network in \cite{ID4}, fuzzy-neural backstepping in \cite{ID74}, and MLP estimation in \cite{ID88}--but these alternatives have not simultaneously provided acceleration-free finite-time position--force tracking under asymmetric time-varying delays.

The dynamics of bilateral teleoperation are influenced by time-varying communication delays, nonlinear dynamics, and coupled interactions among the manipulators, human operator, and environment. Since the resulting uncertainties may depend on preceding state, force, and contact trajectories, they are more appropriately represented as history-dependent dynamic mappings rather than purely instantaneous nonlinear functions. These characteristics motivate the introduction of an LSM-based uncertainty approximator into the considered teleoperation framework. In this article, an LSM-based dynamic uncertainty approximator is introduced to estimate the aggregated uncertain dynamics of bilateral teleoperation. Unlike conventional RBFNN-based and feedforward approximators \cite{ID3,ID5,ID88,Ghaemifar2026}, whose temporal information must be supplied through manually constructed delayed or filtered regressors, the proposed liquid layer intrinsically preserves relevant motion and interaction history. This is important for representing frictional memory, contact transitions, variations in operator--environment behavior, and delayed bilateral coupling. The novelty lies in incorporating this fading-memory representation into a Lyapunov-based adaptive framework, enabling the controller to distinguish operating conditions that have similar instantaneous states but different preceding trajectories. Moreover, keeping the input and recurrent weights fixed and updating only the linear readout avoids online recurrent-network training, simplifies the adaptation mechanism, and improves the representation of history-dependent uncertainties.

Compared with RBFNN, fuzzy systems, and deep recurrent architectures such as LSTMs, LSM offers a practical balance between temporal modeling capability, computational efficiency, and suitability for online parameter adjustment. RBF networks and conventional fuzzy systems are mainly designed to approximate static nonlinear mappings. As a result, capturing dynamic or history-dependent uncertainties often requires delayed inputs, filtered signals, or an enlarged regressor vector, which rapidly expands the input dimensionality. Deep recurrent networks can learn temporal relationships directly; however, adapting their large number of nonlinear and recurrent parameters online can impose a considerable computational burden and make controller tuning and stability analysis difficult. This issue becomes even more significant when the network parameters must be updated online and can limit their suitability for real-time control applications. In contrast, an LSM uses a fixed recurrent reservoir to transform the input history into a rich fading-memory representation, while only the linear readout weights are updated online \cite{NEWLSM1,NEWLSM2,NEWLSM3}. This structure preserves the temporal information needed to approximate dynamic nonlinearities without requiring online training of the recurrent network. It also reduces the number of adaptive parameters and results in a linear formulation that is more convenient for real-time implementation and theoretical analysis. Efficient implementation of LSM temporal dynamics on dedicated neuromorphic hardware has also been demonstrated in \cite{NEWLSM4}. Therefore, LSMs provide an efficient and analytically tractable alternative for the adaptive approximation of nonlinear systems with time-varying and history-dependent dynamics. These properties make LSMs particularly suitable for bilateral teleoperation systems, where asymmetric communication delays, operator dynamics, and unstructured environment interactions induce complex, history-dependent dynamic uncertainties.

Taken together, the literature leaves three closely connected gaps. First, finite/fixed-time teleoperation has been developed much more extensively for position synchronization than for simultaneous position and force tracking, as summarized in Table~\ref{tab:literature_comparison}. Second, acceleration-free implementations exist, but only a limited subset combines acceleration avoidance with finite-time position--force objectives and asymmetric time-varying delays. Third, the most directly comparable adaptive finite/fixed-time controllers still use fixed-basis RBF representations whose tracking guarantees are stronger than their neural-parameter convergence guarantees. More importantly, these feedforward approximation structures do not intrinsically retain the preceding motion and interaction history that may distinguish different uncertainty realizations at similar instantaneous operating points. Accordingly, the main contributions of this article are summarized as follows.

\begin{itemize}
    \item A finite-time hybrid position/force tracking controller is developed for bilateral teleoperation with asymmetric time-varying delays and uncertain dynamics. Auxiliary error systems together with velocity and force filters are introduced to avoid the direct use of acceleration measurements and explicit time-delay derivatives while improving position synchronization and force tracking performance.

    \item A Lyapunov--Krasovskii framework is established for the complete bilateral closed-loop system. By accounting for the delayed master--slave dynamics and adaptive compensation terms, boundedness of the closed-loop signals and finite-time convergence of the position/force tracking errors are derived under the stated assumptions.

    \item A novel LSM-based uncertainty approximator is incorporated into the teleoperation controller to represent history-dependent uncertain dynamics. The recurrent spiking liquid naturally preserves fading-memory information from previous motion, interaction, and delayed coupling conditions. With fixed input and recurrent weights and adaptation applied only to the linear readout, the proposed structure avoids online recurrent-network training while retaining a simple form suitable for Lyapunov-based adaptive control.
\end{itemize}

\begin{table*}[pos=!htbp]
\centering
\caption{Comparison of the Proposed Method With Existing Teleoperation Controllers}
\label{tab:literature_comparison}
\renewcommand{\arraystretch}{1.15}
\resizebox{\textwidth}{!}{%
\begin{tabular}{lllcccc}
\hline
Reference
& Method (Year)
& Estimator Type
& Finite-Time?
& Force Tracking?
& Acceleration-Free?
& Time-Varying Delay? \\
\hline

\cite{ID1}
& Position-Error Adaptive Control (2022)
& Linear regressor adaptation
& $\times$
& $\times$
& $\checkmark$
& $\times$ \\

\cite{ID3}
& Velocity-Reconstructed Adaptive Synchronization (2022)
& RBFNN
& $\times$
& $\times$
& $\checkmark$
& $\checkmark$ \\

\cite{ID4}
& P+d Neural Compensation (2025)
& Functional-link NN
& $\times$
& $\times$
& $\checkmark$
& $\checkmark$ \\

\cite{ID7}
& Adaptive Sliding-Mode Control (2019)
& RBFNN
& $\times$
& $\times$
& $\times$
& $\checkmark$ \\

\cite{ID11}
& Nonsingular Terminal Sliding-Mode Control (2022)
& RBFNN
& $\checkmark$
& $\times$
& $\times$
& $\times$ \\

\cite{ID13}
& Adaptive Neural Fixed-Time Control (2022)
& RBFNN
& $\checkmark^{\mathrm{F}}$
& $\times$
& --
& $\checkmark$ \\

\cite{ID52}
& Robust Terminal-Sliding Coordination (2021)
& Structured regressor / robust compensation
& $\checkmark$
& $\times$
& $\checkmark$
& $\checkmark$ \\

\cite{ID69}
& ZND-Based Fuzzy-Neural Control (2026)
& Interval type-2 FNN
& $\checkmark$
& $\times$
& --
& $\checkmark$ \\

\cite{ID72}
& Composite Neural Fixed-Time Control (2024)
& Composite RBFNN
& $\checkmark^{\mathrm{F}}$
& $\checkmark$
& --
& $\checkmark$ \\

\cite{ID75}
& Funnel-Constrained Neural Control (2022)
& RBFNN + environment estimator
& $\checkmark$
& $\checkmark$
& --
& $\checkmark$ \\

\cite{ID76}
& Fixed-Time Output-Constrained Neural Control (2022)
& RBFNN + environment estimator
& $\checkmark^{\mathrm{F}}$
& $\checkmark$
& --
& $\checkmark$ \\

\cite{ID5}
& Hybrid Position--Force Finite-Time Control (2022)
& RBFNN
& $\checkmark$
& $\checkmark$
& $\checkmark$
& $\checkmark$ \\

Proposed
& Proposed Hybrid Finite-Time Control
& LSM-based dynamic approximator
& $\checkmark$
& $\checkmark$
& $\checkmark$
& $\checkmark$ \\

\hline
\end{tabular}}
\vspace{1mm}

\footnotesize{
$\checkmark$: supported;
$\times$: not supported;
$\checkmark^{\mathrm{F}}$: fixed-time convergence;
--: not explicitly reported.
}
\end{table*}

The remainder of this article is organized as follows. Section~\ref{sec:preliminaries} presents the bilateral teleoperation dynamics and preliminary results. Section~\ref{sec:control} introduces the proposed LSM-based uncertainty estimation framework and control design. Section~\ref{sec:lsm-stability} presents the finite-time control law and the Lyapunov--Krasovskii stability and convergence analysis. Section~\ref{sec:simulation} presents the simulation results. Finally, Section~\ref{sec:conclusion} concludes this article.

\section{Preliminaries}
\label{sec:preliminaries}

\subsection{Dynamics of the Bilateral Teleoperation}
\label{ssec:dyn}

Throughout, $j\in\{m,s\}$ indexes the master and the slave side, and
$j'$ denotes the opposite side, i.e. $j'=s$ when $j=m$ and $j'=m$ when
$j=s$. Both manipulators are serial chains of $n$ revolute joints, whose
rigid-body dynamics are written in the common indexed form
\begin{equation}\label{eq:dyn}
M_j(q_j)\,\ddot q_j
 + C_j(\dot q_j,q_j)\,\dot q_j
 + G_j(q_j) + B_j
 \;=\; \tau_j - \tau_{he,j},
\end{equation}
where $q_j,\dot q_j,\ddot q_j\in\mathbb{R}^{n}$ are the joint position,
velocity and acceleration vectors; $M_j(q_j)\in\mathbb{R}^{n\times n}$ is the
inertia matrix; $C_j(\dot q_j,q_j)\in\mathbb{R}^{n\times n}$ collects the
centripetal and Coriolis terms; $G_j(q_j)\in\mathbb{R}^{n}$ is the
gravitational torque; $B_j\in\mathbb{R}^{n}$ is the joint friction torque; and
$\tau_j\in\mathbb{R}^{n}$ is the commanded actuator torque. The term
$\tau_{he,j}\in\mathbb{R}^{n}$ is the torque exerted \emph{on} the
manipulator by its external partner,
\begin{equation}\label{eq:tauhe}
\tau_{he,m}\triangleq\tau_h,
\qquad
\tau_{he,s}\triangleq\tau_e,
\end{equation}
with $\tau_h$ the human-operator torque applied at the master and
$\tau_e$ the environment torque applied at the slave. 


Two structural properties of \eqref{eq:dyn} are used in the
stability analysis of Section~\ref{sec:lsm-stability}.

\begin{property}[Uniformly bounded inertia]\label{prop:inertia}
For each $j\in\{m,s\}$ the inertia matrix $M_j(q_j)$ is symmetric and
positive definite for every $q_j$, and admits configuration-independent
bounds
\begin{equation}\label{eq:inertiabound}
0<\lambda_{j,\min}I_n \;\preceq\; M_j(q_j) \;\preceq\; \lambda_{j,\max}I_n,
\qquad \forall\,q_j\in\mathbb{R}^{n},
\end{equation}
where
$\lambda_{j,\min}\triangleq\inf_{q_j}\lambda_{\min}\!\big(M_j(q_j)\big)>0$
and
$\lambda_{j,\max}\triangleq\sup_{q_j}\lambda_{\max}\!\big(M_j(q_j)\big)<\infty$,
with $\lambda_{\min}(\cdot)$ and $\lambda_{\max}(\cdot)$ the smallest and
largest eigenvalue of a matrix.
\end{property}

\begin{property}[Skew symmetry]\label{prop:skew}
For each $j\in\{m,s\}$ the matrix $\dot M_j-2C_j$ is skew symmetric,
hence
\begin{equation}\label{eq:skew}
x^{\mathrm T}\big(\dot M_j(q_j)-2C_j(\dot q_j,q_j)\big)x=0,
\qquad \forall\,x\in\mathbb{R}^{n}.
\end{equation}
\end{property}

\subsection{Known and Unknown Parts of the Dynamics}
\label{ssec:split}

Exact values of the inertial, Coriolis and gravitational terms in
\eqref{eq:dyn} are not available in practice. Each term is therefore
separated into a nominal component, computable online from the
identified model, and an unknown residual component:
\begin{equation}\label{eq:split}
M_j = M_j^{0} + \Delta M_j,
\qquad
C_j = C_j^{0} + \Delta C_j,
\qquad
G_j = G_j^{0} + \Delta G_j,
\end{equation}
for $j\in\{m,s\}$, where $M_j^{0},C_j^{0},G_j^{0}$ are the nominal parts
and $\Delta M_j,\Delta C_j,\Delta G_j$ the unknown parts. The friction
torque $B_j$ is treated as entirely unknown. The unknown terms, together with the interaction terms introduced
next, are lumped into a single uncertainty $\Pi_j$, assembled in
\eqref{eq:Pi}.

\subsection{Operator and Environment Interaction}
\label{ssec:interaction}
 
The operator and the environment are modelled as passive spring--damper
couplings,
\begin{equation}\label{eq:interaction}
\tau_{he,j}=F^{*}_{he,j}+D_{he,j}\,\dot q_j+S_{he,j}\,q_j,
\qquad j\in\{m,s\},
\end{equation}
with $(F^{*}_{he,m},D_{he,m},S_{he,m})=(F^{*}_{h},D_h,S_h)$ and
$(F^{*}_{he,s},D_{he,s},S_{he,s})=(F^{*}_{e},D_e,S_e)$. The
damping matrices $D_h,D_e\in\mathbb{R}^{n\times n}$ and the stiffness matrices
$S_h,S_e\in\mathbb{R}^{n\times n}$ are constant, diagonal and
positive definite, which renders both couplings passive, and
$F^{*}_{h},F^{*}_{e}\in\mathbb{R}^{n}$ are the exogenous operator and
environment torques.

\begin{assumption}
\label{as:exo}
The exogenous torques $F_h^{*}$ and $F_e^{*}$ are bounded.
\end{assumption}

The stiffness and damping terms in \eqref{eq:interaction} are functions of
the measured state and are absorbed into $\Pi_j$, so no knowledge of
$D_{he,j}$ or $S_{he,j}$ is required.

\subsection{Communication Delays}
\label{ssec:delay}

Let $T_m(t)$ denote the forward communication delay from master to slave
and $T_s(t)$ the backward delay from slave to master. The channel is
asymmetric, i.e. $T_m(t)\neq T_s(t)$ in general.

\begin{assumption}
\label{as:delay}
For $j\in\{m,s\}$ the delay $T_j(t)$ is non-negative, and there exist
constants $\bar T_j,\delta_j>0$ such that
\begin{equation}\label{eq:delaybound}
0\le T_j(t)\le \bar T_j,
\qquad
\big|\dot T_j(t)\big|\le \delta_j .
\end{equation}
\end{assumption}

\subsection{Standing Regularity Condition}
\label{ssec:compact}

\begin{assumption}
\label{as:compact}
For each $j\in\{m,s\}$, there exists a compact admissible operating
domain $\Omega_j$ on which the lumped uncertainty $\Pi_j$ is a bounded
causal fading-memory functional of the closed-loop history and of the
bounded communication-delay rate.
\end{assumption}

\begin{rmk}\label{rem:history}
$\Pi_j$ is not a pointwise function of an instantaneous state. Through the
auxiliary error construction it depends on $q_{j'}(t-T_{j'})$ and
$\dot q_{j'}(t-T_{j'})$, and it contains the factor $\dot T_{j'}$
multiplying an unknown inertia term, so $\Pi_j(t)$ is determined by the
closed-loop trajectory on $[t-\bar T_{j'},t]$. The delayed signals are
received over the channel and are available pointwise; the delay rate is
not. It is nonetheless reflected in the time course of the received stream,
since a varying delay rescales the apparent time base of the arriving
signal. An estimator evaluated at the instantaneous input cannot recover
this information, and a memoryless basis therefore requires lagged
regressors whose lags are fixed at design time. The reservoir of
Section~\ref{ssec:lsm} carries the history in its own state instead.
\end{rmk}

\subsection{Problem Formulation and Control Objectives}
\label{ssec:problem}

\subsubsection{Tracking errors}

Position synchronisation is measured on each side by comparing the local
joint position against the delayed position received from the partner:
\begin{equation}\label{eq:poserr}
\begin{aligned}
e_m(t) &= q_m(t) - q_s\!\left(t-T_s\right),\\
e_s(t) &= q_s(t) - q_m\!\left(t-T_m\right).
\end{aligned}
\end{equation}
With the interaction torques
$\tau_h,\tau_e$ obtained from the force sensors, the
torque tracking errors are defined as
\begin{equation}\label{eq:frcerr}
\begin{aligned}
e_{\tau m}(t) &= \tau_h(t) - \tau_e\!\left(t-T_s\right),\\
e_{\tau s}(t) &= \tau_e(t) - \tau_h\!\left(t-T_m\right).
\end{aligned}
\end{equation}
Each side is penalised for its own
mismatch against the delayed partner signal, so \eqref{eq:poserr} and
\eqref{eq:frcerr} are defined entirely from locally available data.

\subsubsection{Control objectives}

Given the teleoperator \eqref{eq:dyn} with the uncertainty structure
\eqref{eq:split}, the interaction model \eqref{eq:interaction}, and the
asymmetric time-varying delays of Assumption~\ref{as:delay}, the control
torques $\tau_m$ and $\tau_s$ are to be designed so as to meet the
following five objectives.

\begin{enumerate}
\item[O1)] \emph{Stability with finite-time reaching.}
The auxiliary closed-loop variables and adaptive estimation errors are
uniformly ultimately bounded in the presence of asymmetric time-varying
delay, dynamic uncertainty and bounded exogenous torques, and the
auxiliary variables reach a residual neighbourhood in finite time
$T_{\mathrm{fin}}$ rather than only asymptotically.

\item[O2)] \emph{Position tracking.} The bilateral controller is designed
to reduce the delayed position-synchronisation errors $e_j$ defined in
\eqref{eq:poserr}, with the position-tracking performance assessed by the
steady-state and transient magnitude of $\norm{e_j(t)}$ for
$j\in\{m,s\}$.

\item[O3)] \emph{Force feedback tracking.} The bilateral controller is
designed to reduce the delayed interaction-torque errors $e_{\tau j}$
defined in \eqref{eq:frcerr}, with the force-feedback performance assessed
by the steady-state and transient magnitude of
$\norm{e_{\tau j}(t)}$ for $j\in\{m,s\}$.

\item[O4)] \emph{Guarantee-preserving reconstruction of delay-dependent
uncertainty.} The lumped uncertainty $\Pi_j$ is to be reconstructed
online by an estimator that (i) remains linear in its adaptive
parameters, so that the adaptive law admits a Lyapunov certificate of the
same form, and (ii) represents the history dependence of $\Pi_j$ noted in
Remark~\ref{rem:history} without hand-crafted delayed regressors whose
lags must be fixed at design time.

\item[O5)] \emph{Sparse, event-driven computation.} Objectives O1--O4 are
to be met while the estimator activates only a small fraction of its
units per control step. The cost is quantified by the synaptic operation
count per control step,
\begin{equation}\label{eq:synops}
\mathcal S_j(t_k) \;=\; \sum_{i\in\mathcal A_j(t_k)} \mathrm{fan\text{-}out}(i),
\end{equation}
where $\mathcal A_j(t_k)$ is the set of units active at step $t_k$, so
that the estimator is comparable against a dense basis at
matched estimation accuracy.
\end{enumerate}

\begin{rmk}
\label{rem:conflict}
Position tracking does not imply force tracking. Substituting the
interaction model \eqref{eq:interaction} into \eqref{eq:frcerr} and
evaluating at exact position synchronisation with negligible delay gives
\begin{equation}\label{eq:conflict}
e_{\tau m}
 \;=\; \big(F_h^{*}-F_e^{*}\big)
 + \big(D_h-D_e\big)\dot q_m
 + \big(S_h-S_e\big) q_m ,
\end{equation}
which vanishes only if the operator and the environment present matched
impedances and matched exogenous torques. The slave-side counterpart follows identically and satisfies
$e_{\tau s}=-e_{\tau m}$ under the same idealisation. Objectives~O2 and~O3 are therefore generally competing:
$\chi_{j1}$ and $\chi_{j2}$ determine the relative emphasis placed on
position synchronisation and force feedback in the hybrid error
\eqref{eq:xi}. Their achieved tracking performance is evaluated
separately in the numerical results and is not inferred solely from the
finite-time bound on $\zeta_j$.
\end{rmk}

\begin{figure}[pos=H]
    \centering
    \includegraphics[width=\linewidth]{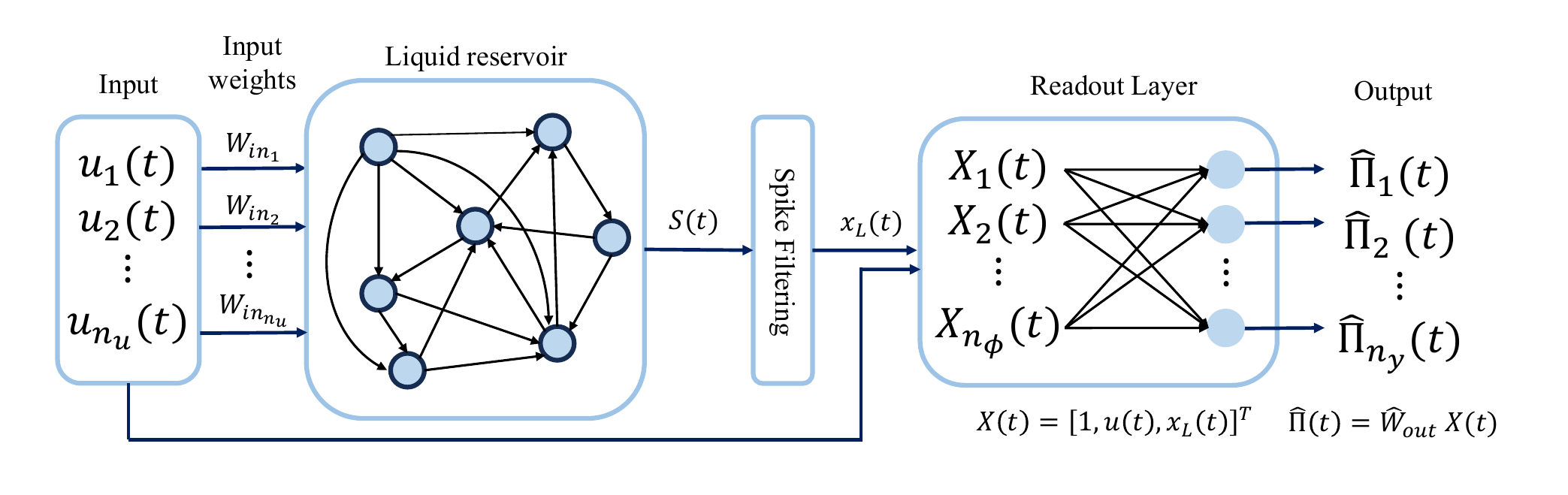}
    \caption{Structure of the LSM-based uncertainty estimator: fixed spiking
reservoir with an adapted linear readout.}
    \label{fig:lsm}
\end{figure}

\subsection{Liquid State Machine as Uncertainty Estimator}
\label{ssec:lsm}
The uncertainty $\Pi_j$ is a functional of the
closed-loop history rather than a map of its instantaneous value. It is
estimated here by a liquid state machine (LSM) \cite{NEWLSM1}: a
recurrent reservoir of spiking neurons with fixed internal connectivity
and a single adapted linear readout, instantiated independently on each
side $j\in\{m,s\}$. The estimator comprises four stages, shown in
Fig.~\ref{fig:lsm}: an input layer, which injects $u_j$ into the
reservoir through the randomly drawn weights $W_j^{\mathrm{in}}$; the
liquid itself, a sparsely and recurrently connected population of
spiking neurons whose weights $W_j^{\mathrm{rec}}$ are likewise drawn at
initialization and held fixed; a filtering layer, which converts the
binary spike trains into the continuous traces $x_{L,j}$ that a linear
map can act on; and the readout layer, the only stage adapted online,
which forms $\hat\Pi_j$ from those traces.

\subsubsection{Operator form}

Let $u_j(t)\in\mathbb{R}^{n_u}$ denote the estimator input on side $j$ and
$u_{j,t}$ its history up to time $t$. The LSM is the composition of a
\emph{liquid filter} $\mathcal L_j$, which expands the input history into
a high-dimensional transient state, and a \emph{readout} $\mathcal R_j$,
which extracts the task-relevant component of that state:
\begin{equation}\label{eq:lsmcomp}
X_j(t) = \mathcal L_j\big(u_{j,t}\big),
\qquad
\hat\Pi_j(t) = \mathcal R_j\big(X_j(t)\big).
\end{equation}
The liquid is a recurrent network of spiking neurons whose input and
recurrent weights $W_j^{\mathrm{in}},W_j^{\mathrm{rec}}$ are drawn at
initialization and held fixed, so all nonlinearity and all memory reside
in a part of the estimator that is never adapted; the readout is linear
and is the only part adapted online. Accordingly the ideal
representation of the uncertainty is written
\begin{equation}\label{eq:idealrep}
\Pi_j(t) = W_j^{*}X_j(t) + \varepsilon_j^{r}(t),
\end{equation}
where $W_j^{*}$ is a fixed ideal readout matrix and
$\varepsilon_j^{r}(t)$ is the residual, i.e. the part of $\Pi_j$ not
captured by the reservoir together with unmodelled effects.
\begin{rmk}
The readout $\mathcal R_j$ is linear in its parameters, so
\eqref{eq:idealrep} enters the control law exactly as a
linear-in-parameters basis expansion does. This is why the reservoir is
held fixed and only the readout adapted: it is what allows the adaptive
law and the Lyapunov certificate of the underlying scheme to be carried
over unchanged.
\end{rmk}

\subsubsection{Spiking liquid reservoir}

The liquid contains $N$ leaky integrate-and-fire (LIF) neurons
\cite{manna2023plsm,eshraghian2023}. The membrane potential $v_{j,i}$ of
neuron $i$ on side $j$ evolves as
\begin{equation}\label{eq:lif}
\tau^{\mathrm{mem}}\,\dot v_{j,i}(t)
 = -\big(v_{j,i}(t)-V_{\mathrm{rest}}\big) + R\,I_{j,i}(t),
\qquad i=1,\dots,N,
\end{equation}
where $\tau^{\mathrm{mem}}$ is the membrane time constant,
$V_{\mathrm{rest}}$ the resting potential, $R$ the membrane resistance,
and $I_{j,i}(t)$ the total input current of the neuron, comprising the
external drive and the recurrent synaptic contribution. When
$v_{j,i}$ reaches the firing threshold $V_{\mathrm{th}}$ the neuron emits
a spike and its potential is reset to $V_{\mathrm{reset}}$.
The reservoir is integrated at the controller step $\Delta t$. 

Firing events are represented by the bounded spike indicator
\begin{equation}\label{eq:spike}
s_{j,i}[k] \;=\; \sum_{f}\delta\bigl[k-k_{j,i,f}\bigr]
\;=\; \Theta\bigl(v_{j,i}[k]-V_{\mathrm{th}}\bigr),
\qquad
s_j[k]\in\{0,1\}^{N},
\end{equation}
where $k$ indexes the controller step $\Delta t$, $k_{j,i,f}$ are the
firing steps of neuron $i$, $\delta[\cdot]$ is the Kronecker delta and
$\Theta$ the unit step. The potential is reset to $V_{\mathrm{reset}}$ at
step $k+1$ whenever $s_{j,i}[k]=1$.
%
\begin{rmk}
\label{rem:indicator}
Spiking activity is often written as a train of Dirac impulses,
$\sum_f\delta(t-t_{i,f})$. Such a drive makes the filtered state
\eqref{eq:filter} jump by $1/\tau^{\mathrm{flt}}$ at each spike, so its
bound depends on the minimum inter-spike interval and the unit interval
is no longer invariant. The bounded indicator \eqref{eq:spike} carries
the same information at the controller sampling rate, matches the
implemented model, and keeps the filtered state in $[0,1]$ by
construction.
\end{rmk}

\subsubsection{Input Injection and Synaptic Dynamics}

The estimator input is continuous-valued and is delivered to the liquid
through \emph{current injection}: the numerical input directly modulates the
current supplied to the neurons, so no separate spike-encoding stage is
required. To guarantee bounded excitation of the reservoir, define the
saturated estimator input
\begin{equation}\label{eq:usat}
u_j^{s}(t)
=
\operatorname{sat}_{\bar u_j}\!\big(u_j(t)\big),
\end{equation}
where $\operatorname{sat}_{\bar u_j}(\cdot)$ denotes a componentwise
saturation with prescribed positive bound vector
$\bar u_j\in\mathbb{R}^{n_u}_{>0}$. Hence,
\begin{equation}\label{eq:usat-bound}
\norm{u_j^{s}(t)}
\le
\norm{\bar u_j},
\qquad \forall t\ge0 .
\end{equation}
With a first-order current-based synapse model, the synaptic current vector
obeys
\begin{equation}\label{eq:current}
\tau^{\mathrm{syn}}\,\dot I_j(t)
=
-I_j(t)
+
W_j^{\mathrm{in}}u_j^{s}(t)
+
W_j^{\mathrm{rec}}s_j\!\left(t-T_{\mathrm{syn}}\right),
\end{equation}
where $\tau^{\mathrm{syn}}$ is the synaptic time constant,
$W_j^{\mathrm{in}}\in\mathbb{R}^{N\times n_u}$ is the input weight matrix,
$W_j^{\mathrm{rec}}\in\mathbb{R}^{N\times N}$ is the recurrent weight matrix,
and $T_{\mathrm{syn}}=\Delta t$ is the recurrent transmission delay,
corresponding to one integration step. The saturation in
\eqref{eq:usat} is applied only to the copy of $u_j$ entering the LSM and
does not alter the plant dynamics or the definitions of the control
regressors.

\subsubsection{Spike Filtering and Feature Vector}

Binary spike signals are not directly suitable for a linear readout, so each
spike channel is passed through a first-order low-pass filter:
\begin{equation}\label{eq:filter}
\tau^{\mathrm{flt}}\,\dot x_{j,i}(t)
=
-x_{j,i}(t)
+
s_{j,i}(t),
\qquad
x_{L,j}(t)
=
\big[
x_{j,1}(t)\ \cdots\ x_{j,N}(t)
\big]^{\mathrm T},
\end{equation}
where $\tau^{\mathrm{flt}}$ is the filter time constant and
$x_{L,j}\in\mathbb{R}^N$ collects the filtered spike states of all $N$ reservoir
neurons.

The linear readout acts on the augmented feature vector
\begin{equation}\label{eq:feature}
X_j(t)
=
\Big[
\,1\quad
\big(u_j^{s}(t)\big)^{\mathrm T}\quad
x_{L,j}^{\mathrm T}(t)
\,\Big]^{\mathrm T}
\in\mathbb{R}^{n_X},
\qquad
n_X=1+n_u+N.
\end{equation}
The leading entry provides a bias term, while $u_j^{s}$ supplies a bounded
direct input-to-output feedthrough path alongside the recurrent reservoir
features.

Since $s_{j,i}(t)\in[0,1]$, Lemma~\ref{lem:leaky} implies
$x_{j,i}(t)\in[0,1]$ for all $i=1,\dots,N$, and therefore
\begin{equation}\label{eq:xLbound}
\norm{x_{L,j}(t)}^2
=
\sum_{i=1}^{N}x_{j,i}^2(t)
\le
N.
\end{equation}
Combining \eqref{eq:usat-bound} and \eqref{eq:xLbound}, the augmented
feature vector satisfies the uniform bound
\begin{equation}\label{eq:Xbound}
\norm{X_j(t)}
\le
\sqrt{
1+\norm{\bar u_j}^{2}+N
}
=: \bar X_j,
\qquad
\forall t\ge0 .
\end{equation}
Thus the feature vector used by the adaptive readout is bounded by
construction, independently of any compactness assumption on the
closed-loop trajectory.
\begin{assumption}
\label{as:approx}
For each $j\in\{m,s\}$ and for the compact admissible operating domain
$\Omega_j$ of Assumption~\ref{as:compact}, there exist a reservoir size
$N$ and a constant readout
$W_j^{*}\in\mathbb{R}^{n\times n_X}$ such that the residual in
\eqref{eq:idealrep} satisfies
\[
\norm{\varepsilon_j^{r}(t)}\le \bar\varepsilon_j
\]
for all admissible input histories associated with $\Omega_j$.
\end{assumption}

Assumption~\ref{as:approx} is the standard approximation hypothesis for
reservoir estimators and is imposed on the compact admissible operating
domain $\Omega_j$ introduced in Assumption~\ref{as:compact}. Only boundedness of $\varepsilon_j^{r}$ is used in
Section~\ref{sec:lsm-stability}; the tolerance enters through the
size of the residual set.

\subsubsection{Adapted linear readout}

The uncertainty estimate is produced by the linear map
\begin{equation}\label{eq:readout}
\hat\Pi_j(t) = \hat W_j(t)\,X_j(t),
\qquad
\hat W_j\in\mathbb{R}^{n\times n_X},
\end{equation}
where $\hat W_j$ is the adapted readout matrix, i.e. the estimate of
$W_j^{*}$ in \eqref{eq:idealrep}. The matrices
$W_j^{\mathrm{in}}$ and $W_j^{\mathrm{rec}}$ are drawn randomly at
initialization, with $W_j^{\mathrm{rec}}$ sparse and scaled to a
prescribed spectral radius, and are held fixed throughout operation.
The update law for $\hat W_j$ is not postulated but follows from the
Lyapunov analysis of Section~\ref{sec:lsm-stability}.

\begin{rmk}
\label{rem:vsrbf}
A feedforward radial-basis network approximates a static map of its
present input, so representing a history-dependent uncertainty requires
that the history be supplied explicitly, as additional lagged copies of
the input whose lags are fixed at design time. The liquid instead embeds the input history in its recurrent
state, whose fading memory furnishes a nonlinear temporal feature
representation. The dynamic uncertainty can therefore be recovered by a
purely linear adaptive readout.
\end{rmk}

\begin{rmk}
\label{rem:sparsity}
Spiking neurons communicate only when their membrane potential crosses
$V_{\mathrm{th}}$, so in general only a subset of the liquid fires at any
step and $s_j(t)$ is sparse. On hardware that exploits this sparsity,
the number of synaptic operations and the associated data movement scale
with the count of active units rather than with $N$, which is the basis
of the per-step cost measure $\mathcal S_j$ introduced in
\eqref{eq:synops} and evaluated in Section~\ref{sec:sim}. Sparsity here
refers to per-step activity, not to a reduction in network size.
\end{rmk}

\subsection{Preliminary Lemmas}
\label{ssec:lemmas}

Three lemmas are used in the stability and finite-time analysis of
Section~\ref{sec:lsm-stability}. 

\begin{lemma}[Power-sum inequality]\label{lem:powersum}
For any $a_1,a_2,\dots,a_k\in\mathbb{R}$ and any $p\in(0,1)$,
\begin{equation}\label{eq:powersum}
\big(|a_1|+|a_2|+\cdots+|a_k|\big)^{p}
 \;\le\; |a_1|^{p}+|a_2|^{p}+\cdots+|a_k|^{p}.
\end{equation}
\end{lemma}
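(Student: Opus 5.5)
The plan is to reduce the statement to the two-term case and then argue by induction on $k$. Set $x_i=|a_i|\ge 0$. If every $x_i$ vanishes, both sides of \eqref{eq:powersum} are zero and there is nothing to prove, so we may assume $S=x_1+\cdots+x_k>0$.

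The cleanest route avoids induction altogether by normalizing. Define $t_i=x_i/S\in[0,1]$, so that $\sum_i t_i=1$. Because $p\in(0,1)$ and $t_i\in[0,1]$, we have $t_i^{p}\ge t_i$. This holds trivially at $t_i=0$. For $t_i\in(0,1]$ it follows from $t_i^{p-1}\ge 1$, which is true since the exponent $p-1$ is negative and the base is at most one. Summing over $i$ gives
\begin{equation*}
\sum_{i=1}^{k} t_i^{p}\;\ge\;\sum_{i=1}^{k} t_i \;=\;1 .
\end{equation*}
Multiplying through by $S^{p}>0$ and using $S^{p}t_i^{p}=x_i^{p}$, we obtain $\sum_i x_i^{p}\ge S^{p}=\big(\sum_i x_i\big)^{p}$. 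This is exactly \eqref{eq:powersum}.

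As a fallback, one can verify the case $k=2$ directly via subadditivity of the concave map $x\mapsto x^{p}$ with $0^{p}=0$, and then induct on $k$ by grouping $x_1+\cdots+x_{k-1}$ as a single term. The normalization argument makes this unnecessary.

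There is no genuine obstacle here. The only step needing care is the elementary inequality $t^{p}\ge t$ on $[0,1]$, together with handling the degenerate all-zero case separately, so that we never divide by $S=0$ or evaluate $0^{p-1}$.
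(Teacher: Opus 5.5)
Your proof is correct and complete. Once the all-zero case is set aside, normalizing by $S=\sum_i|a_i|>0$ reduces the claim to $\sum_i t_i^{p}\ge 1$ for $t_i\in[0,1]$ with $\sum_i t_i=1$. That follows termwise from $t^{p}\ge t$ on $[0,1]$, and rescaling by $S^{p}$ is legitimate because $S>0$ and $0^{p}=0$ for $p>0$.

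The paper gives no proof of Lemma~\ref{lem:powersum}; it states it as a standard preliminary, so there is no argument of the paper's to compare against. The lemma is used only once: with $p=(\sigma+1)/2\in(\tfrac12,1)$, to pass from $-\bar\Lambda_2\sum_i y_i^{p}$ to $-\bar\Lambda_2\bigl(\sum_i y_i\bigr)^{p}$ for the eight nonnegative constituents of $\bar{\mathcal V}$ in \eqref{eq:Vbardot}. That is exactly the direction you establish.

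Compared with the customary route you list as a fallback (subadditivity of the concave map $x\mapsto x^{p}$ with $0^{p}=0$, then induction on $k$), your normalization handles all $k$ at once and needs only the one-variable fact $t^{p-1}\ge 1$ on $(0,1]$. The induction route instead shows the inequality holds for any concave function vanishing at the origin, which is more general but not needed here.
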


\begin{lemma}[Finite-time and practical finite-time convergence]
\label{lem:finitetime}
Consider $\dot x=f(x)$ with $x\in\mathbb{R}^{n}$ and $x(0)=x_0$, and let
$\mathcal V:\mathbb{R}^{n}\to\mathbb{R}_{\ge0}$ be continuously differentiable and
positive definite. Let $\alpha,\beta>0$ and $\sigma\in(0,1)$.
\begin{enumerate}
\item[(i)] If
\begin{equation}\label{eq:ftineq}
\dot{\mathcal V}(x)\;\le\;-\alpha\,\mathcal V(x)-\beta\,\mathcal V^{\sigma}(x)
\end{equation}
holds for all $t\ge0$, then $\mathcal V$ reaches the origin in finite
time, and the settling time satisfies
\begin{equation}\label{eq:tfin}
T_{\mathrm{fin}}
 \;\le\;
 \frac{1}{\alpha(1-\sigma)}
 \ln\!\frac{\alpha\,\mathcal V^{1-\sigma}(x_0)+\beta}{\beta}.
\end{equation}
\item[(ii)] If \eqref{eq:ftineq} holds only while $x\notin\mathcal D$ for
some compact set $\mathcal D$ containing the origin, then the trajectory
reaches $\mathcal D$ within the time \eqref{eq:tfin}; behaviour after
that instant is not asserted by this lemma.
\end{enumerate}
\end{lemma}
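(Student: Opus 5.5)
The plan is a comparison argument on the scalar function $v(t)=\mathcal V(x(t))$. For part (i), I would use the change of variable $w=v^{1-\sigma}$ to turn the Bernoulli-type differential inequality into a linear one. While $v>0$, the chain rule applied to \eqref{eq:ftineq} gives
\begin{equation*}
\dot w=(1-\sigma)v^{-\sigma}\dot v\le -\alpha(1-\sigma)\,w-\beta(1-\sigma).
\end{equation*}
The comparison lemma for scalar linear inequalities then bounds $w$ by the solution of the corresponding equality:
\begin{equation*}
w(t)\le \Big(w(0)+\tfrac{\beta}{\alpha}\Big)e^{-\alpha(1-\sigma)t}-\tfrac{\beta}{\alpha}.
\end{equation*}
The right-hand side is decreasing and vanishes exactly at
\begin{equation*}
t^{*}=\tfrac{1}{\alpha(1-\sigma)}\ln\tfrac{\alpha\mathcal V^{1-\sigma}(x_0)+\beta}{\beta}.
\end{equation*}
Hence $v$ must reach zero no later than $t^{*}$, which is \eqref{eq:tfin}.

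Once $v$ has reached zero, it stays there. Indeed, \eqref{eq:ftineq} gives $\dot v\le0$, and $v\ge0$, so $v\equiv0$ afterwards. Since $\mathcal V$ is positive definite, $x$ is then at the origin.

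The main technical point is the division by $v^{\sigma}$ in the chain rule, because $w=v^{1-\sigma}$ is not differentiable at $v=0$. I would handle this by working on the maximal interval $[0,t_1)$ on which $v>0$, where the computation is legitimate. If $t_1>t^{*}$, then at $t^{*}$ the bound above would force $w\le 0$, hence $v=0$, contradicting $v>0$ on $[0,t_1)$. Therefore $t_1\le t^{*}$. An alternative that avoids the comparison lemma is to integrate directly: separate variables in $dv/(\alpha v+\beta v^{\sigma})\le -dt$ and evaluate the left-hand integral in closed form via the same substitution.

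For part (ii), the inequality \eqref{eq:ftineq} is only assumed while $x(t)\notin\mathcal D$. Let $t_{\mathcal D}$ be the first entry time into $\mathcal D$. I would argue by contradiction, supposing $t_{\mathcal D}>t^{*}$. Then the whole interval $[0,t^{*}]$ lies outside $\mathcal D$, so the argument of part (i) applies there and gives $v(t)=0$ at or before $t^{*}$. Positive definiteness then forces $x=0\in\mathcal D$, a contradiction. Hence $t_{\mathcal D}\le t^{*}$.

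The statement of (ii) makes no claim about the trajectory after $t_{\mathcal D}$, so nothing further is needed. Lemma~\ref{lem:powersum} is not used here; it only enters later, when $\dot{\mathcal V}$ is brought into the form \eqref{eq:ftineq}.
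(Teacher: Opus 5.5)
Your proposal is correct and follows the paper's proof: the same substitution $\mathcal W=\mathcal V^{1-\sigma}$, the same linear comparison bound whose zero gives \eqref{eq:tfin}, and, for (ii), the same restriction of the estimate to the interval spent outside $\mathcal D$. Your maximal-positivity-interval treatment of the non-differentiability at $v=0$ and your explicit first-entry contradiction in (ii) make rigorous what the paper leaves implicit.
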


\begin{pf}
For (i), set $\mathcal W=\mathcal V^{1-\sigma}$. Then
$\dot{\mathcal W}=(1-\sigma)\mathcal V^{-\sigma}\dot{\mathcal V}$, and
substituting \eqref{eq:ftineq} gives
\begin{equation}\label{eq:wineq}
\dot{\mathcal W}\;\le\;-(1-\sigma)\big(\alpha\mathcal W+\beta\big),
\end{equation}
a linear differential inequality. Integrating,
\begin{equation}\label{eq:wsol}
\mathcal W(t)\;\le\;
\Big(\mathcal W(0)+\tfrac{\beta}{\alpha}\Big)
 e^{-\alpha(1-\sigma)t}-\tfrac{\beta}{\alpha},
\end{equation}
and the right-hand side vanishes exactly at the time given in
\eqref{eq:tfin}, which therefore upper bounds the instant at which
$\mathcal W$, and hence $\mathcal V$, reaches zero. Part (ii) follows by
applying the same estimate on the maximal interval during which the
trajectory remains outside $\mathcal D$.
\end{pf}

The filtered spike channels in \eqref{eq:filter} are driven by a
discontinuous but bounded signal. The following elementary bound confines
them to the unit interval.

\begin{lemma}[Boundedness of a leaky integrator with bounded drive]
\label{lem:leaky}
Let $z:\mathbb{R}_{\ge0}\to\mathbb{R}$ be absolutely continuous and satisfy
\begin{equation}\label{eq:leaky}
\tau\,\dot z(t) = -z(t)+w(t)
\qquad\text{for almost every } t\ge0,
\end{equation}
with $\tau>0$ and $w$ measurable and bounded, $0\le w(t)\le\bar w$. Then
\begin{equation}\label{eq:leakybound}
0\;\le\; z(t)\;\le\;\max\{z(0),\,\bar w\},
\qquad \forall t\ge0,
\end{equation}
provided $z(0)\ge0$. In particular, if $z(0)\in[0,\bar w]$ then
$z(t)\in[0,\bar w]$ for all $t\ge0$, so $[0,\bar w]$ is positively
invariant.
\end{lemma}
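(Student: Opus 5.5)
The plan is to prove the bound through the variation-of-constants formula. Because $z$ is absolutely continuous and \eqref{eq:leaky} holds almost everywhere with a measurable, bounded right-hand side, the function $t\mapsto e^{t/\tau}z(t)$ is also absolutely continuous. Its derivative is $\tfrac{1}{\tau}e^{t/\tau}w(t)$ for almost every $t$. Integrating this identity, which is legitimate for absolutely continuous functions, gives
\begin{equation*}
z(t)=e^{-t/\tau}z(0)+\frac{1}{\tau}\int_0^t e^{-(t-s)/\tau}w(s)\,ds ,
\end{equation*}
and this representation is the only analytic step needed.

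The two bounds then follow from sign and convexity arguments. For the lower bound, $z(0)\ge0$ and $w\ge0$ make both terms nonnegative, so $z(t)\ge0$. For the upper bound, use $w(s)\le\bar w$ and evaluate $\tfrac1\tau\int_0^t e^{-(t-s)/\tau}ds=1-e^{-t/\tau}$. This yields
\begin{equation*}
z(t)\le e^{-t/\tau}z(0)+\big(1-e^{-t/\tau}\big)\bar w .
\end{equation*}
The right-hand side is a convex combination of $z(0)$ and $\bar w$, with weights $e^{-t/\tau}\in(0,1]$ and $1-e^{-t/\tau}$. It is therefore at most $\max\{z(0),\bar w\}$, which is \eqref{eq:leakybound}.

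For the invariance claim, if $z(0)\in[0,\bar w]$ then the maximum equals $\bar w$, so $z(t)\in[0,\bar w]$ for all $t\ge0$. Applying the same argument from any initial time $t_0$ instead of $0$ shows that $[0,\bar w]$ is positively invariant.

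The only point needing care is the regularity of the drive. The spike indicator is discontinuous, so the argument has to run in the Carathéodory sense: absolute continuity of $z$, together with measurability and boundedness of $w$, is what justifies both the product rule for $e^{t/\tau}z(t)$ and the fundamental theorem of calculus. With those in place, the proof uses nothing beyond Lemma~\ref{lem:leaky}'s own hypotheses.
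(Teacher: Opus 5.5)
Your proof is correct and is essentially the paper's argument. Both use variation of constants, then nonnegativity of the kernel and of $w$ for the lower bound, then the identity $\tau^{-1}\int_0^t e^{-(t-r)/\tau}\,dr = 1-e^{-t/\tau}$ to write the upper bound as a convex combination of $z(0)$ and $\bar w$; your explicit justification, via absolute continuity, of the product rule for $e^{t/\tau}z(t)$ is a detail the paper leaves implicit.
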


\begin{pf}
Variation of constants applied to \eqref{eq:leaky} gives
\begin{equation}\label{eq:leakysol}
z(t)=z(0)e^{-t/\tau}
 +\frac{1}{\tau}\int_{0}^{t}e^{-(t-r)/\tau}\,w(r)\,dr .
\end{equation}
The kernel is non-negative and satisfies
$\tau^{-1}\!\int_{0}^{t}e^{-(t-r)/\tau}dr=1-e^{-t/\tau}$. Using
$0\le w\le\bar w$ in \eqref{eq:leakysol} therefore yields
$z(t)\ge0$ and
$z(t)\le z(0)e^{-t/\tau}+\bar w\big(1-e^{-t/\tau}\big)$, which is a
convex combination of $z(0)$ and $\bar w$ and hence bounded by their
maximum.
\end{pf}

\section{Control Design}
\label{sec:control}

\begin{figure}[pos=H]
    \centering
    \includegraphics[width=\linewidth]{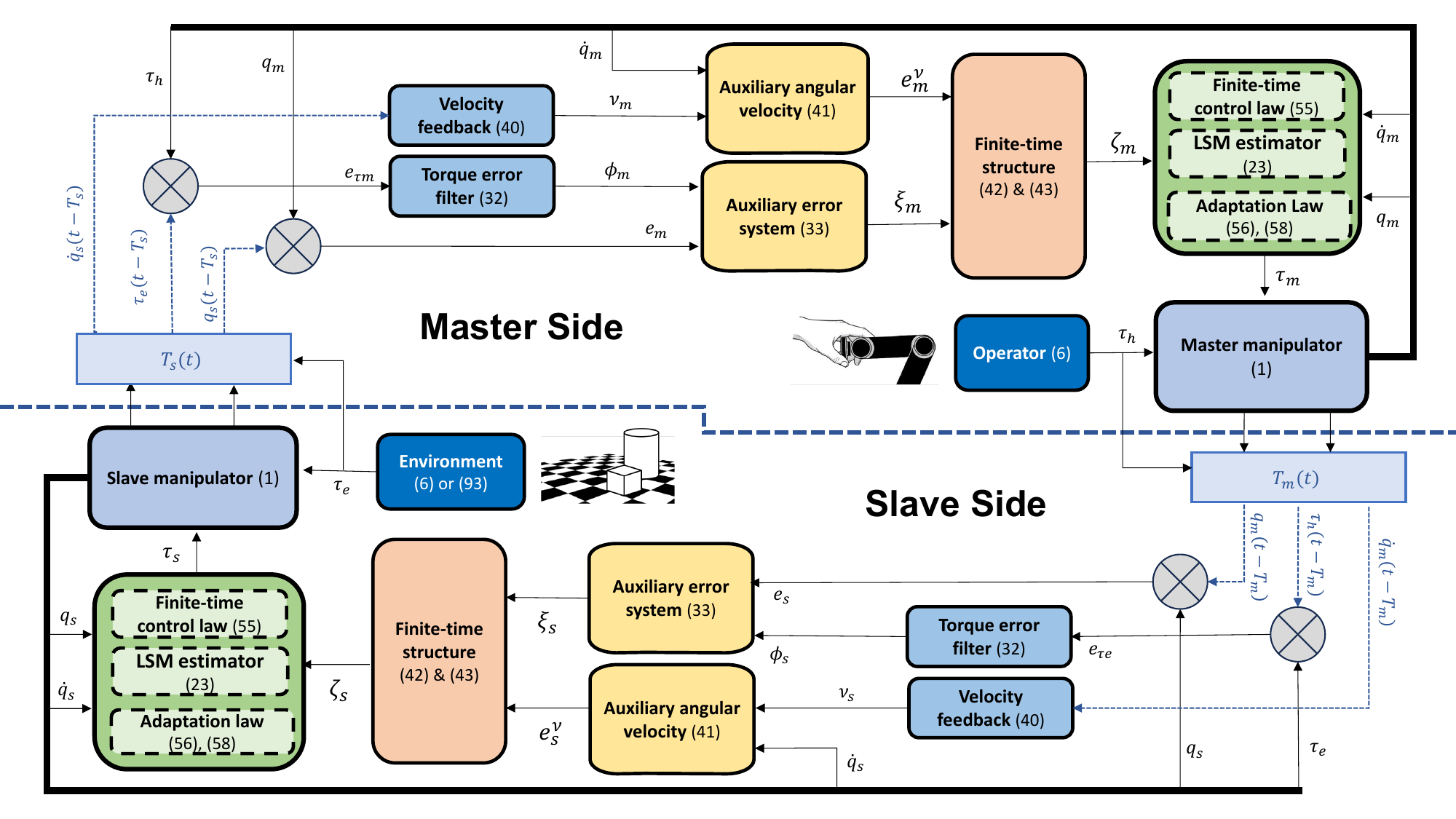}
    \caption{Overall architecture of the proposed bilateral teleoperation
    system and the LSM-based adaptive control scheme.}
    \label{fig:main_diagram}
\end{figure}

The design proceeds in four steps. The overall architecture of the proposed control scheme is depicted in Fig.~\ref{fig:main_diagram}. Section~\ref{ssec:hybrid-error}
combines the position and force tracking errors into one hybrid error
per side; Section~\ref{ssec:aux-var} augments it with fractional-power
terms to form the integral auxiliary variable;
Section~\ref{ssec:estimation} reconstructs the lumped uncertainty
$\Pi_j$ by the reservoir of Section~\ref{ssec:lsm}; and Section~\ref{ssec:control-law} states the control
law and the adaptive laws. The signals defined in
Section~\ref{ssec:hybrid-error} enter the estimator input $u_j$ through
the shorthand $\pi_{j1}$--$\pi_{j3}$.
\subsection{Hybrid Position--Force Error and Delay-Consistent Velocity
Errors}
\label{ssec:hybrid-error}

Objectives~O2 and~O3 are stated on quantities of different physical
dimension, the position error $e_j$ in \eqref{eq:poserr} and the force
error $e_{\tau j}$ in \eqref{eq:frcerr}, and are combined here into a
single error per side. The design differentiates that combined error, so using $e_{\tau j}$ in it directly would introduce $\dot e_{\tau j}$,
hence the rates of the operator and environment torques, which are neither
modelled in \eqref{eq:interaction} nor measured. The force channel is
therefore admitted through the state of the filter  
\begin{equation}
\dot\phi_j = -L_j^{\tau}\phi_j + L_j^{\tau}e_{\tau j},
\qquad \phi_j(0)=0,
\qquad j\in\{m,s\},
\label{eq:tau-filter}
\end{equation}
with $L_j^{\tau}\in\mathbb{R}^{n\times n}$ symmetric positive definite.
Define the hybrid position--force error
\begin{equation}
\xi_j = \chi_{j1}e_j + \chi_{j2}\phi_j ,
\label{eq:xi}
\end{equation}
where $\chi_{j1},\chi_{j2}\in\mathbb{R}^{n\times n}$ are positive
definite and diagonal. Then
\begin{equation}
\dot\xi_j = \chi_{j1}\dot e_j + \chi_{j2}\dot\phi_j ,
\end{equation}
in which $\dot\phi_j$ is supplied algebraically by
\eqref{eq:tau-filter}. 
The filter \eqref{eq:tau-filter} has unit dc gain, so a constant
force error reaches $\xi_j$ unattenuated.
Differentiating \eqref{eq:poserr} gives
\begin{equation}
\dot e_j = \dot q_j(t)
- \bigl(1-\dot T_{j'}(t)\bigr)\,
\dot q_{j'}\bigl(t-T_{j'}(t)\bigr),
\label{eq:edot-raw}
\end{equation}
which contains the delay rate $\dot T_{j'}$. This quantity is not
available online in an asymmetric communication channel. Define the
delay-consistent velocity error
\begin{equation}
e_{jv} = \dot q_j(t) - \dot q_{j'}\bigl(t-T_{j'}(t)\bigr),
\label{eq:ejv}
\end{equation}
formed from the same two signals as $e_j$. Substituting \eqref{eq:ejv}
into \eqref{eq:edot-raw} gives the exact decomposition
\begin{align}
\dot e_j &= e_{jv}
+ \dot T_{j'}\,\dot q_{j'}(t-T_{j'}),
\label{eq:edot-split}\\
\dot\xi_j &= \chi_{j1}e_{jv} + \chi_{j2}\dot\phi_j
+ \chi_{j1}\,\dot T_{j'}\,\dot q_{j'}(t-T_{j'}),
\label{eq:xidot-split}
\end{align}
and eliminating $\dot\phi_j$ by \eqref{eq:tau-filter},
\begin{equation}
\dot\xi_j = \chi_{j1}e_{jv}
+ \chi_{j2}L_j^{\tau}\bigl(e_{\tau j}-\phi_j\bigr)
+ \chi_{j1}\,\dot T_{j'}\,\dot q_{j'}(t-T_{j'}),
\label{eq:xidot-measurable}
\end{equation}
in which only the last term is unavailable for feedback. Its unknown
factor $\dot T_{j'}$ is bounded by $\delta_{j'}$ under
Assumption~\ref{as:delay}; the term is retained and carried into $\Pi_j$
in Section~\ref{ssec:estimation}, where its magnitude is dominated
adaptively.

%
%
%
%

\subsection{Integral Auxiliary Variable with Finite-Time Terms}
\label{ssec:aux-var}

The hybrid error \eqref{eq:xi} is defined on positions and interaction
torques, whereas the control torque enters the dynamics \eqref{eq:dyn}
through $\ddot q_j$. The variable on which the controller acts must
therefore involve the joint velocity, and must do so without requiring
either the acceleration or the delay rate at implementation. To that end
define the velocity feedback filter
\begin{equation}
\dot\nu_j = -L_j^{v}\nu_j
+ L_j^{v}\,\dot q_{j'}\bigl(t-T_{j'}(t)\bigr),
\qquad \nu_j(0)=0,
\label{eq:v-filter}
\end{equation}
with $L_j^{v}\in\mathbb{R}^{n\times n}$ symmetric positive definite, and
the filtered velocity error
\begin{equation}
e_j^{\nu} = \dot q_j - \nu_j .
\label{eq:enu}
\end{equation}
Since \eqref{eq:v-filter} is a stable low-pass with unit dc gain,
$e_j^{\nu}$ is a smoothed counterpart of the delay-consistent velocity
error \eqref{eq:ejv}, and coincides with it in steady state.

The finite-time term is
\begin{equation}
\psi_j = \lambda_{j1}\,\mathrm{sig}(\xi_j)^{\sigma_{j1}}
+ \lambda_{j2}\!\int_{0}^{t}\!
\mathrm{sig}\bigl(\xi_j(\varsigma)\bigr)^{\sigma_{j2}}\,
\mathrm{d}\varsigma ,
\label{eq:psi}
\end{equation}
where $\lambda_{j1},\lambda_{j2}\in\mathbb{R}^{n\times n}$ are positive
definite and diagonal, $1<\sigma_{j1}<2$ and $0<\sigma_{j2}<1$. For
$x\in\mathbb{R}^{n}$ and $r>0$,
$\mathrm{sig}(x)^{r}=[\,|x_1|^{r}\mathrm{sign}(x_1),\dots,
|x_n|^{r}\mathrm{sign}(x_n)\,]^{\mathsf T}$ and
$\mathrm{diag}(|x|)^{r}=\mathrm{diag}(|x_1|^{r},\dots,|x_n|^{r})$. The
integral auxiliary variable is then
\begin{equation}
\zeta_j = e_j^{\nu} + \psi_j ,
\qquad j\in\{m,s\},
\label{eq:zeta}
\end{equation}
and the controller of Section~\ref{ssec:control-law} is designed on
$\zeta_m$ and $\zeta_s$.

Differentiating \eqref{eq:zeta},
\begin{equation}
\dot\zeta_j = \dot e_j^{\nu} + \dot\psi_j ,
\label{eq:zetadot}
\end{equation}
with
\begin{align}
\dot e_j^{\nu} &= \ddot q_j - \dot\nu_j ,
\label{eq:enudot}\\
\dot\psi_j &= \sigma_{j1}\lambda_{j1}\,
\mathrm{diag}\bigl(|\xi_j|\bigr)^{\sigma_{j1}-1}\dot\xi_j
+ \lambda_{j2}\,\mathrm{sig}(\xi_j)^{\sigma_{j2}} ,
\label{eq:psidot}
\end{align}
in which $\dot\nu_j$ follows from \eqref{eq:v-filter} and $\dot\xi_j$
from \eqref{eq:xidot-split}. The acceleration in \eqref{eq:enudot} is an
intermediate only: $\zeta_j$ enters the analysis premultiplied by the
inertia matrix, and $M_j\ddot q_j$ is replaced by the dynamics
\eqref{eq:dyn} when $M_j\dot\zeta_j$ is formed in
Section~\ref{ssec:estimation}, so that no acceleration appears in the
control law or in the estimator input.

\begin{rmk}
\label{rem:aux-comparison}
Finite-time designs for teleoperation commonly take the auxiliary
variable as $\zeta_j=\dot e_j+\lambda_j\beta_j(e_j)$ with $\beta_j$ a
fractional-power function \cite{REF-FT-1,REF-FT-2}. Beyond requiring
$\ddot q_j$, that choice carries the delay rate into the design through
$\dot e_j$, since $\dot e_j$ depends on $\dot T_{j'}$ by
\eqref{eq:edot-raw}. The filter \eqref{eq:v-filter} avoids both: it is
driven by the delayed velocity itself, which is received over
the channel.
\end{rmk}

\begin{rmk}
\label{rem:no-singularity}
The two fractional powers occupy structurally different positions in
\eqref{eq:psi}. Because $\sigma_{j1}>1$, the matrix
$\mathrm{diag}(|\xi_j|)^{\sigma_{j1}-1}$ in \eqref{eq:psidot} is
continuous and vanishes at $\xi_j=0$, so no negative power arises when
$\psi_j$ is differentiated. The exponent $\sigma_{j2}<1$, which supplies
the finite-time drive, is placed under the integral in \eqref{eq:psi}
and is therefore never differentiated; it enters \eqref{eq:psidot}
undifferentiated as $\mathrm{sig}(\xi_j)^{\sigma_{j2}}$. The control law
of Section~\ref{ssec:control-law} is consequently free of the
singularity associated with terminal sliding-mode surfaces.
\end{rmk}

\subsection{Reservoir-Based Reconstruction of the Lumped Uncertainty}
\label{ssec:estimation}

Premultiplying \eqref{eq:zetadot} by the inertia matrix and using
\eqref{eq:enudot} and \eqref{eq:psidot},
\begin{equation}
M_j\dot\zeta_j = M_j\ddot q_j
+ M_j\Bigl[-\dot\nu_j
+ \sigma_{j1}\lambda_{j1}\mathrm{diag}\bigl(|\xi_j|\bigr)^{\sigma_{j1}-1}
\dot\xi_j
+ \lambda_{j2}\mathrm{sig}(\xi_j)^{\sigma_{j2}}\Bigr].
\label{eq:Mzeta-raw}
\end{equation}
Substituting $\dot\xi_j$ from \eqref{eq:xidot-split} separates the
available part of $\dot\xi_j$ from the delay-rate term. With the
shorthand
\begin{align}
\pi_{j1} &= -\dot\nu_j
+ \sigma_{j1}\lambda_{j1}
\mathrm{diag}\bigl(|\xi_j|\bigr)^{\sigma_{j1}-1}
\bigl(\chi_{j1}e_{jv}+\chi_{j2}\dot\phi_j\bigr)
\nonumber\\
&\quad + \lambda_{j2}\mathrm{sig}(\xi_j)^{\sigma_{j2}},
\label{eq:pi1}\\
\pi_{j2} &= \sigma_{j1}\lambda_{j1}
\mathrm{diag}\bigl(|\xi_j|\bigr)^{\sigma_{j1}-1}\chi_{j1}\,
\dot q_{j'}\bigl(t-T_{j'}\bigr),
\label{eq:pi2}\\
\pi_{j3} &= \zeta_j-\dot q_j ,
\label{eq:pi3}
\end{align}
equation \eqref{eq:Mzeta-raw} becomes
$M_j\dot\zeta_j = M_j\ddot q_j + M_j\pi_{j1}
+ M_j\pi_{j2}\dot T_{j'}$. Eliminating $M_j\ddot q_j$ by the dynamics
\eqref{eq:dyn}, applying the nominal--uncertain split \eqref{eq:split}
and the interaction model \eqref{eq:interaction}, and writing
$-C_j\dot q_j = C_j\pi_{j3}-C_j\zeta_j$ through \eqref{eq:pi3} gives
\begin{equation}
M_j\dot\zeta_j = \tau_j - F^{*}_{he,j}
+ M_j^{0}\pi_{j1} + M_j^{0}\pi_{j2}\dot T_{j'}
+ C_j^{0}\pi_{j3} - G_j^{0} + \Pi_j - C_j\zeta_j ,
\label{eq:Mzeta}
\end{equation}
in which $F^{*}_{he,j}$, $D_{he,j}$ and $\ S_{he,j}$ denote
$F^{*}_{h},D_h,\ S_h$ for $j=m$ and $F^{*}_{e},D_e,\ S_e$ for
$j=s$, and the lumped uncertainty is
\begin{equation}
\Pi_j = \Delta M_j\pi_{j1} + \Delta M_j\pi_{j2}\dot T_{j'}
+ \Delta C_j\pi_{j3} - \Delta G_j - B_j
- D_{he,j}\dot q_j - \ S_{he,j}q_j .
\label{eq:Pi}
\end{equation}
The delay rate enters \eqref{eq:Mzeta} twice, and the two occurrences
are treated differently. In $M_j^{0}\pi_{j2}\dot T_{j'}$ both factors
$M_j^{0}$ and $\pi_{j2}$ are known and only the scalar $\dot T_{j'}$ is
not; this term is retained explicitly and dominated in
Section~\ref{ssec:control-law} by an adaptive estimate of the bound
$\delta_{j'}$ of Assumption~\ref{as:delay}. In
$\Delta M_j\pi_{j2}\dot T_{j'}$ the multiplying matrix is itself
unknown, so the product is absorbed into $\Pi_j$ and reconstructed
together with the remaining uncertain terms.

The estimator input is taken as
\begin{equation}
u_j = \bigl[\pi_{j1}^{\mathsf T},\ \pi_{j2}^{\mathsf T},\
\pi_{j3}^{\mathsf T},\ q_j^{\mathsf T},\
\dot q_j^{\mathsf T}\bigr]^{\mathsf T}\in\mathbb{R}^{n_u},
\qquad n_u = 5n ,
\label{eq:uinput}
\end{equation}
every block of which is formed from filter states, measured positions
and velocities, and the delayed velocity received over the channel; no
acceleration and no delay rate appear. Substituting the split
\eqref{eq:split} directly into the dynamics \eqref{eq:dyn} instead of
forming \eqref{eq:Mzeta} would place $\Delta M_j\ddot q_j$ in the
uncertain term and require the joint acceleration as a regressor. With
$X_j(t)\in
\mathbb{R}^{n_X}$ the augmented reservoir feature vector of
Section~\ref{ssec:lsm}, the uncertainty is parameterised as
\begin{equation}
\Pi_j(t) = W_j^{*}X_j(t) + \varepsilon_j^{r},
\label{eq:approx}
\end{equation}
where $W_j^{*}\in\mathbb{R}^{n\times n_X}$ is the ideal readout matrix
and $\varepsilon_j^{r}(t)\in\mathbb{R}^{n}$ the reconstruction residual,
bounded on the considered compact admissible operating domain by Assumption~\ref{as:approx}. Only the readout is
adapted, through the estimate $\hat W_j$ of \eqref{eq:readout}; the
input and recurrent weights of the reservoir are drawn once and held
fixed, so \eqref{eq:approx} is linear in the adapted parameters.
With Assumptions~\ref{as:exo} and~\ref{as:approx}, there exists a
constant $\omega_j^{r}>0$ such that
$\lVert \varepsilon_j^{r}(t)-F^{*}_{he,j}(t)\rVert^{2}
\le\omega_j^{r}$, which is the only property of the combined
reconstruction residual and exogenous interaction term used in
Section~\ref{sec:lsm-stability}.

\begin{rmk}
\label{rem:structural-fit}
The quantity to be reconstructed is not a static map of $u_j$. Both
$\pi_{j2}$ and the products $\Delta M_j\pi_{j2}\dot T_{j'}$ in
\eqref{eq:Pi} depend on the delayed velocity
$\dot q_{j'}(t-T_{j'})$ and on the rate at which the delay itself
varies, so the value of $\Pi_j$ at time $t$ is determined by the
closed-loop trajectory over $[t-\bar T_{j'},t]$ rather than by $u_j(t)$
alone; this is the property established in
Remark~\ref{rem:history}. A kernel expansion over the instantaneous
input can represent such a functional only to the extent that the
required history is supplied to it explicitly, as hand-selected lag
taps. The reservoir supplies it instead through its own state, whose
leaky dynamics retain a fading trace of past inputs. No additional lag
taps are introduced in \eqref{eq:uinput}: the delayed regressor it
contains is the one the communication channel already places in the
closed loop.
\end{rmk}

\section{Finite-Time Control Law and Stability Analysis}
\label{sec:lsm-stability}

The control law is stated first, then the closed-loop certificate. The
two are presented together because the adaptive laws are not postulated:
each is read off the Lyapunov derivative as the choice that cancels an
otherwise indefinite term.

\subsection{Control and Adaptive Laws}
\label{ssec:control-law}

For $j\in\{m,s\}$ and $j'$ the opposite side, the control torque is
\begin{equation}
\begin{aligned}
\tau_j = {}& -M_j^{0}\pi_{j1} - C_j^{0}\pi_{j3} + G_j^{0}
 - \hat W_j X_j(t)\\
 &- \frac{\hat\delta_{j'}\bigl\lVert M_j^{0}\pi_{j2}\bigr\rVert^{2}}
        {2a_{j1}^{2}}\,\zeta_j
  - \frac{\hat\omega_j^{r}}{2a_{j2}^{2}}\,\zeta_j
  - K_{j1}\zeta_j - K_{j2}\,\mathrm{sig}(\zeta_j)^{\sigma},
\end{aligned}
\label{eq:control-law}
\end{equation}
where $K_{j1},K_{j2}\in\mathbb{R}^{n\times n}$ are positive definite and
diagonal, $a_{j1},a_{j2}>0$, and $\sigma\in(0,1)$. The first line
cancels the nominal feedforward of \eqref{eq:Mzeta} and compensates the
lumped uncertainty through the reservoir estimate \eqref{eq:readout};
the second dominates the two unmeasured quantities and supplies the
asymptotic and finite-time feedback terms.

The adaptive laws are
\begin{align}
\dot{\hat W}_j
 &= \gamma_j^{W}\Bigl(\zeta_j X_j^{\mathsf T}(t) - \rho_j\hat W_j\Bigr),
\label{eq:adapt-W}\\[2pt]
\dot{\hat\delta}_{j'}
 &= \gamma_j^{\delta}\left(
    \frac{\zeta_j^{\mathsf T}\zeta_j
          \bigl\lVert M_j^{0}\pi_{j2}\bigr\rVert^{2}}{2a_{j1}^{2}}
    - \rho_j\hat\delta_{j'}\right),
\label{eq:adapt-delta}\\[2pt]
\dot{\hat\omega}_j^{r}
 &= \gamma_j^{\omega}\left(
    \frac{\zeta_j^{\mathsf T}\zeta_j}{2a_{j2}^{2}}
    - \rho_j\hat\omega_j^{r}\right),
\label{eq:adapt-omega}
\end{align}
with $\gamma_j^{W},\gamma_j^{\delta},\gamma_j^{\omega}>0$ the adaptation
gains and $\rho_j>0$ a leakage coefficient. Here $\hat\delta_{j'}$
estimates the delay-rate bound of Assumption~\ref{as:delay} and
$\hat\omega_j^{r}$ the residual bound introduced in
Section~\ref{ssec:estimation}.

\begin{rmk}
\label{rem:readout-only}
Only $\hat W_j$ is adapted in \eqref{eq:adapt-W}; $W_j^{*}$ is the fixed
unknown of Assumption~\ref{as:approx}, and the reservoir weights
$W_j^{\mathrm{in}},W_j^{\mathrm{rec}}$ are drawn once and never updated.
The adapted quantity therefore enters \eqref{eq:control-law} linearly,
exactly as a weight vector over a fixed basis would, which is what
allows the certificate below to retain the structure of the
non-spiking scheme. The gain in \eqref{eq:adapt-W} is scalar rather than
matrix valued because $\hat W_j\in\mathbb{R}^{n\times n_X}$ with
$n_X=1+n_u+N$, so a full gain matrix would carry $n_X^{2}$ parameters
for no additional freedom in the proof.
\end{rmk}

\subsection{Closed-Loop Stability and Finite-Time Reaching}
\label{ssec:theorem}

\begin{theorem}
\label{thm:main}
Consider the teleoperator \eqref{eq:dyn} with the uncertainty split
\eqref{eq:split}, the interaction model \eqref{eq:interaction} and the
asymmetric delays of Assumption~\ref{as:delay}, driven by the control
law \eqref{eq:control-law} and the adaptive laws
\eqref{eq:adapt-W}--\eqref{eq:adapt-omega}, with the torque-error filter
\eqref{eq:tau-filter}, hybrid error \eqref{eq:xi}, velocity filter
\eqref{eq:v-filter} and auxiliary variable \eqref{eq:zeta}, and with the
lumped uncertainty reconstructed by the reservoir of
Section~\ref{ssec:lsm}. Let Assumptions~\ref{as:exo}--\ref{as:approx}
hold, let $K_{j1},K_{j2}$ be positive definite and diagonal, let
$a_{j1},a_{j2},\gamma_j^{W},\gamma_j^{\delta},\gamma_j^{\omega},\rho_j>0$
and $\sigma\in(0,1)$. Then $\zeta_j$, $\tilde W_j$, $\tilde\delta_{j'}$, and
$\tilde\omega_j^r$, $j\in\{m,s\}$, are uniformly ultimately bounded,
and $\zeta_m$ and $\zeta_s$ reach the residual set $\mathcal D$ defined
in \eqref{eq:residual-set} within the finite time $T_{\mathrm{fin}}$
bounded by \eqref{eq:Tfin}.
\end{theorem}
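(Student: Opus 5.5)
The proof uses a composite Lyapunov function that is quadratic in $\zeta_j$ (weighted by $M_j$) and in the parameter errors $\tilde W_j=W_j^{*}-\hat W_j$, $\tilde\delta_{j'}=\delta_{j'}-\hat\delta_{j'}$ and $\tilde\omega_j^{r}=\omega_j^{r}-\hat\omega_j^{r}$:
\[
V=\sum_{j\in\{m,s\}}\Big(\tfrac12\zeta_j^{\mathsf T}M_j\zeta_j+\tfrac{1}{2\gamma_j^{W}}\operatorname{tr}(\tilde W_j^{\mathsf T}\tilde W_j)+\tfrac{1}{2\gamma_j^{\delta}}\tilde\delta_{j'}^{2}+\tfrac{1}{2\gamma_j^{\omega}}(\tilde\omega_j^{r})^{2}\Big).
\]
Delay effects are not handled by a separate Krasovskii integral. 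They are already absorbed into $\Pi_j$ and into the $\dot T_{j'}$ term, which Assumption~\ref{as:delay} bounds by $\delta_{j'}$. If needed, an integral term $\int_{t-T_{j'}}^{t}\dot q_{j'}^{\mathsf T}\dot q_{j'}$ can be added; its derivative carries a factor $(1-\dot T_{j'})$ and is bounded using $\delta_{j'}$. The main derivation goes as follows. Differentiate $V$ and use Property~\ref{prop:skew} to cancel $\tfrac12\zeta_j^{\mathsf T}\dot M_j\zeta_j-\zeta_j^{\mathsf T}C_j\zeta_j$. Then substitute \eqref{eq:Mzeta}, the control law \eqref{eq:control-law} and the approximation \eqref{eq:approx}. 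The nominal terms cancel, and the closed loop becomes
\[
M_j\dot\zeta_j+C_j\zeta_j=\tilde W_jX_j+(\varepsilon_j^{r}-F^*_{he,j})+M_j^{0}\pi_{j2}\dot T_{j'}-(\text{damping terms}).
\]

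Each indefinite term is dominated by Young's inequality so that the adaptive laws cancel it:
\[
\zeta_j^{\mathsf T}M_j^0\pi_{j2}\dot T_{j'}\le \tfrac{\delta_{j'}\|M_j^0\pi_{j2}\|^2\zeta_j^{\mathsf T}\zeta_j}{2a_{j1}^2}+\tfrac{a_{j1}^2\delta_{j'}}{2},\qquad
\zeta_j^{\mathsf T}(\varepsilon_j^r-F^*_{he,j})\le\tfrac{\omega_j^r\zeta_j^{\mathsf T}\zeta_j}{2a_{j2}^2}+\tfrac{a_{j2}^2}{2}.
\]
Here I use $|\dot T|\le\delta$ with $\delta$ scalar. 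The laws \eqref{eq:adapt-W}--\eqref{eq:adapt-omega} then cancel the cross terms $\operatorname{tr}(\tilde W_j^{\mathsf T}\zeta_jX_j^{\mathsf T})$ and the terms involving $\delta_{j'}$ and $\omega_j^r$. The leakage terms are handled by the standard bound
\[
\rho_j\tilde\theta\hat\theta\le-\tfrac{\rho_j}{2}\tilde\theta^2+\tfrac{\rho_j}{2}\theta^{*2}.
\]
This gives
\[
\dot V\le-\sum_j\Big(\zeta_j^{\mathsf T}K_{j1}\zeta_j+\zeta_j^{\mathsf T}K_{j2}\mathrm{sig}(\zeta_j)^{\sigma}+\tfrac{\rho_j}{2}\|\tilde W_j\|_F^2+\tfrac{\rho_j}{2}\tilde\delta_{j'}^2+\tfrac{\rho_j}{2}(\tilde\omega_j^r)^2\Big)+c,
\]
where $c$ collects $a^2$-terms and $\rho\|W^*\|^2$-terms. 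Property~\ref{prop:inertia} then yields $\dot V\le-\alpha V+c$, which gives uniform ultimate boundedness. By \eqref{eq:Xbound} the feature vector is bounded independently of the trajectory, so the readout terms cause no circularity. Assumptions~\ref{as:compact}--\ref{as:approx} are invoked on $\Omega_j$, which requires that trajectories stay in $\Omega_j$; I will state this as a standing invariance hypothesis consistent with the boundedness just obtained.

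For finite time, I will obtain the fractional power term. Note that $\zeta_j^{\mathsf T}K_{j2}\mathrm{sig}(\zeta_j)^{\sigma}=\sum_i k_i|\zeta_{j,i}|^{1+\sigma}$. Set $p=(1+\sigma)/2\in(0,1)$. By Lemma~\ref{lem:powersum},
\[
\sum_i|\zeta_{j,i}|^{1+\sigma}\ge\big(\textstyle\sum_i\zeta_{j,i}^2\big)^{p}\ge(2V_\zeta/\lambda_{j,\max})^{p}.
\]
For the parameter errors, add and subtract $(\tfrac{\rho}{2}\tilde\theta^2)^{p}$. The elementary inequality $x^{p}\le x+(1-p)p^{p/(1-p)}$ (valid for $x\ge0$) converts these into the same form at the cost of an additional constant. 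Applying Lemma~\ref{lem:powersum} again across all blocks gives
\[
\dot V\le-\alpha V-\beta V^{p}+\bar c.
\]
Outside the set where $\tfrac12(\alpha V+\beta V^{p})\le\bar c$, this becomes $\dot V\le-\tfrac{\alpha}{2}V-\tfrac{\beta}{2}V^{p}$. Lemma~\ref{lem:finitetime}(ii) then gives reaching of the residual set $\mathcal D=\{\zeta_j:\ \|\zeta_j\|^2\le 2\lambda_{j,\max}^{-1}\cdots\}$, defined from this sublevel set of $V$. It also gives the time bound \eqref{eq:Tfin}, evaluated with $\alpha/2$, $\beta/2$ and exponent $p$.

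The main obstacle is this finite-time step. The parameter-error blocks carry only linear leakage, so the $V^{p}$ term must be manufactured from them through the $x^{p}$ bound. One must also make sure that the damping and interaction terms $D_{he,j}\dot q_j$ and $S_{he,j}q_j$ truly sit inside $\Pi_j$, so that no unbounded cross term in $\zeta_j$ survives. A secondary difficulty is that $\dot T_{j'}$ takes either sign. The Young bound therefore has to use $|\dot T_{j'}|\le\delta_{j'}$ rather than $\dot T_{j'}$ itself, and this is why $\delta_{j'}$, rather than a signed quantity, is the estimated target.
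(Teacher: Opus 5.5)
Your proposal is correct. Its skeleton coincides with the paper's:
\begin{itemize}
\item an $M_j$-weighted quadratic term plus quadratic parameter-error terms;
\item skew symmetry to remove $C_j$;
\item Young domination of the $\dot T_{j'}$ and residual products;
\item exact cancellation of the three indefinite terms by \eqref{eq:adapt-W}--\eqref{eq:adapt-omega};
\item the leakage bounds \eqref{eq:leakage};
\item $\dot V\le-\alpha V+c$ for uniform ultimate boundedness.
\end{itemize}
Omitting the Krasovskii term costs nothing. The paper's $\mathcal V_3$ enters only through $\eta_j$, which may be taken as zero, and the paper itself discards it in the finite-time step. The optional $\dot q_{j'}$-integral you mention is therefore not needed.

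The genuine difference is how you produce the fractional powers of the parameter errors.
\begin{itemize}
\item \emph{The paper} adds and subtracts $(P_j^{W})^{p}$, $(P_j^{\delta})^{p}$, $(P_j^{\omega})^{p}$ and bounds the added copies by $\Upsilon_j$ using the UUB estimate \eqref{eq:Vsol}. This keeps the full leakage rate in $\bar\Lambda_1$. The cost is that $\Upsilon$, and hence $\mathcal D$, depend on $\mathcal V(0)$, and the finite-time step rests on the boundedness step.
\item \emph{Your route} uses $x^{p}\le x+(1-p)p^{p/(1-p)}$. This gives a constant depending only on design quantities, so the residual set is uniform in the initial condition and the finite-time inequality is self-contained. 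The price is part of the leakage rate.
\end{itemize}
To make your route work, apply the inequality to, e.g., $\tfrac{\rho_j}{4}\tilde\delta_{j'}^{2}$ rather than to $\tfrac{\rho_j}{2}\tilde\delta_{j'}^{2}$, and likewise for the $\tilde W_j$ and $\tilde\omega_j^{r}$ blocks. Otherwise the conversion consumes the entire linear term for those blocks, and you no longer have $-\alpha V$ with $\alpha>0$ as Lemma~\ref{lem:finitetime} requires.

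Your residual set and reaching time have the same form as \eqref{eq:residual-set} and \eqref{eq:Tfin}, but with different constants. The paper's $\theta$-split dominates the constant by a fraction of the fractional-power term alone and keeps the full linear rate; that is what yields exactly those expressions. You halve both rates, which is also valid.

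Minor points:
\begin{itemize}
\item Your Young constant $a_{j1}^{2}\delta_{j'}/2$ is the correct one. The paper's $a_{j1}^{2}/2$ holds only if $\delta_{j'}\le1$. The correction merely changes $\Psi_j$.
\item Projecting a sublevel set $\{V\le r\}$ onto $\zeta_j$ requires $\lambda_{j,\min}$, i.e.\ $\|\zeta_j\|^{2}\le 2r/\lambda_{j,\min}$, not $\lambda_{j,\max}$.
\item Your explicit invariance hypothesis on $\Omega_j$ makes visible an assumption the paper uses only tacitly.
\end{itemize}
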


\begin{pf}
Take the Lyapunov--Krasovskii functional
\begin{equation}
\mathcal V = \mathcal V_1 + \mathcal V_2 + \mathcal V_3 ,
\label{eq:lyap}
\end{equation}
where, summing over $j\in\{m,s\}$ with $j'$ the opposite side,
\begin{align}
\mathcal V_1 &= \sum_{j}\tfrac12\,\zeta_j^{\mathsf T}M_j\zeta_j ,
\label{eq:V1}\\
\mathcal V_2 &= \sum_{j}\left[
 \frac{\mathrm{tr}\bigl(\tilde W_j\tilde W_j^{\mathsf T}\bigr)}
      {2\gamma_j^{W}}
 + \frac{\tilde\delta_{j'}^{2}}{2\gamma_j^{\delta}}
 + \frac{\bigl(\tilde\omega_j^{r}\bigr)^{2}}{2\gamma_j^{\omega}}
 \right],
\label{eq:V2}\\
\mathcal V_3 &= \sum_{j}\eta_j\!\int_{t-\bar T_j}^{t}\!
 \frac{\bar T_j-t+\alpha}{\bar T_j}\,
 \zeta_j^{\mathsf T}(\alpha)\zeta_j(\alpha)\,\mathrm d\alpha ,
\label{eq:V3}
\end{align}
with $\tilde W_j = W_j^{*}-\hat W_j$,
$\tilde\delta_{j'} = \delta_{j'}-\hat\delta_{j'}$,
$\tilde\omega_j^{r} = \omega_j^{r}-\hat\omega_j^{r}$, and
$\eta_j\ge0$ constant. Each term is positive definite:
$\mathcal V_1$ by Property~\ref{prop:inertia}, $\mathcal V_2$ by the
positivity of the adaptation gains, $\mathcal V_3$ by construction.

Differentiating \eqref{eq:V1},
\begin{equation}
\dot{\mathcal V}_1
 = \sum_{j}\Bigl(
   \zeta_j^{\mathsf T}M_j\dot\zeta_j
 + \tfrac12\,\zeta_j^{\mathsf T}\dot M_j\zeta_j\Bigr).
\label{eq:V1dot-raw}
\end{equation}
Substituting $M_j\dot\zeta_j$ from \eqref{eq:Mzeta}, the term
$-\zeta_j^{\mathsf T}C_j\zeta_j$ combines with
$\tfrac12\zeta_j^{\mathsf T}\dot M_j\zeta_j$ into
$\tfrac12\zeta_j^{\mathsf T}(\dot M_j-2C_j)\zeta_j$, which vanishes by
Property~\ref{prop:skew}. Replacing $\Pi_j$ by its reservoir
representation \eqref{eq:approx},
\begin{equation}
\begin{aligned}
\dot{\mathcal V}_1 = \sum_{j}\zeta_j^{\mathsf T}\Bigl[&
 \tau_j + M_j^{0}\pi_{j1} + C_j^{0}\pi_{j3} - G_j^{0}
 + W_j^{*}X_j(t)\\
 &+ M_j^{0}\pi_{j2}\dot T_{j'}
 + \varepsilon_j^{r} - F^{*}_{he,j}\Bigr].
\end{aligned}
\label{eq:V1dot}
\end{equation}
The last two groups contain the only quantities not available to the
controller. Bounding them by the Cauchy--Schwarz inequality,
\begin{equation}
\begin{aligned}
\dot{\mathcal V}_1 \le \sum_{j}\Bigl\{&
 \zeta_j^{\mathsf T}\bigl[\tau_j + M_j^{0}\pi_{j1}
 + C_j^{0}\pi_{j3} - G_j^{0} + W_j^{*}X_j(t)\bigr]\\
 &+ \bigl\lVert\zeta_j\bigr\rVert
    \bigl\lVert M_j^{0}\pi_{j2}\bigr\rVert
    \bigl|\dot T_{j'}\bigr|
  + \bigl\lVert\zeta_j\bigr\rVert
    \bigl\lVert \varepsilon_j^{r}-F^{*}_{he,j}\bigr\rVert
 \Bigr\}.
\end{aligned}
\label{eq:V1dot-bound}
\end{equation}
By Assumption~\ref{as:delay} the delay rate satisfies
$|\dot T_{j'}|\le\delta_{j'}$, and by
Assumptions~\ref{as:exo} and~\ref{as:approx} the combined residual
satisfies
$\lVert\varepsilon_j^{r}-F^{*}_{he,j}\rVert\le\sqrt{\omega_j^{r}}$.
Applying Young's inequality in the form
$e\le e^{2}/2p^{2}+p^{2}/2$ to each of the two remaining products, with
$p=a_{j1}$ and $p=a_{j2}$ respectively,
\begin{equation}
\begin{aligned}
\dot{\mathcal V}_1 \le \sum_{j}\Bigl\{&
 \zeta_j^{\mathsf T}\bigl[\tau_j + M_j^{0}\pi_{j1}
 + C_j^{0}\pi_{j3} - G_j^{0} + W_j^{*}X_j(t)\bigr]\\
 &+ \frac{\zeta_j^{\mathsf T}\zeta_j
          \bigl\lVert M_j^{0}\pi_{j2}\bigr\rVert^{2}}
         {2a_{j1}^{2}}\,\delta_{j'}
  + \frac{\zeta_j^{\mathsf T}\zeta_j}{2a_{j2}^{2}}\,\omega_j^{r}
  + \frac{a_{j1}^{2}+a_{j2}^{2}}{2}\Bigr\}.
\end{aligned}
\label{eq:V1dot-young}
\end{equation}
Each of the two unmeasured bounds $\delta_{j'}$ and $\omega_j^{r}$ now
multiplies a computable quantity, which is what
\eqref{eq:adapt-delta} and \eqref{eq:adapt-omega} are constructed to
estimate.

Substituting the control law \eqref{eq:control-law} into
\eqref{eq:V1dot-young}, the nominal feedforward cancels term by term,
the estimate $\hat W_jX_j$ combines with $W_j^{*}X_j$ into the parameter
error $\tilde W_j = W_j^{*}-\hat W_j$, and the two domination terms
convert the unknown bounds into their estimation errors
$\tilde\delta_{j'}$ and $\tilde\omega_j^{r}$:
\begin{equation}
\begin{aligned}
\dot{\mathcal V}_1 \le \sum_{j}\Bigl\{&
 -\zeta_j^{\mathsf T}K_{j1}\zeta_j
 -\zeta_j^{\mathsf T}K_{j2}\,\mathrm{sig}(\zeta_j)^{\sigma}
 + \zeta_j^{\mathsf T}\tilde W_j X_j(t)\\
 &+ \frac{\zeta_j^{\mathsf T}\zeta_j
          \bigl\lVert M_j^{0}\pi_{j2}\bigr\rVert^{2}}{2a_{j1}^{2}}\,
   \tilde\delta_{j'}
 + \frac{\zeta_j^{\mathsf T}\zeta_j}{2a_{j2}^{2}}\,\tilde\omega_j^{r}
 + \frac{a_{j1}^{2}+a_{j2}^{2}}{2}\Bigr\}.
\end{aligned}
\label{eq:V1dot-closed}
\end{equation}
The first two terms are negative definite in $\zeta_j$ and carry the
asymptotic and finite-time action respectively. The three remaining
sign-indefinite terms are each linear in one estimation error, and each
is cancelled by the corresponding adaptive law.

Differentiating \eqref{eq:V2} and using
$\dot{\tilde W}_j=-\dot{\hat W}_j$,
$\dot{\tilde\delta}_{j'}=-\dot{\hat\delta}_{j'}$ and
$\dot{\tilde\omega}_j^{r}=-\dot{\hat\omega}_j^{r}$, which hold because
$W_j^{*}$, $\delta_{j'}$ and $\omega_j^{r}$ are constants,
\begin{equation}
\dot{\mathcal V}_2 = \sum_{j}\left[
 -\frac{\mathrm{tr}\bigl(\tilde W_j\dot{\hat W}_j^{\mathsf T}\bigr)}
       {\gamma_j^{W}}
 -\frac{\tilde\delta_{j'}\dot{\hat\delta}_{j'}}{\gamma_j^{\delta}}
 -\frac{\tilde\omega_j^{r}\dot{\hat\omega}_j^{r}}{\gamma_j^{\omega}}
 \right].
\label{eq:V2dot-raw}
\end{equation}
Inserting \eqref{eq:adapt-W}--\eqref{eq:adapt-omega} and using the
identity
$\mathrm{tr}\bigl(\tilde W_jX_j\zeta_j^{\mathsf T}\bigr)
=\zeta_j^{\mathsf T}\tilde W_jX_j$, valid since the argument is a
rank-one matrix,
\begin{equation}
\begin{aligned}
\dot{\mathcal V}_2 = \sum_{j}\Bigl[&
 -\zeta_j^{\mathsf T}\tilde W_jX_j(t)
 + \rho_j\,\mathrm{tr}\bigl(\tilde W_j\hat W_j^{\mathsf T}\bigr)\\
 &-\frac{\zeta_j^{\mathsf T}\zeta_j
        \bigl\lVert M_j^{0}\pi_{j2}\bigr\rVert^{2}}{2a_{j1}^{2}}\,
  \tilde\delta_{j'}
 + \rho_j\tilde\delta_{j'}\hat\delta_{j'}\\
 &-\frac{\zeta_j^{\mathsf T}\zeta_j}{2a_{j2}^{2}}\,\tilde\omega_j^{r}
 + \rho_j\tilde\omega_j^{r}\hat\omega_j^{r}\Bigr].
\end{aligned}
\label{eq:V2dot}
\end{equation}
The three leading terms of \eqref{eq:V2dot} are the exact negatives of
the three indefinite terms of \eqref{eq:V1dot-closed}. This cancellation
is the reason the adaptive laws take the form they do, and it is
unaffected by the internal structure of the estimator: it requires only
that $X_j$ be independent of the adapted parameters, which holds because
the reservoir weights are fixed.

The leakage terms are handled by the elementary bounds
\begin{equation}
\begin{aligned}
\rho_j\,\mathrm{tr}\bigl(\tilde W_j\hat W_j^{\mathsf T}\bigr)
 &\le -\frac{\rho_j}{2}\,
      \mathrm{tr}\bigl(\tilde W_j\tilde W_j^{\mathsf T}\bigr)
      +\frac{\rho_j}{2}\,
      \mathrm{tr}\bigl(W_j^{*}W_j^{*\mathsf T}\bigr),\\
\rho_j\tilde\delta_{j'}\hat\delta_{j'}
 &\le -\frac{\rho_j}{2}\tilde\delta_{j'}^{2}
      +\frac{\rho_j}{2}\delta_{j'}^{2},\\
\rho_j\tilde\omega_j^{r}\hat\omega_j^{r}
 &\le -\frac{\rho_j}{2}\bigl(\tilde\omega_j^{r}\bigr)^{2}
      +\frac{\rho_j}{2}\bigl(\omega_j^{r}\bigr)^{2},
\end{aligned}
\label{eq:leakage}
\end{equation}
each obtained by writing the estimate as the true value minus its error
and applying $2ab\le a^{2}+b^{2}$. Adding
\eqref{eq:V1dot-closed} and \eqref{eq:V2dot} and applying
\eqref{eq:leakage},
\begin{equation}
\begin{aligned}
\dot{\mathcal V}_1+\dot{\mathcal V}_2 \le \sum_{j}\Bigl[&
 -\zeta_j^{\mathsf T}K_{j1}\zeta_j
 -\zeta_j^{\mathsf T}K_{j2}\,\mathrm{sig}(\zeta_j)^{\sigma}
 -\frac{\rho_j}{2}\,
   \mathrm{tr}\bigl(\tilde W_j\tilde W_j^{\mathsf T}\bigr)\\
 &-\frac{\rho_j}{2}\tilde\delta_{j'}^{2}
 -\frac{\rho_j}{2}\bigl(\tilde\omega_j^{r}\bigr)^{2}
 +\Psi_j\Bigr],
\end{aligned}
\label{eq:V12dot}
\end{equation}
where
\begin{equation}
\Psi_j = \frac{\rho_j}{2}
 \Bigl[\mathrm{tr}\bigl(W_j^{*}W_j^{*\mathsf T}\bigr)
 +\delta_{j'}^{2}+\bigl(\omega_j^{r}\bigr)^{2}\Bigr]
 +\frac{a_{j1}^{2}+a_{j2}^{2}}{2}
\label{eq:Psi}
\end{equation}
is a positive constant, finite by Assumptions~\ref{as:delay}
and~\ref{as:approx}.

\begin{rmk}
\label{rem:cancellation}
The parameter-error term $\zeta_j^{\mathsf T}\tilde W_jX_j$ cancels
identically, without any bound on $\lVert X_j\rVert$ and without any
assumption on how $X_j$ is generated. Fixing the reservoir and adapting
only the readout is therefore not merely convenient: it is what places
the estimator inside the class for which this cancellation holds, and it
is the sense in which the finite-time certificate of the underlying
scheme is preserved rather than re-established. A bound on
$\lVert X_j\rVert$ is required later, but only to size the residual set.
\end{rmk}

Differentiating \eqref{eq:V3},
\begin{equation}
\dot{\mathcal V}_3 = \sum_{j}\left[
 \eta_j\,\zeta_j^{\mathsf T}\zeta_j
 -\frac{\eta_j}{\bar T_j}\!\int_{t-\bar T_j}^{t}\!
  \zeta_j^{\mathsf T}(\alpha)\zeta_j(\alpha)\,\mathrm d\alpha\right].
\label{eq:V3dot-raw}
\end{equation}
Since $(\bar T_j-t+\alpha)/\bar T_j\le1$ on the interval of
integration, the integral in \eqref{eq:V3dot-raw} dominates the one
appearing in $\mathcal V_3$, so that
\begin{equation}
\dot{\mathcal V}_3 \le \sum_{j}\left[
 \eta_j\,\zeta_j^{\mathsf T}\zeta_j
 -\frac{\eta_j}{\bar T_j}\!\int_{t-\bar T_j}^{t}\!
  \frac{\bar T_j-t+\alpha}{\bar T_j}\,
  \zeta_j^{\mathsf T}(\alpha)\zeta_j(\alpha)\,\mathrm d\alpha\right].
\label{eq:V3dot}
\end{equation}
Combining \eqref{eq:V12dot} with \eqref{eq:V3dot} and using
Property~\ref{prop:inertia} to write
$\zeta_j^{\mathsf T}\zeta_j\ge
(1/\lambda_{j,\max})\,\zeta_j^{\mathsf T}M_j\zeta_j$,
\begin{equation}
\begin{aligned}
\dot{\mathcal V} \le \sum_{j}\Bigl[&
 -\frac{k_{j}}{2}\,\zeta_j^{\mathsf T}M_j\zeta_j
 -\zeta_j^{\mathsf T}K_{j2}\,\mathrm{sig}(\zeta_j)^{\sigma}
 -\frac{\rho_j}{2}\,
  \mathrm{tr}\bigl(\tilde W_j\tilde W_j^{\mathsf T}\bigr)\\
 &-\frac{\rho_j}{2}\tilde\delta_{j'}^{2}
 -\frac{\rho_j}{2}\bigl(\tilde\omega_j^{r}\bigr)^{2}
 -\frac{\eta_j}{\bar T_j}\!\int_{t-\bar T_j}^{t}\!
  \frac{\bar T_j-t+\alpha}{\bar T_j}\,
  \zeta_j^{\mathsf T}(\alpha)\zeta_j(\alpha)\,\mathrm d\alpha
 +\Psi_j\Bigr],
\end{aligned}
\label{eq:Vdot-assembled}
\end{equation}
where
\begin{equation}
k_{j} = \frac{2\bigl(\lambda_{\min}(K_{j1})-\eta_j\bigr)}
             {\lambda_{j,\max}} .
\label{eq:kj}
\end{equation}
The gain $K_{j1}$ is chosen so that
$\lambda_{\min}(K_{j1})>\eta_j$, which is always possible since
$\eta_j\ge0$ is a free constant of the functional \eqref{eq:V3} and may
be taken arbitrarily small; then $k_{j}>0$. Discarding the
non-positive finite-time and integral terms and setting
\begin{equation}
\Lambda = \min_{j\in\{m,s\}}
 \Bigl\{k_{j},\ \rho_j\gamma_j^{W},\ \rho_j\gamma_j^{\delta},\
 \rho_j\gamma_j^{\omega},\ 1/\bar T_j\Bigr\},
\qquad
\Psi = \Psi_m+\Psi_s ,
\label{eq:LambdaPsi}
\end{equation}
each surviving term of \eqref{eq:Vdot-assembled} is bounded above by
$-\Lambda$ times the corresponding term of \eqref{eq:lyap}, so that
\begin{equation}
\dot{\mathcal V} \le -\Lambda\,\mathcal V + \Psi ,
\qquad \Lambda>0,\ \Psi>0 .
\label{eq:Vdot-final}
\end{equation}
The comparison lemma applied to \eqref{eq:Vdot-final} gives
\begin{equation}
\mathcal V(t) \le
 \Bigl(\mathcal V(0)-\frac{\Psi}{\Lambda}\Bigr)e^{-\Lambda t}
 +\frac{\Psi}{\Lambda},
\qquad \forall t\ge0 ,
\label{eq:Vsol}
\end{equation}
so $\mathcal V$ is bounded for all $t$ and
$\limsup_{t\to\infty}\mathcal V(t)\le\Psi/\Lambda$. Since $\mathcal V$
is positive definite and radially unbounded in $\zeta_j$, $\tilde W_j$,
$\tilde\delta_{j'}$ and $\tilde\omega_j^{r}$, all four are uniformly
ultimately bounded, and with them $\hat W_j$, $\hat\delta_{j'}$ and
$\hat\omega_j^{r}$. Boundedness of $\zeta_j$ propagates to the remaining
closed-loop signal. The filtered reservoir state satisfies
$x_{L,j}(t)\in[0,1]^N$ by Lemma~\ref{lem:leaky}, and the augmented
feature vector $X_j(t)$ is uniformly bounded by \eqref{eq:Xbound}.
Since $\hat W_j$ is bounded, the uncertainty estimate
$\hat\Pi_j=\hat W_jX_j$ is also bounded.

\begin{rmk}
\label{rem:residual-size}
The ultimate bound $\Psi/\Lambda$ separates into a part fixed by the
plant and the specification and a part the designer controls. The
contributions $\delta_{j'}^{2}$ and $a_{j1}^{2}+a_{j2}^{2}$ in
\eqref{eq:Psi} come from the delay rate and from the Young's-inequality
step; the contribution
$\mathrm{tr}(W_j^{*}W_j^{*\mathsf T})+(\omega_j^{r})^{2}$ is the price
of the leakage $\rho_j$ and of the reconstruction residual. By
Assumption~\ref{as:approx} the latter shrinks as the reservoir size $N$
grows, so the estimator quality enters the guarantee only through the
size of the residual set, never through its existence. Enlarging
$K_{j1}$ increases $\Lambda$ and contracts the set; reducing $\rho_j$
reduces $\Psi$ but slows the decay of the parameter errors.
\end{rmk}

The analysis so far establishes boundedness. Finite-time reaching is
shown on the functional
\begin{equation}
\bar{\mathcal V} = \mathcal V_1 + \mathcal V_2 ,
\label{eq:Vbar}
\end{equation}
which omits the Krasovskii term \eqref{eq:V3}; the latter has no
fractional-power counterpart and is not required here, since
\eqref{eq:V12dot} already holds independently of $\mathcal V_3$.

Write the constituents of \eqref{eq:Vbar} as
$\mathcal V_{1j}=\tfrac12\zeta_j^{\mathsf T}M_j\zeta_j$,
$P_j^{W}=\mathrm{tr}(\tilde W_j\tilde W_j^{\mathsf T})/2\gamma_j^{W}$,
$P_j^{\delta}=\tilde\delta_{j'}^{2}/2\gamma_j^{\delta}$ and
$P_j^{\omega}=(\tilde\omega_j^{r})^{2}/2\gamma_j^{\omega}$, and set
$\bar k_{j1}=\lambda_{\min}(K_{j1})$,
$\bar k_{j2}=\lambda_{\min}(K_{j2})$. Since $K_{j2}$ is diagonal and
positive definite and $\sigma+1\in(1,2)$,
\begin{equation}
\zeta_j^{\mathsf T}K_{j2}\,\mathrm{sig}(\zeta_j)^{\sigma}
 = \sum_{i=1}^{n}K_{j2,ii}\bigl|\zeta_{j,i}\bigr|^{\sigma+1}
 \;\ge\; \bar k_{j2}\bigl\lVert\zeta_j\bigr\rVert^{\sigma+1},
\label{eq:sig-lower}
\end{equation}
the last step because $\lVert\cdot\rVert_{\sigma+1}\ge\lVert\cdot\rVert_2$
for exponents below $2$. Using Property~\ref{prop:inertia} to convert
both quadratic forms,
\begin{equation}
\bigl\lVert\zeta_j\bigr\rVert^{2}\ge
 \frac{2}{\lambda_{j,\max}}\,\mathcal V_{1j},
\qquad
\bigl\lVert\zeta_j\bigr\rVert^{\sigma+1}\ge
 \Bigl(\frac{2}{\lambda_{j,\max}}\Bigr)^{\frac{\sigma+1}{2}}
 \mathcal V_{1j}^{\frac{\sigma+1}{2}} .
\label{eq:zeta-convert}
\end{equation}
Subtracting and adding the fractional powers of the parameter-error
terms in \eqref{eq:V12dot} and applying
\eqref{eq:sig-lower}--\eqref{eq:zeta-convert},
\begin{equation}
\begin{aligned}
\dot{\bar{\mathcal V}} \le \sum_{j}\Bigl\{&
 -\frac{2\bar k_{j1}}{\lambda_{j,\max}}\mathcal V_{1j}
 -\rho_j\gamma_j^{W}P_j^{W}
 -\rho_j\gamma_j^{\delta}P_j^{\delta}
 -\rho_j\gamma_j^{\omega}P_j^{\omega}\\
 &-\bar k_{j2}\Bigl(\frac{2}{\lambda_{j,\max}}\Bigr)^{\frac{\sigma+1}{2}}
   \mathcal V_{1j}^{\frac{\sigma+1}{2}}
 -\bigl(P_j^{W}\bigr)^{\frac{\sigma+1}{2}}
 -\bigl(P_j^{\delta}\bigr)^{\frac{\sigma+1}{2}}
 -\bigl(P_j^{\omega}\bigr)^{\frac{\sigma+1}{2}}
 +\Xi_j\Bigr\},
\end{aligned}
\label{eq:Vbardot}
\end{equation}
where
\begin{equation}
\Xi_j = \Psi_j
 + \bigl(P_j^{W}\bigr)^{\frac{\sigma+1}{2}}
 + \bigl(P_j^{\delta}\bigr)^{\frac{\sigma+1}{2}}
 + \bigl(P_j^{\omega}\bigr)^{\frac{\sigma+1}{2}} .
\label{eq:Xi}
\end{equation}
The parameter errors were shown bounded in \eqref{eq:Vsol}, so
$\Xi_j$ is bounded: there exist constants $\Upsilon_m,\Upsilon_s>0$
with $0\le\Xi_j\le\Upsilon_j$. Defining
\begin{equation}
\bar\Lambda_1 = \min_{j}\Bigl\{
 \frac{2\bar k_{j1}}{\lambda_{j,\max}},\
 \rho_j\gamma_j^{W},\ \rho_j\gamma_j^{\delta},\
 \rho_j\gamma_j^{\omega}\Bigr\},
\quad
\bar\Lambda_2 = \min_{j}\Bigl\{
 \bar k_{j2}\Bigl(\frac{2}{\lambda_{j,\max}}\Bigr)^{\frac{\sigma+1}{2}},\
 1\Bigr\},
\label{eq:Lambda12}
\end{equation}
and applying Lemma~\ref{lem:powersum} with $p=(\sigma+1)/2\in(0,1)$ to
collect the eight fractional-power terms of \eqref{eq:Vbardot} into a
single power of their sum,
\begin{equation}
\dot{\bar{\mathcal V}}
 \;\le\; -\bar\Lambda_1\bar{\mathcal V}
 -\bar\Lambda_2\bar{\mathcal V}^{\frac{\sigma+1}{2}}
 +\Upsilon ,
\qquad \Upsilon=\Upsilon_m+\Upsilon_s .
\label{eq:Vbardot-final}
\end{equation}
Fix $\theta\in(0,1)$ and define the residual set
\begin{equation}
\mathcal D = \left\{\;\bar{\mathcal V}\;\le\;
 \Bigl(\frac{\Upsilon}{(1-\theta)\bar\Lambda_2}\Bigr)
 ^{\frac{2}{\sigma+1}}\right\}.
\label{eq:residual-set}
\end{equation}
Outside $\mathcal D$ the constant $\Upsilon$ is dominated by a fraction
of the fractional-power term, giving
$\dot{\bar{\mathcal V}}\le-\bar\Lambda_1\bar{\mathcal V}
-\theta\bar\Lambda_2\bar{\mathcal V}^{(\sigma+1)/2}$, which is
\eqref{eq:ftineq} with $\alpha=\bar\Lambda_1$,
$\beta=\theta\bar\Lambda_2$. Lemma~\ref{lem:finitetime}(ii) then gives
reaching of $\mathcal D$ within
\begin{equation}
T_{\mathrm{fin}} \;\le\;
 \frac{2}{\bar\Lambda_1(1-\sigma)}
 \ln\frac{\bar\Lambda_1\bar{\mathcal V}^{\frac{1-\sigma}{2}}(0)
          +\theta\bar\Lambda_2}{\theta\bar\Lambda_2}\,,
\label{eq:Tfin}
\end{equation}
and \eqref{eq:Vsol} guarantees that the trajectory remains bounded
thereafter. This completes the proof.
\end{pf}

\begin{rmk}
\label{rem:gain-conditions}
Both $\bar\Lambda_1$ and $\bar\Lambda_2$ in \eqref{eq:Lambda12} are
positive for any positive definite $K_{j1},K_{j2}$ and any positive
adaptation gains, so no lower bound on the feedback gains is needed for
finite-time reaching; the gains set the size of $\mathcal D$ through
\eqref{eq:residual-set} and the reaching time through \eqref{eq:Tfin}.
The parameter $\theta$ trades the two against each other: taking
$\theta$ near one shrinks $\mathcal D$ and lengthens $T_{\mathrm{fin}}$.
\end{rmk}

\begin{rmk}
\label{rem:preserved}
Neither \eqref{eq:Vbardot-final} nor \eqref{eq:Tfin} refers to the
internal structure of the estimator. The reservoir enters the argument
at exactly two points: the cancellation of
Remark~\ref{rem:cancellation}, which requires only that $X_j$ be
independent of $\hat W_j$, and the constant $\omega_j^{r}$ inside
$\Psi_j$, which sizes $\mathcal D$. The spiking dynamics, the recurrent
connectivity and the fading memory of Section~\ref{ssec:lsm} therefore
affect how large the residual set is, and not whether the finite-time
guarantee holds.
\end{rmk}

\begin{rmk}
\label{rem:tuning}
Theorem~\ref{thm:main} constrains the design parameters only by sign, so
the remaining freedom is used to shape the residual set and the
transient. A workable order is the following.

The filter gains $L_j^{\tau}$ and $L_j^{v}$ are set first. Both filters
are driven by signals received over the delayed channel, so smaller
gains yield smoother $\phi_j$ and $\nu_j$ at the cost of a slower
response; $L_j^{v}$ additionally governs how closely $e_j^{\nu}$
tracks the delay-consistent velocity error \eqref{eq:ejv}.

The weights $\chi_{j1},\chi_{j2}$ then place the operating point on the
position--force trade-off of Remark~\ref{rem:conflict}, and
$\lambda_{j1},\lambda_{j2},\sigma_{j1},\sigma_{j2}$ shape the
finite-time term \eqref{eq:psi}.

The feedback gains follow. Larger $K_{j1}$ increases $\bar\Lambda_1$ and
larger $K_{j2}$ increases $\bar\Lambda_2$, contracting $\mathcal D$ in
\eqref{eq:residual-set} and shortening $T_{\mathrm{fin}}$ in
\eqref{eq:Tfin}. The constants $a_{j1},a_{j2}$ act in opposition:
reducing them removes $(a_{j1}^{2}+a_{j2}^{2})/2$ from $\Psi_j$ in
\eqref{eq:Psi}, but they appear in the denominators of the two
domination terms of \eqref{eq:control-law}, so small values produce a
stiff response to the delay rate and to the reconstruction residual.
Larger $\rho_j$ raises $\bar\Lambda_1$ and accelerates the decay of the
parameter errors, while enlarging $\Psi_j$ through the leakage
contribution.

The estimator parameters have no counterpart in a memoryless design and
are set last. The reservoir size $N$ enters the guarantee only through
$\omega_j^{r}$, by Assumption~\ref{as:approx}, so it is chosen against
measured reconstruction error rather than against a stability condition.
The recurrent matrix $W_j^{\mathrm{rec}}$ is scaled to a spectral radius
below unity, which is what makes the liquid's dependence on its input
history fade. The time constants
$\tau^{\mathrm{mem}},\tau^{\mathrm{syn}},\tau^{\mathrm{flt}}$ set the
depth of that memory: by \eqref{eq:leakysol} the filtered state weights
a spike at lag $r$ by $e^{-r/\tau^{\mathrm{flt}}}$, so retaining a
fraction $\kappa$ of the weight across the full delay window requires
\begin{equation}
\tau^{\mathrm{flt}} \;\gtrsim\;
 \frac{\bar T_{j'}}{\ln(1/\kappa)} .
\label{eq:memory-depth}
\end{equation}
This is the one design rule specific to the reservoir: the memory depth
is matched to the delay bound of Assumption~\ref{as:delay}, not tuned by
trial.
\end{rmk}

\section{Simulation Results}
\label{sec:simulation}

\setlength{\intextsep}{2pt plus 2pt minus 2pt}
\setlength{\textfloatsep}{2pt plus 2pt minus 2pt}
\setlength{\floatsep}{2pt plus 2pt minus 2pt}

In this section, we use MATLAB/Simulink software to validate the effectiveness of our proposed method. 
The sampling rate in simulation is chosen as $0.001$s. The master and slave manipulators are modeled as 
two-link planar rigid robotic manipulators same as \eqref{eq:dyn}, the description of mass inertia matrix 
$M_j$, centripetal-coriolis matrix $C_j$, and gravitational torque $G_j$ can be attained in \cite{spong2020}. 
The vector of friction torque $B_j$ is given as:

\begin{equation}
B_{j}=
{\left[
b_{j1}\dot{q}_{j1}+b_{j2}sign\left(\dot{q}_{j1}\right),\ 
b_{j3}\dot{q}_{j2}+b_{j4}sign\left(\dot{q}_{j2}\right)
\right]}^{T}
\label{eq:friction_torque}
\end{equation}

The parameters of master and slave manipulators are shown in Table~\ref{tab:robot_parameters} and
parameter of $B_{j}$ are chosen as
$b_{m1}=b_{m3}=0.5,\ b_{m2}=b_{m4}=0.2,\ 
b_{s1}=b_{s2}=b_{s3}=b_{s4}=0.3$ and
$g=9.8\ m/s^{2}$.
The initial conditions of joint positions are set as
$q_{m}={\left[\frac{\pi}{12},\frac{\pi}{6}\right]}^{T}$,
${\dot{q}}_{m}={\left[0,0\right]}^{T}$,
$q_{s}={\left[\frac{\pi}{4},\frac{\pi}{6}\right]}^{T}$,
${\dot{q}}_{s}={\left[0,0\right]}^{T}$.
The damping and spring matrices in operator and environment model are
$D_{h}=\mathrm{diag}(0.1,\ 0.1)$,
$S_{h}=\mathrm{diag}(10.0,\ 10.0)$ and
$D_{e}=\mathrm{diag}(0.5,\ 0.5)$,
$S_{e}=\mathrm{diag}(10.0,\ 10.0)$.
The external operator interface torque $F_{h}^{\ast}$, which is applied on
both joints equally, is shown in Fig.~\ref{fig:fig1}, and we assume that it only exists
in the x direction.
\begin{figure}[pos=!htbp]
    \centering
    \includegraphics[width=\linewidth]{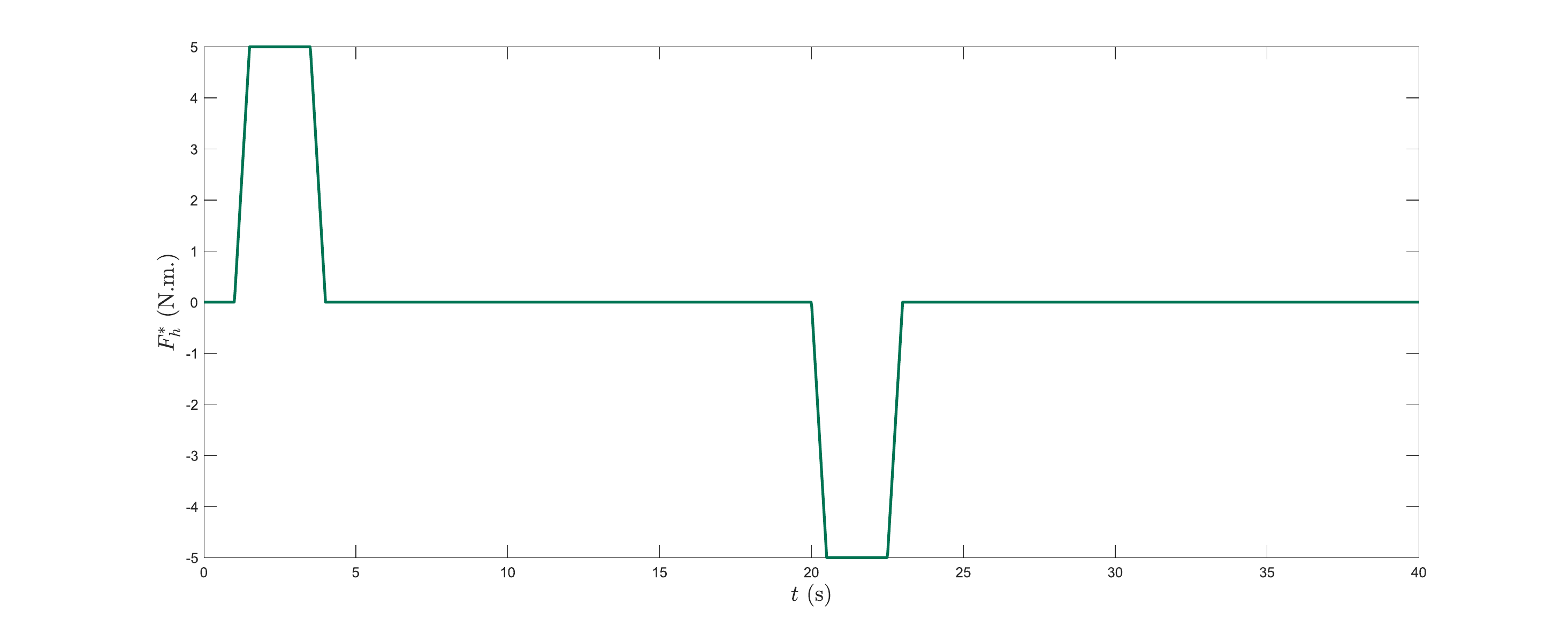}
    \caption{External operator torque $F_h^{*}$.}
    \label{fig:fig1}
\end{figure}
The external torque of the environment is 0 in both directions.
Operator and environments torque $\tau_{h}$ and $\tau_{e}$ are also depicted
in Fig.~\ref{fig:fig3} and Fig.~\ref{fig:fig4}.
\begin{figure}[pos=!htbp]
    \centering
    \includegraphics[width=0.85\linewidth]{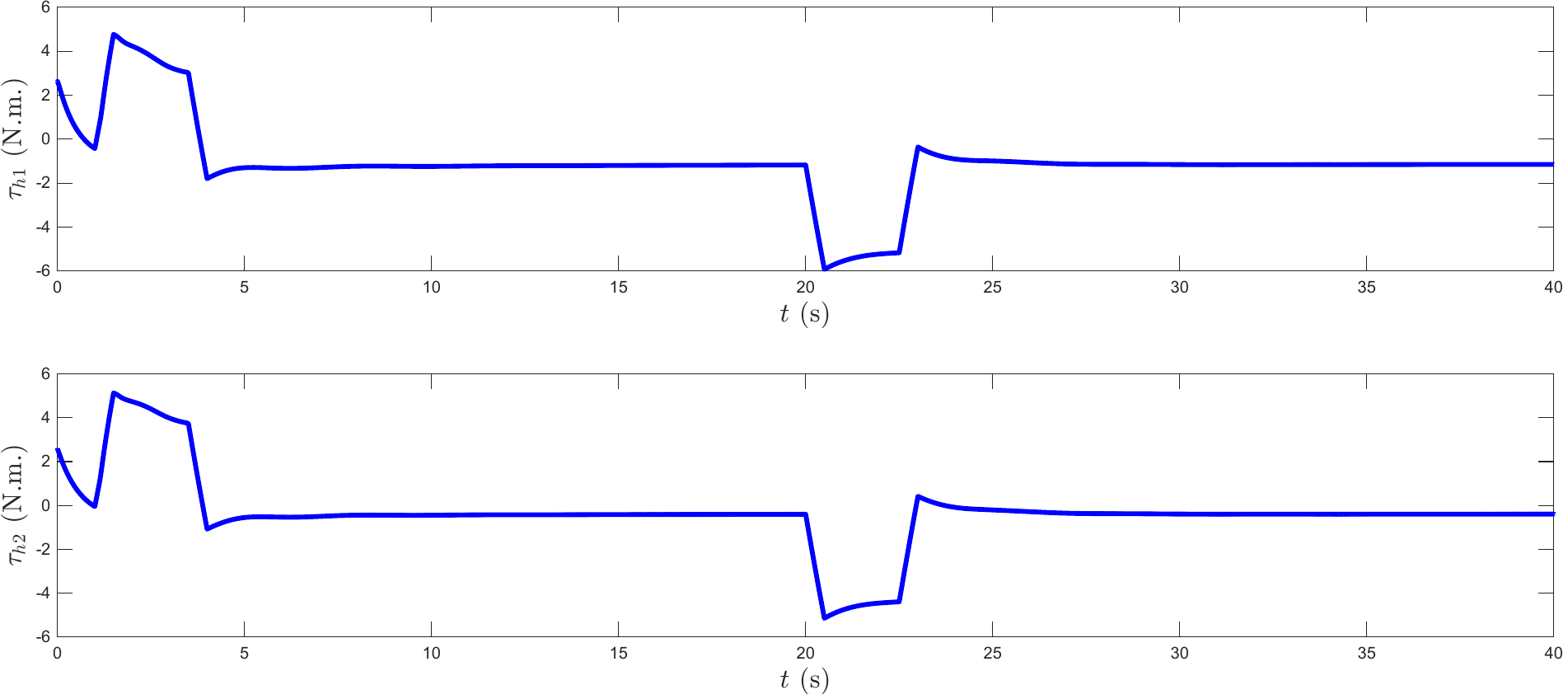}
    \caption{Operator torque $\tau_h$.}
    \label{fig:fig3}
\end{figure}

\begin{figure}[pos=!htbp]
    \centering
    \includegraphics[width=0.85\linewidth]{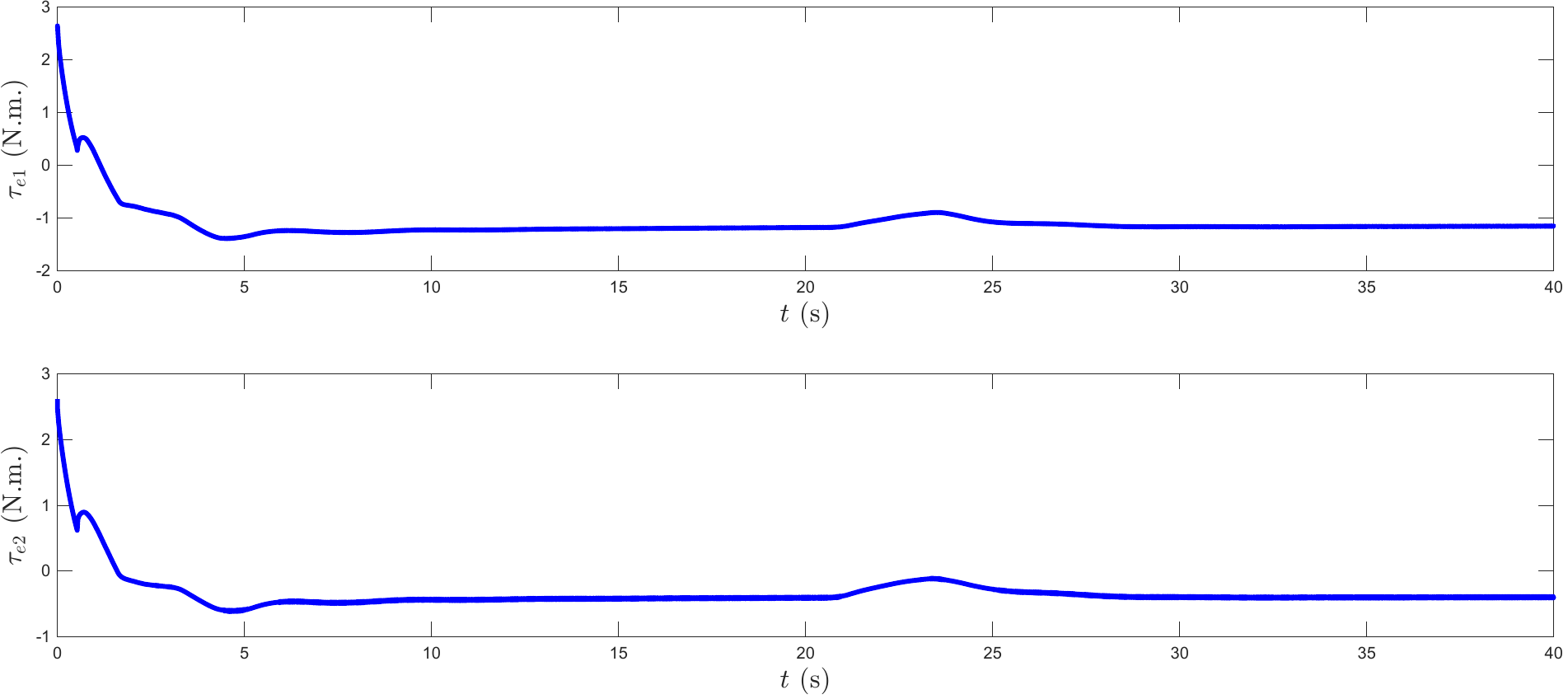}
    \caption{Environment torque $\tau_e$.}
    \label{fig:fig4}
\end{figure}
The master-slave asymmetric time-varying communication
$T_{m}(t),\ T_{s}(t)$ delays are shown in Fig.~\ref{fig:fig2}.
\begin{figure}[pos=!htbp]
    \centering
    \includegraphics[width=0.85\linewidth]{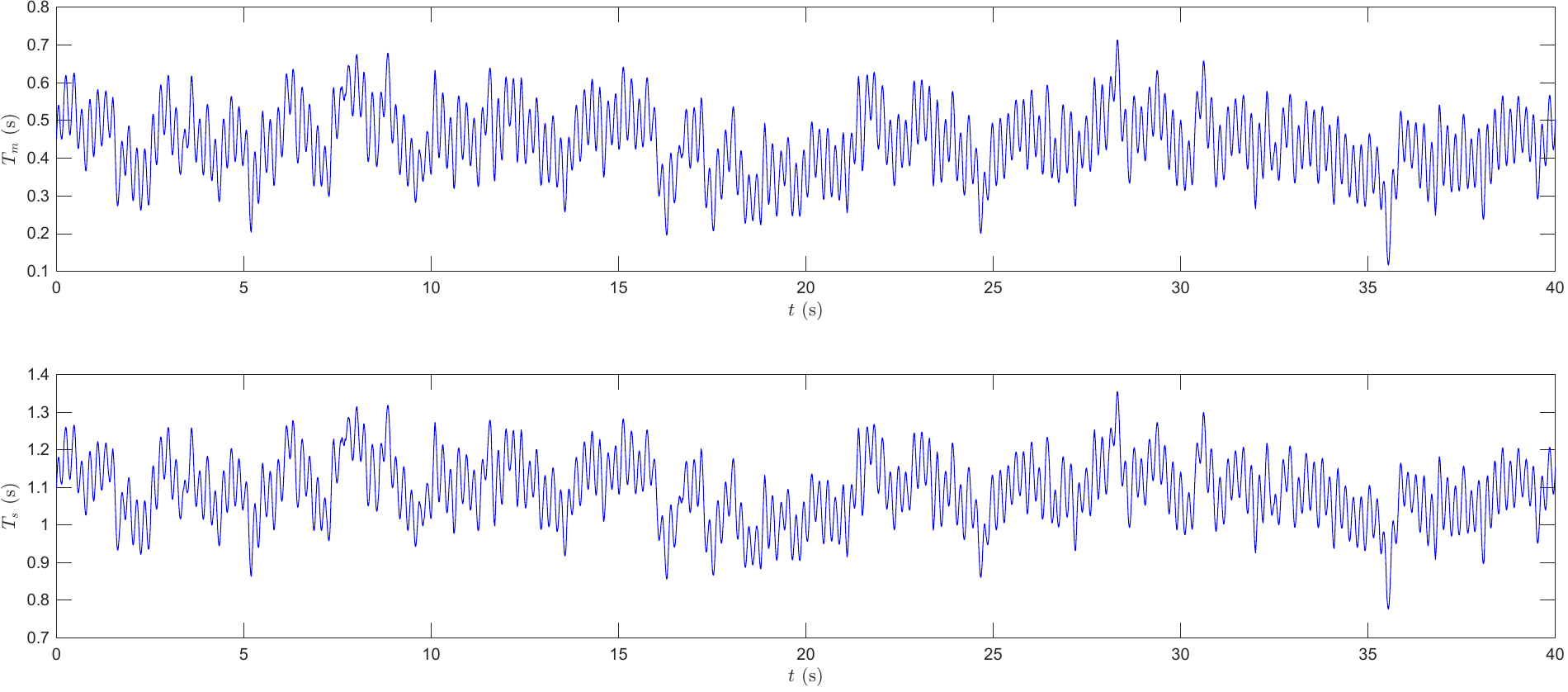}
    \caption{Communication time delays $T_m(t)$, $T_s(t)$.}
    \label{fig:fig2}
\end{figure}
The delay signals are obtained using equation below:
\begin{equation}
T_m(t)=0.45+0.08\sin(30t)+d_r
\label{eq:delay_master}
\end{equation}

\begin{equation}
T_s(t)=1.1+0.07\sin(30t+0.6)+d_r
\label{eq:delay_slave}
\end{equation}

$d_r$ is filtered band-limited white noise. We assume that the uncertain
parts of bilateral teleoperation can be denoted as
$\Delta M_j=0.02\sin(4t)M_j$,
$\Delta C_j=0.02\sin(4t)C_j$,
$\Delta G_j=0.02\sin(4t)G_j$.
The parameter of nominal models $M_j^{0}$, $C_j^{0}$ and $G_j^{0}$ are set
according to Table~\ref{tab:robot_parameters}. The parameters of the proposed controller are chosen in Table~\ref{tab:controller_parameters}.

For both the master and slave manipulators, identical LSM architectures and
design parameters were employed. Each LSM consisted of $N=50$ LIF reservoir
neurons. The membrane and synaptic time constants were selected as
$\tau_m=0.05s$ and $\tau_s=0.01s$, respectively. The firing threshold was
chosen as $V_{th}=0.06V$. The input and recurrent weights were initialized
randomly and recurrent weight matrix was subsequently normalized. The
adaptive readout consisted of a weight matrix initialized to zero and updated
online according \eqref{eq:adapt-W} with adaption gain
$\Gamma^{W}=\mathrm{diag}(0.9,0.9)$. Table~\ref{tab:lsm_parameters} summarizes the qualitative
influence of the main LSM design parameters on the estimator response and
closed-loop control performance.

\begin{table}[pos=!htbp]
\caption{Master and slave robot parameters}
\label{tab:robot_parameters}
\begin{tabular}{ccccc}
\toprule
Parameter & Value & Nominal parameter & Value & Physical description \\
\midrule
$m_{j1}$ & $3.5\,Kg$ & $m^{0}_{j1}$ & $2.0\,Kg$ & Mass of linkage 1 \\
$m_{j2}$ & $2.5\,Kg$ & $m^{0}_{j2}$ & $2.0\,Kg$ & Mass of linkage 2 \\
$l_{j1}$ & $0.3\,m$ & $l^{0}_{j1}$ & $0.3\,m$ & Length of linkage 1 \\
$l_{j2}$ & $0.35\,m$ & $l^{0}_{j2}$ & $0.3\,m$ & Length of linkage 2 \\
\bottomrule
\end{tabular}
\end{table}

\begin{table}[pos=!htbp]
\caption{Controller parameters}
\label{tab:controller_parameters}
\begin{tabular}{cccc}
\toprule
Master & Value & Slave & Value \\
\midrule
$L^{v}_{m1}$ & $\mathrm{diag}(0.2,0.2)$ & $L^{v}_{s1}$ & $\mathrm{diag}(0.2,0.2)$ \\
$L^{v}_{m2}$ & $\mathrm{diag}(0.05,0.05)$ & $L^{v}_{s2}$ & $\mathrm{diag}(0.05,0.05)$ \\
$\chi_{m1}$ & $\mathrm{diag}(1,1)$ & $\chi_{s1}$ & $\mathrm{diag}(1,1)$ \\
$\chi_{m2}$ & $\mathrm{diag}(0.02,0.02)$ & $\chi_{s2}$ & $\mathrm{diag}(0.02,0.02)$ \\
$\psi_{m1}$ & $\mathrm{diag}(1.8,1.8)$ & $\psi_{s1}$ & $\mathrm{diag}(1.8,1.8)$ \\
$\psi_{m2}$ & $\mathrm{diag}(0.7,0.7)$ & $\psi_{s2}$ & $\mathrm{diag}(0.7,0.7)$ \\
$\sigma_{m1}$ & $1.1$ & $\sigma_{s1}$ & $1.1$ \\
$\sigma_{m2}$ & $0.9$ & $\sigma_{s2}$ & $0.9$ \\
$\sigma$ & $0.8$ & $-$ & $-$ \\
$a_{m1}$ & $0.5$ & $a_{s1}$ & $0.5$ \\
$a_{m2}$ & $0.5$ & $a_{s2}$ & $0.5$ \\
$K_{m1}$ & $\mathrm{diag}(12,12)$ & $K_{s1}$ & $\mathrm{diag}(17,17)$ \\
$K_{m2}$ & $\mathrm{diag}(12,12)$ & $K_{s2}$ & $\mathrm{diag}(17,17)$ \\
\bottomrule
\end{tabular}
\end{table}

\begin{table}[pos=!htbp]
\centering
\caption{Effect of LSM parameters}
\label{tab:lsm_parameters}
\renewcommand{\arraystretch}{1.2}

\begin{tabular}{@{}
    >{\centering\arraybackslash}p{0.10\linewidth}
    >{\centering\arraybackslash}p{0.14\linewidth}
    >{\raggedright\arraybackslash}p{0.24\linewidth}
    >{\raggedright\arraybackslash}p{0.23\linewidth}
    >{\raggedright\arraybackslash}p{0.23\linewidth}
@{}}

\toprule
\textbf{Parameter} &
\textbf{Value} &
\textbf{Role} &
\textbf{If increased} &
\textbf{If decreased} \\
\midrule

$N$ &
$50$ &
Liquid dimensionality and approximation capacity &
Better representation, higher computation &
Lower cost, weaker approximation \\

$\tau^{mem}$ &
$0.01\,s$ &
Membrane integration time &
Smoother output but slower neuron response &
Faster response, more spike fluctuations \\

$\tau^{flt}$ &
$0.05\,s$ &
Fading-memory time of liquid state &
Longer memory, smoother estimate, more lag &
Faster estimate, shorter memory \\

$V_{th}$ &
$0.05\,V$ &
Spike-generation threshold &
Sparser firing, lower sensitivity &
Denser firing, possible output jitter \\

$\Gamma$ &
$\mathrm{diag}[0.9,0.9]$ &
Adaptive readout learning rate &
Faster adaptation, possible oscillation &
Slower but smoother adaptation \\

\bottomrule
\end{tabular}
\end{table}
```

\subsection{Stability and tracking verification}

The first set of simulations evaluates the stability and position/force tracking
performance of the proposed LSM-based adaptive teleoperation controller. The
corresponding results are presented in Figs.~\ref{fig:fig5}--\ref{fig:fig15}. Fig.~\ref{fig:fig5} shows the joint
position responses of the master and slave manipulators
\begin{figure}[pos=!htbp]
    \centering
    \includegraphics[width=0.85\linewidth]{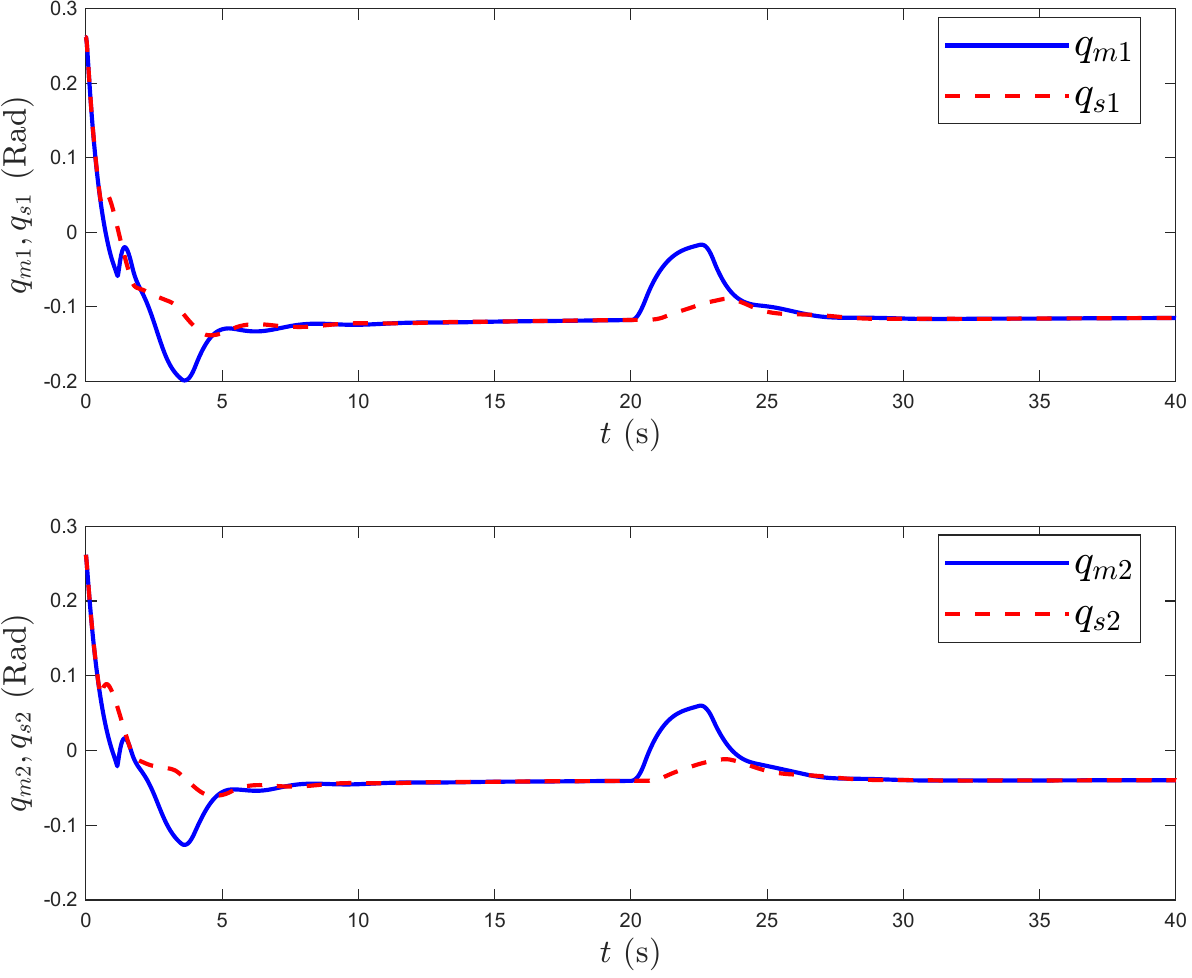}
    \caption{Joint position of master and slave manipulators $q_m,q_s$.}
    \label{fig:fig5}
\end{figure}
and Fig.~\ref{fig:fig6} the
human--environment interaction torques.
\begin{figure}[pos=!htbp]
    \centering
    \includegraphics[width=0.85\linewidth]{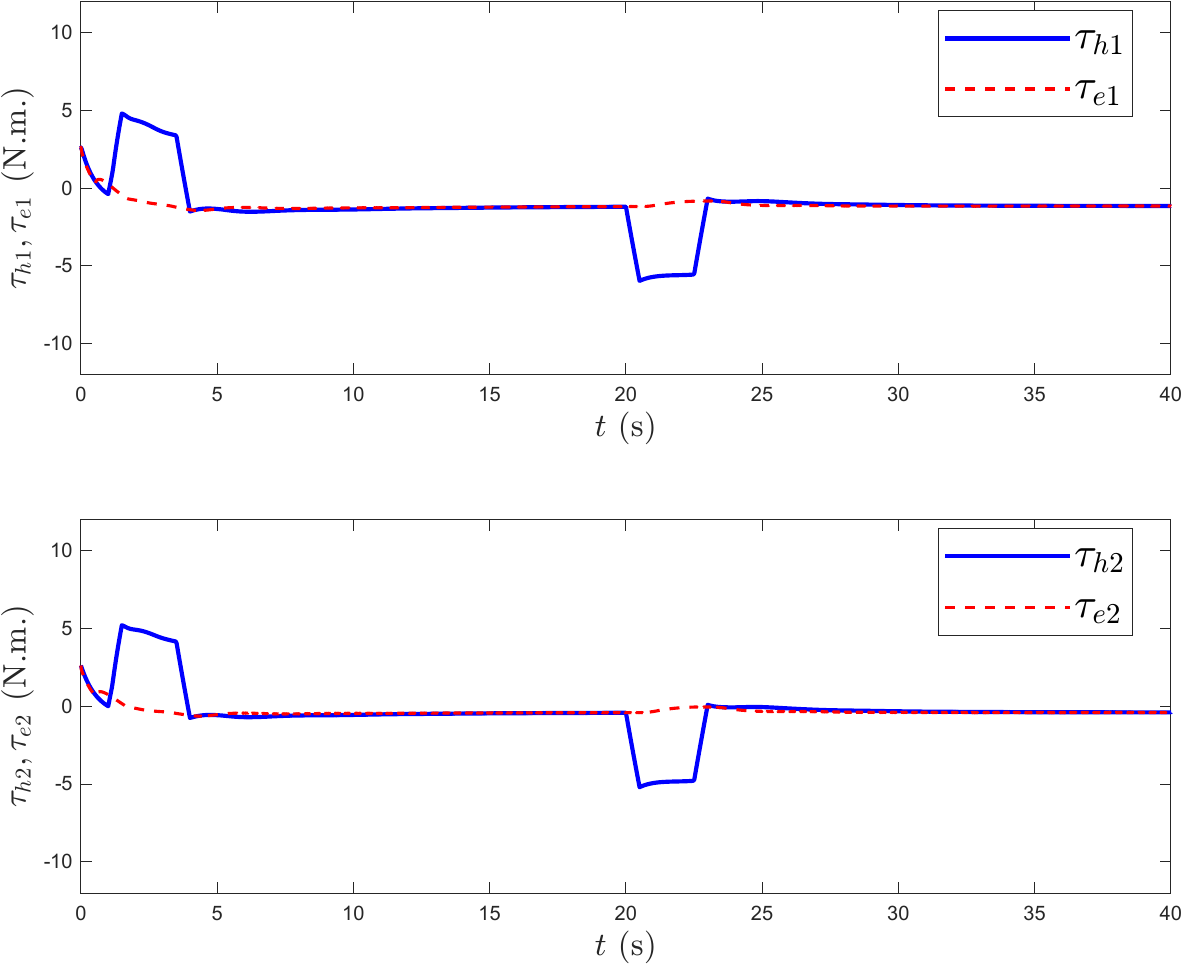}
    \caption{Human--environment interaction torques of master and slave manipulators $\tau_m,\tau_s$.}
    \label{fig:fig6}
\end{figure}
The position synchronization and force
tracking errors respectively shown in Figs.~\ref{fig:fig7} and~\ref{fig:fig8} decrease rapidly and remain
within a small neighborhood of zero, confirming accurate bilateral tracking
performance.
\begin{figure}[pos=!htbp]
    \centering
    \includegraphics[width=\linewidth]{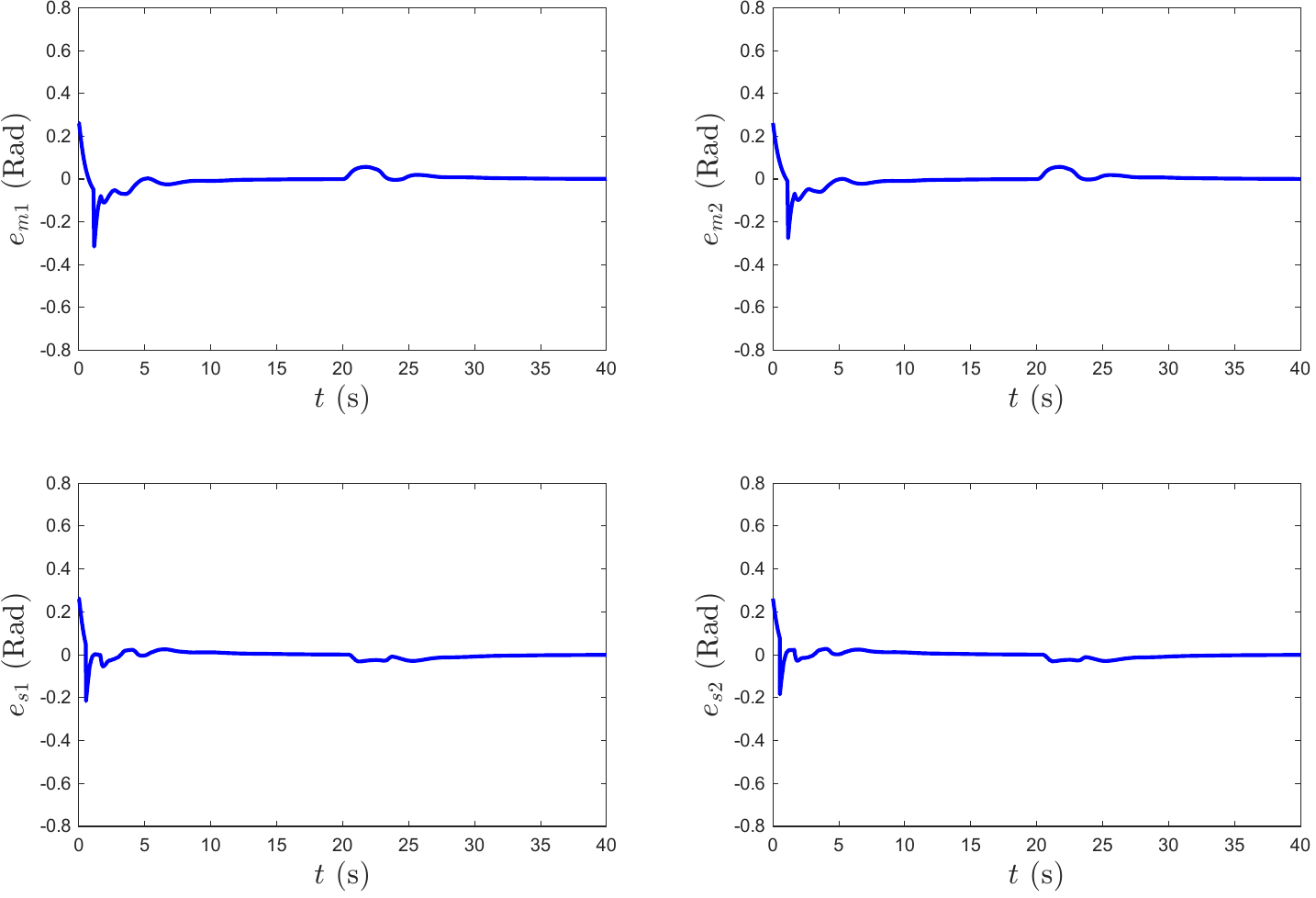}
    \caption{Position tracking error of master and slave manipulators.}
    \label{fig:fig7}
\end{figure}
\begin{figure}[pos=!htbp]
    \centering
    \includegraphics[width=\linewidth]{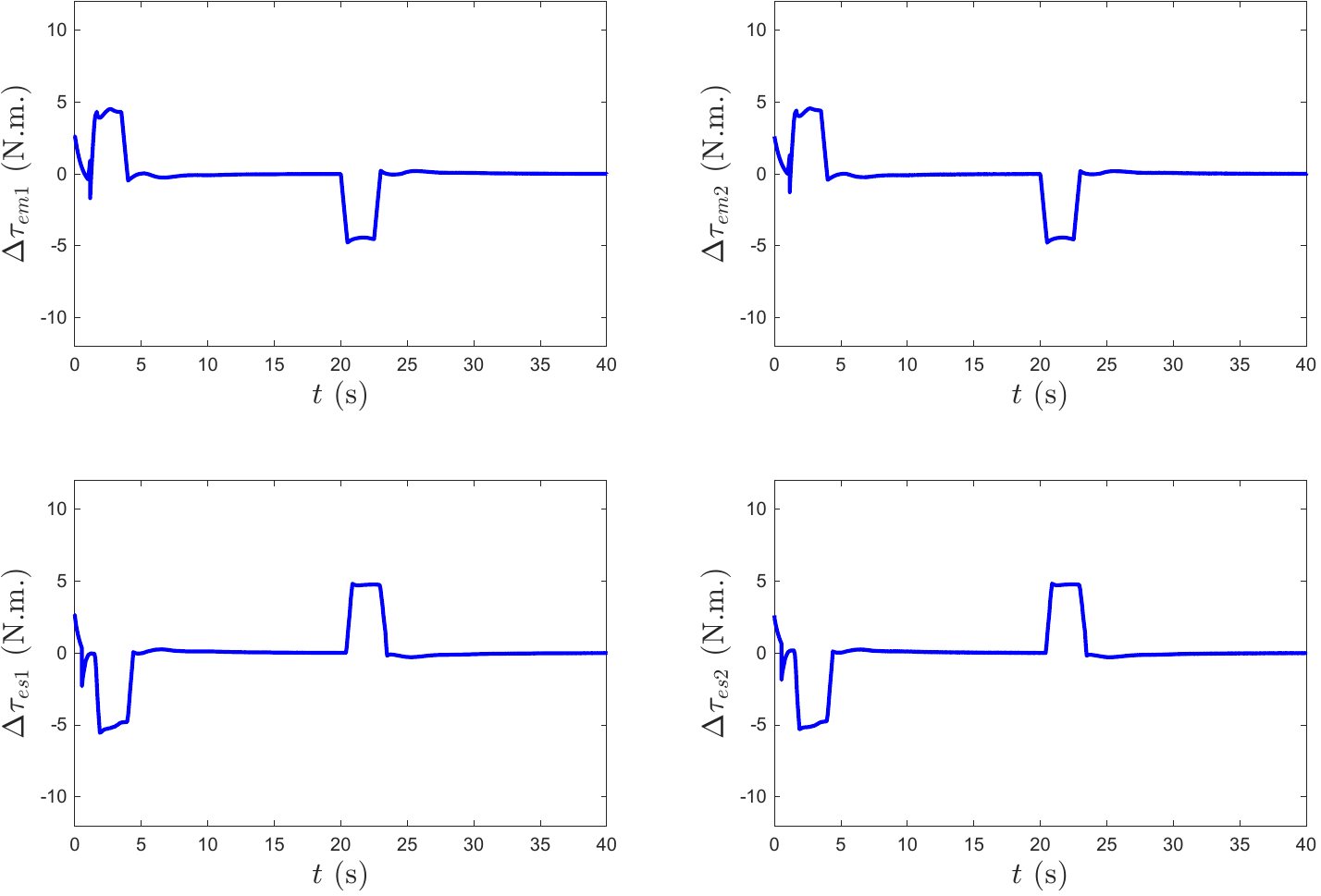}
    \caption{Force tracking error of master and slave manipulators.}
    \label{fig:fig8}
\end{figure}

The behavior of the LSM uncertainty estimators is further examined in Figs.~\ref{fig:fig9} and~\ref{fig:fig10}, which indicate the estimated uncertainties of master and slave
manipulators respectively.
\begin{figure}[pos=!htbp]
    \centering
    \includegraphics[width=0.85\linewidth]{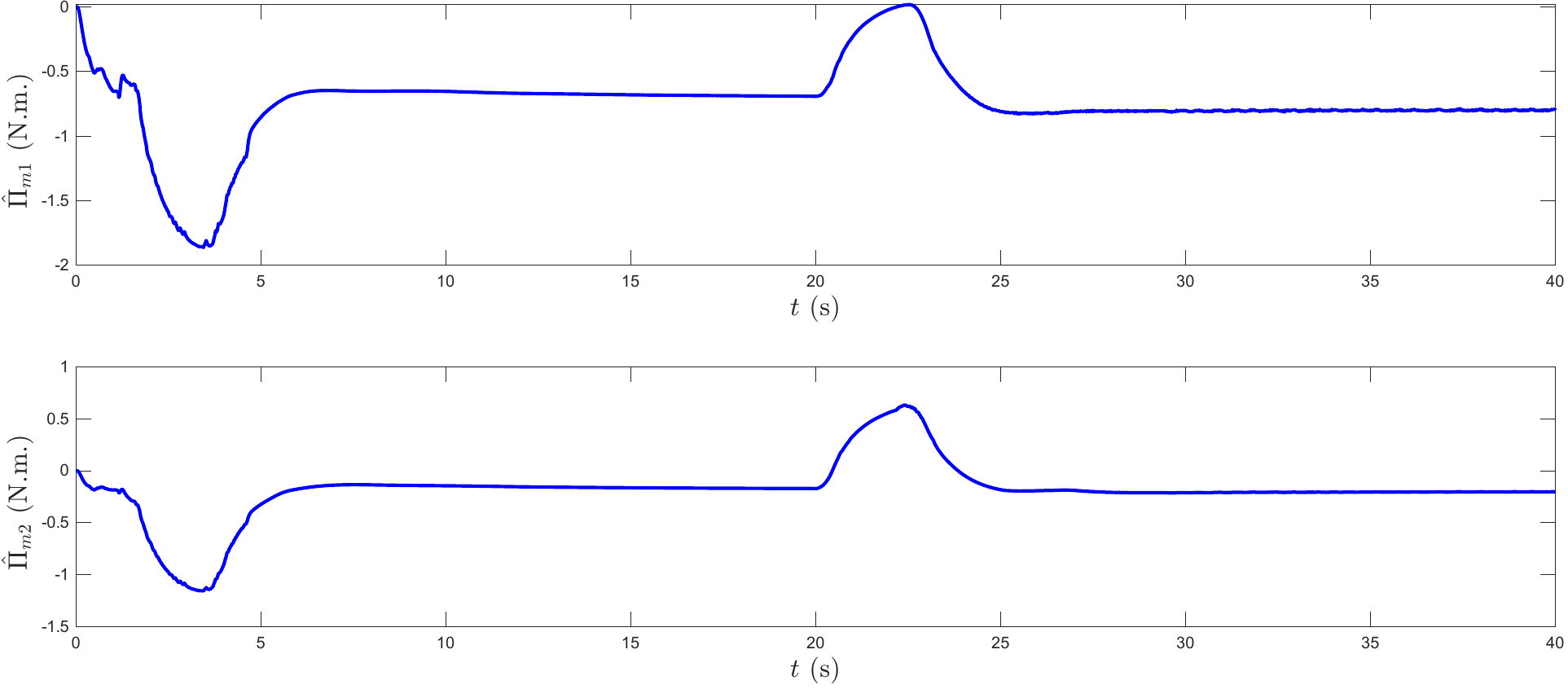}
    \caption{Estimated uncertainties of master manipulator.}
    \label{fig:fig9}
\end{figure}
\begin{figure}[pos=!htbp]
    \centering
    \includegraphics[width=0.85\linewidth]{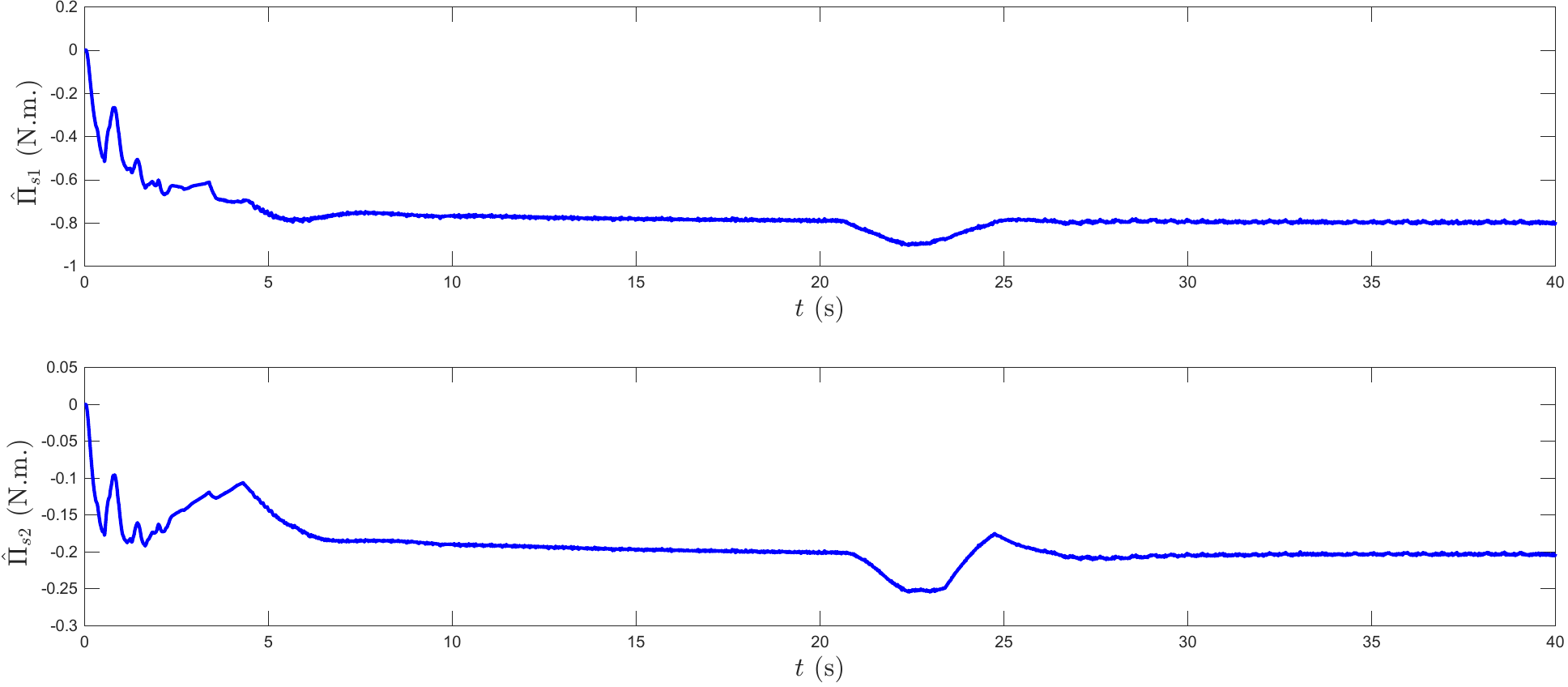}
    \caption{Estimated uncertainties of slave manipulator.}
    \label{fig:fig10}
\end{figure}
The estimated uncertainty torques for the two joints
remain bounded throughout the simulation and vary according to changes in
robot motion, external interaction, and communication conditions. This
indicates that the adaptive LSM readout continuously adjusts the uncertainty
compensation based on the temporal information contained in the liquid states.
Fig.~\ref{fig:fig11} presents the number of active neurons in the master $N_m$ and slave
$N_s$ reservoirs per sample, showing that the neural activity changes with
operating conditions rather than remaining permanently inactive or fully
saturated.
\begin{figure}[pos=!htbp]
    \centering
    \includegraphics[width=0.85\linewidth]{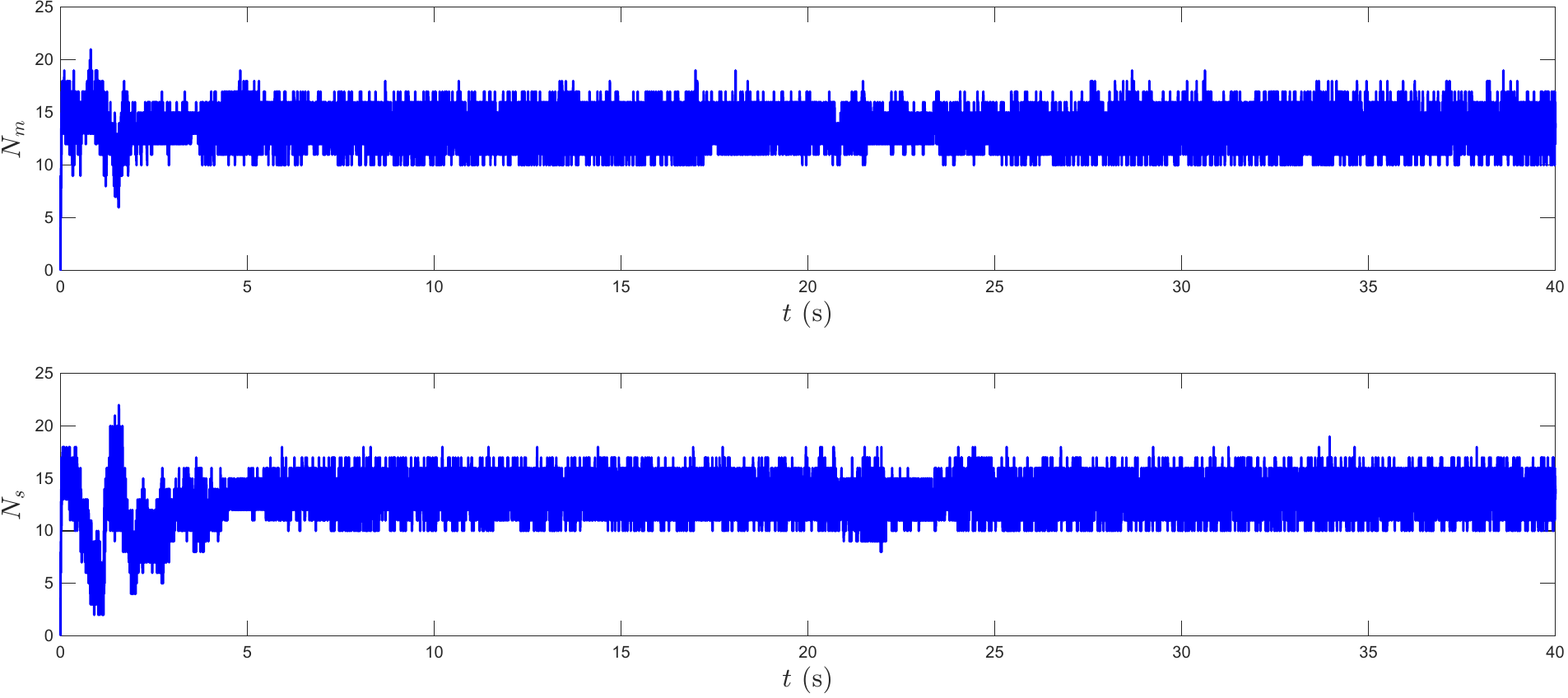}
    \caption{Number of active neurons in master and slave reservoirs $N_m,N_s$.}
    \label{fig:fig11}
\end{figure}

The adaptive readout weights are represented by their norms in Fig.~\ref{fig:fig12}, together
with the remaining adaptive variables in Figs.~\ref{fig:fig13} and~\ref{fig:fig14}.
\begin{figure}[pos=!htbp]
    \centering
    \includegraphics[width=0.85\linewidth]{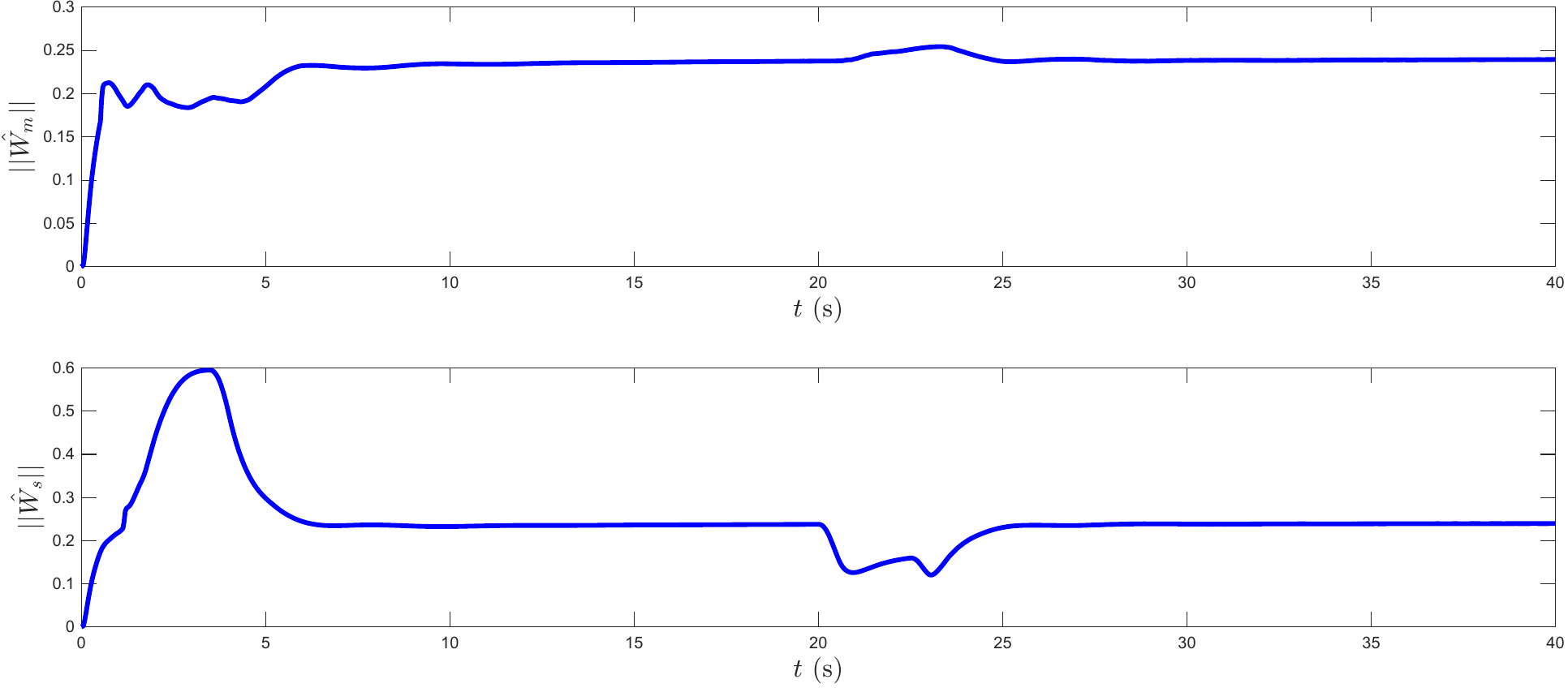}
    \caption{Norm values of adaptive readout layer weights $\hat{W_m},\hat{W_s}$.}
    \label{fig:fig12}
\end{figure}
\begin{figure}[pos=!htbp]
    \centering
    \includegraphics[width=0.85\linewidth]{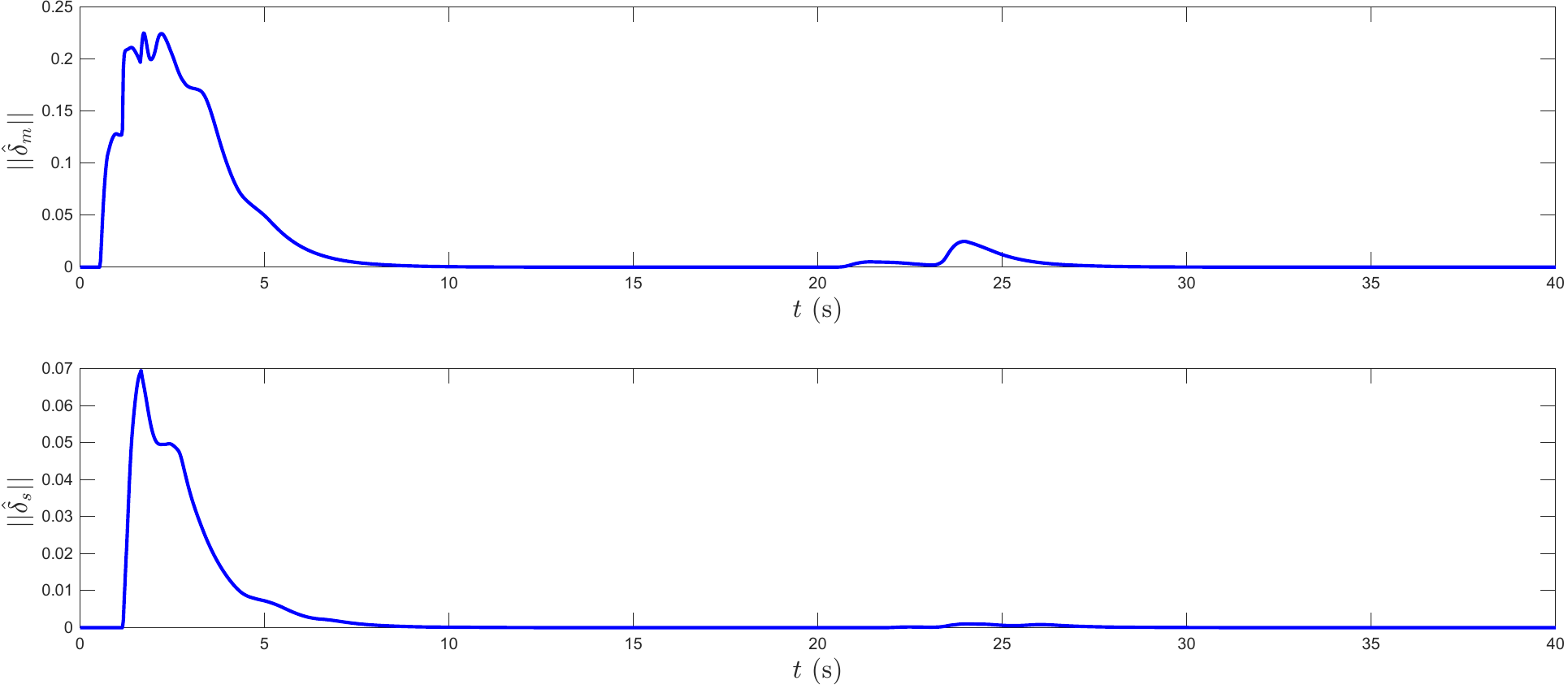}
    \caption{Adaptive estimates of the delay-rate bounds $\hat{\delta}_{j'}$.}
    \label{fig:fig13}
\end{figure}
\begin{figure}[pos=!htbp]
    \centering
    \includegraphics[width=0.85\linewidth]{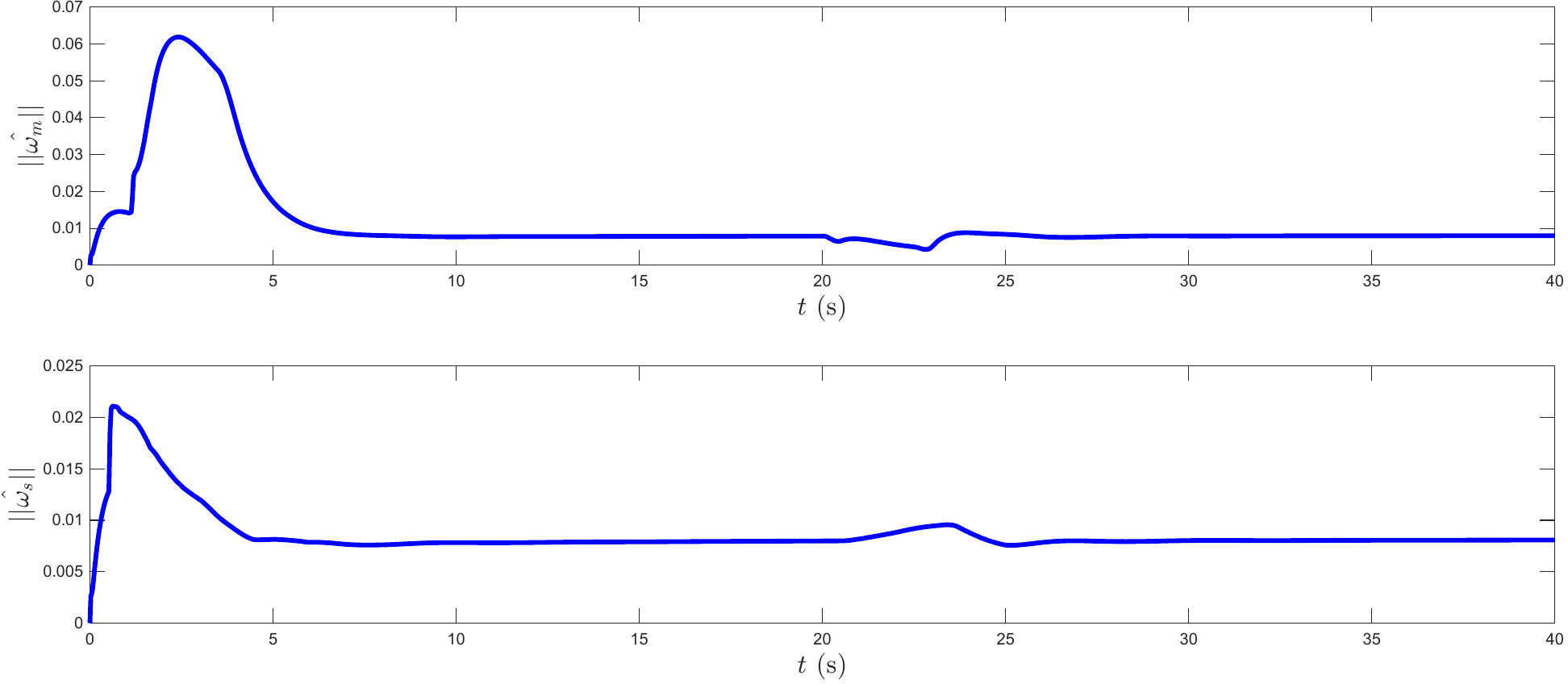}
    \caption{Adaptive parameters $\hat{\omega}_m,\hat{\omega}_s$.}
    \label{fig:fig14}
\end{figure}
All adaptive
signals remain bounded during the simulation. Fig.~\ref{fig:fig15} presents the norms of
the norms of the auxiliary variables $\zeta_m$ and $\zeta_s$ which decrease from their initial
values and remain close to zero, supporting the expected closed-loop
convergence behavior.
\begin{figure}[pos=!htbp]
    \centering
    \includegraphics[width=0.85\linewidth]{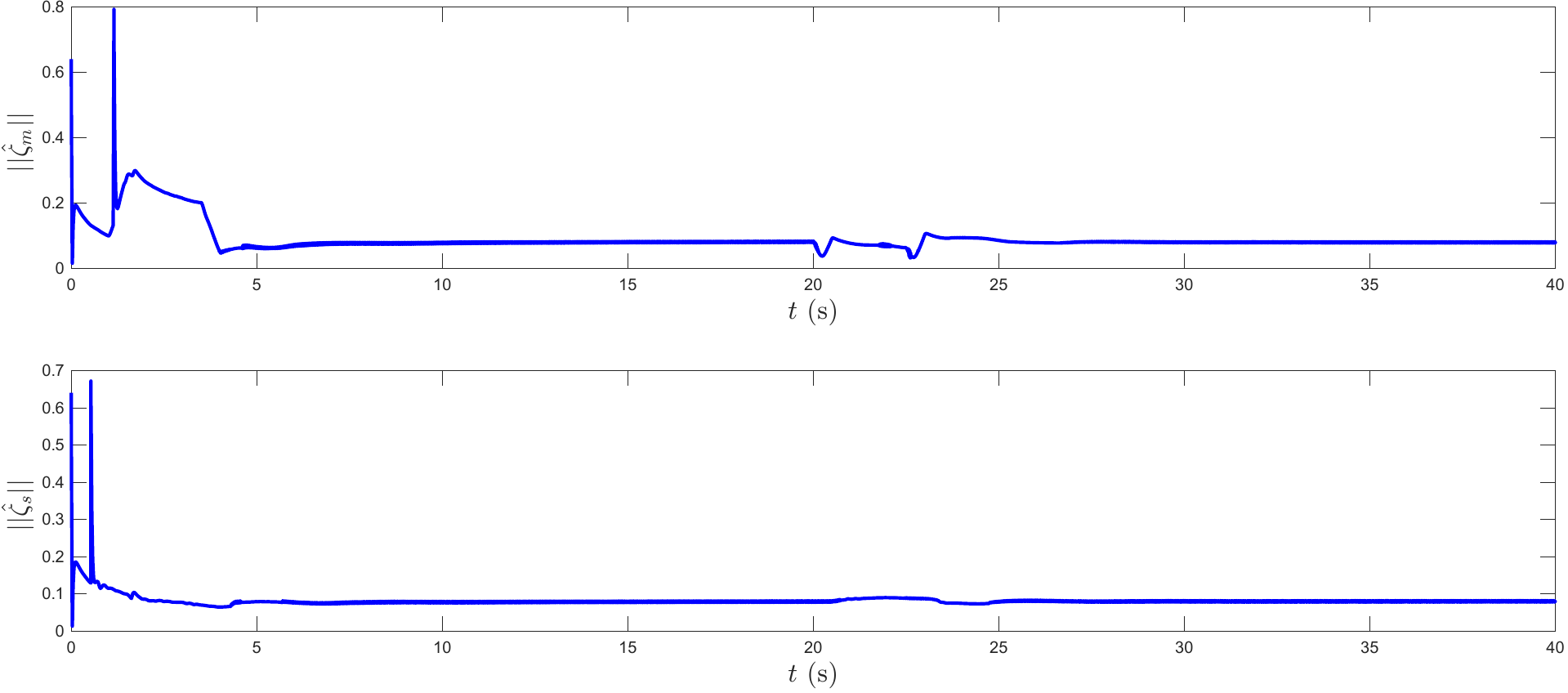}
    \caption{Norm values of $\zeta_m$ , $\zeta_s$.}
    \label{fig:fig15}
\end{figure}
Finally, the master and slave control torques shown in
Fig. 16 remain bounded and do not exhibit sustained high-frequency
oscillations.
\begin{figure}[pos=!htbp]
    \centering
    \includegraphics[width=0.85\linewidth]{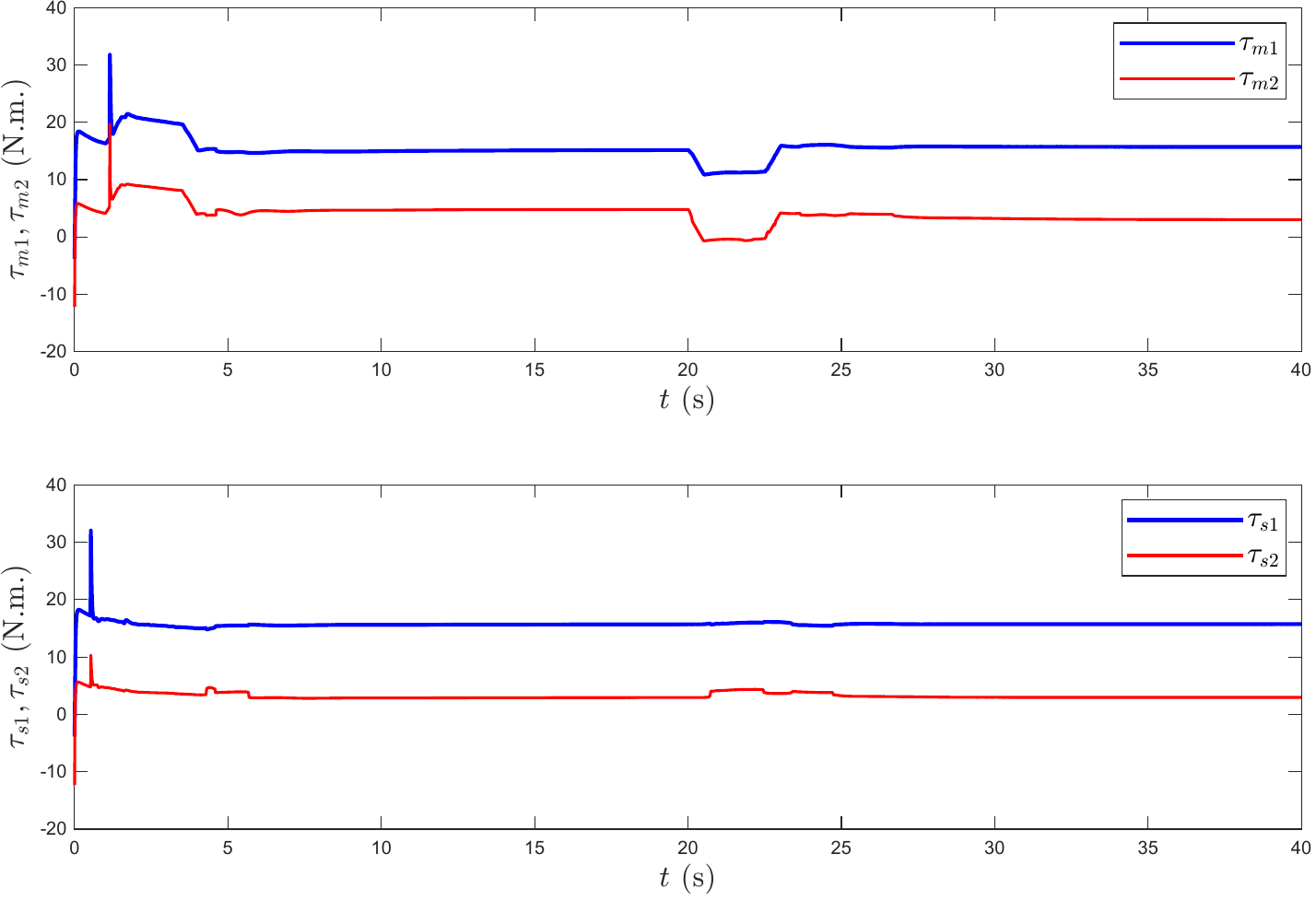}
    \caption{Master and slave manipulators input torques.}
    \label{fig:fig16}
\end{figure}
The results demonstrate stable bilateral operation, satisfactory
position/force tracking, bounded adaptive behavior, and effective online
uncertainty compensation under time-varying delays and uncertain robot
dynamics.

\subsection{Comparison with Other Control Methods}

To evaluate the effectiveness of the proposed LSM-based controller, it is
compared with the RBF-based adaptive method introduced by \cite{ID5} under identical simulation conditions. The comparison focuses on
steady-state tracking accuracy, control effort, and computational efficiency.
\cite{ID5} uses an adaptive RBFNN to estimate and compensate the
unknown robot dynamics. It further combines position/force error filters,
velocity filters, and finite-time auxiliary variables to achieve bilateral
tracking under time-varying delays and uncertainties.

The position and force errors of the proposed method and \cite{ID5}
are shown in Figs.~\ref{fig:fig17} and~\ref{fig:fig18} respectively.
\begin{figure}[pos=!htbp]
    \centering
    \begin{subfigure}{0.48\linewidth}
        \centering
        \includegraphics[width=\linewidth]{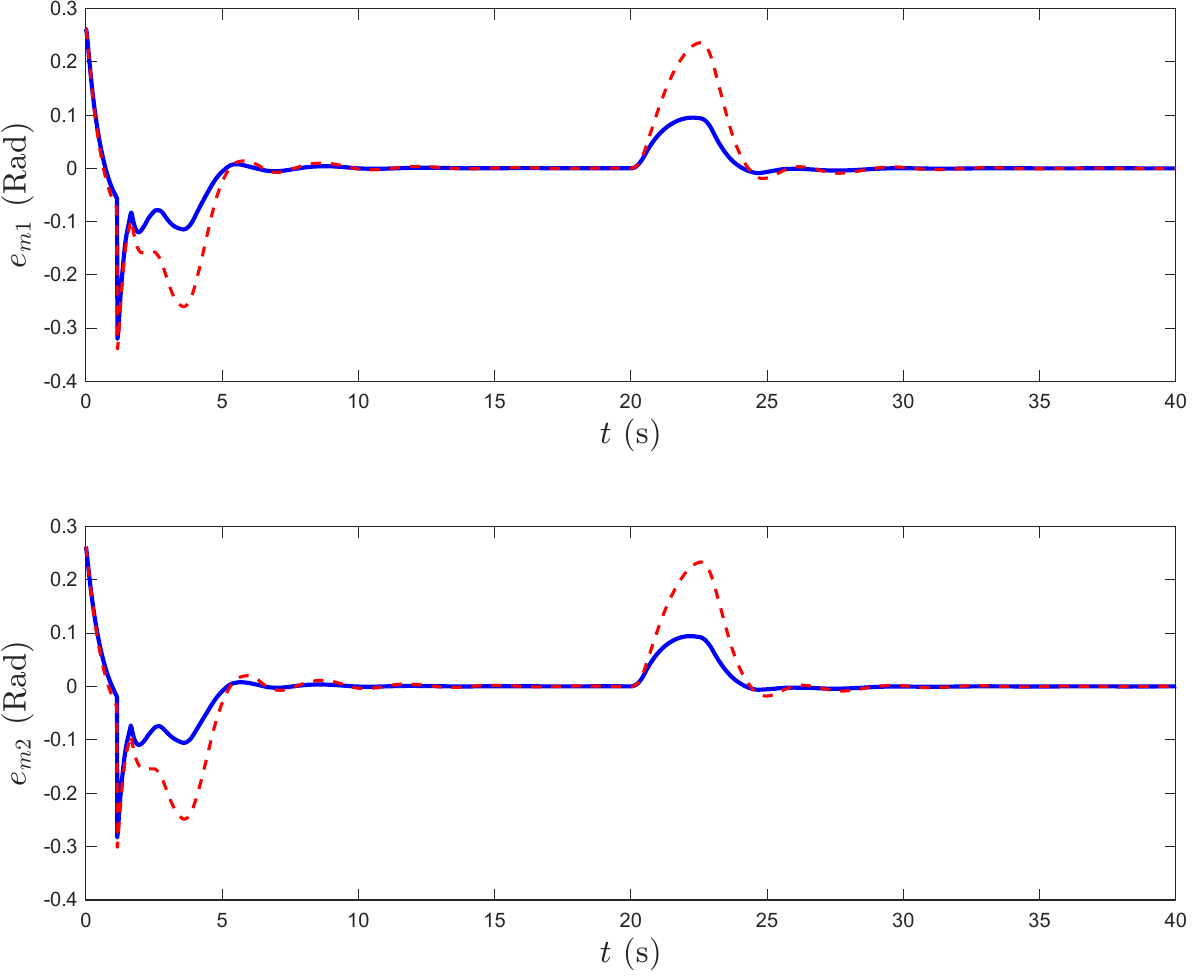}
        \caption{}
        \label{fig:fig17a}
    \end{subfigure}
    \hfill
    \begin{subfigure}{0.48\linewidth}
        \centering
        \includegraphics[width=\linewidth]{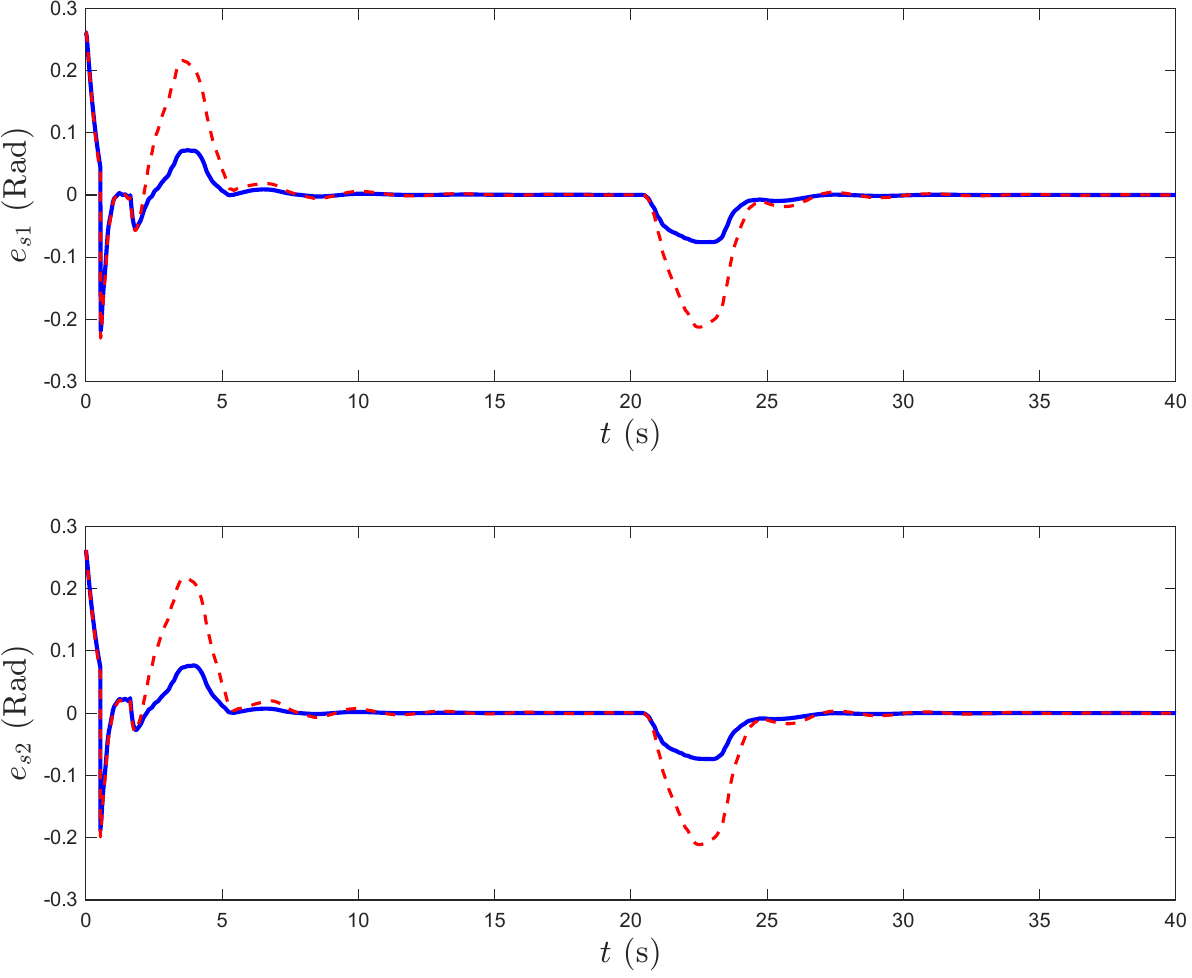}
        \caption{}
        \label{fig:fig17b}
    \end{subfigure}
    \caption{Comparison of position tracking performance with \cite{ID5}.}
    \label{fig:fig17}
\end{figure}
\begin{figure}[pos=!htbp]
    \centering
    \begin{subfigure}{0.48\linewidth}
        \centering
        \includegraphics[width=\linewidth]{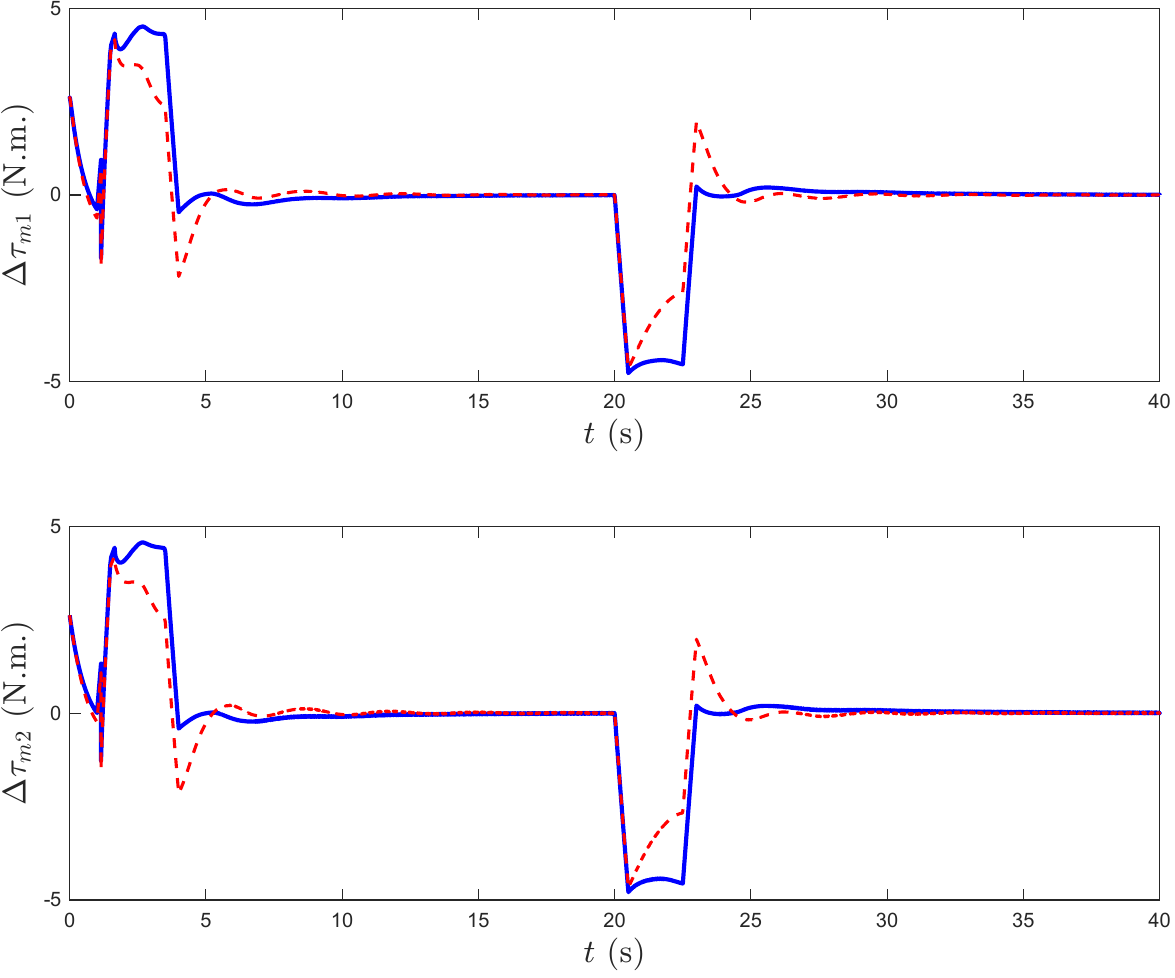}
        \caption{}
        \label{fig:fig18a}
    \end{subfigure}
    \hfill
    \begin{subfigure}{0.48\linewidth}
        \centering
        \includegraphics[width=\linewidth]{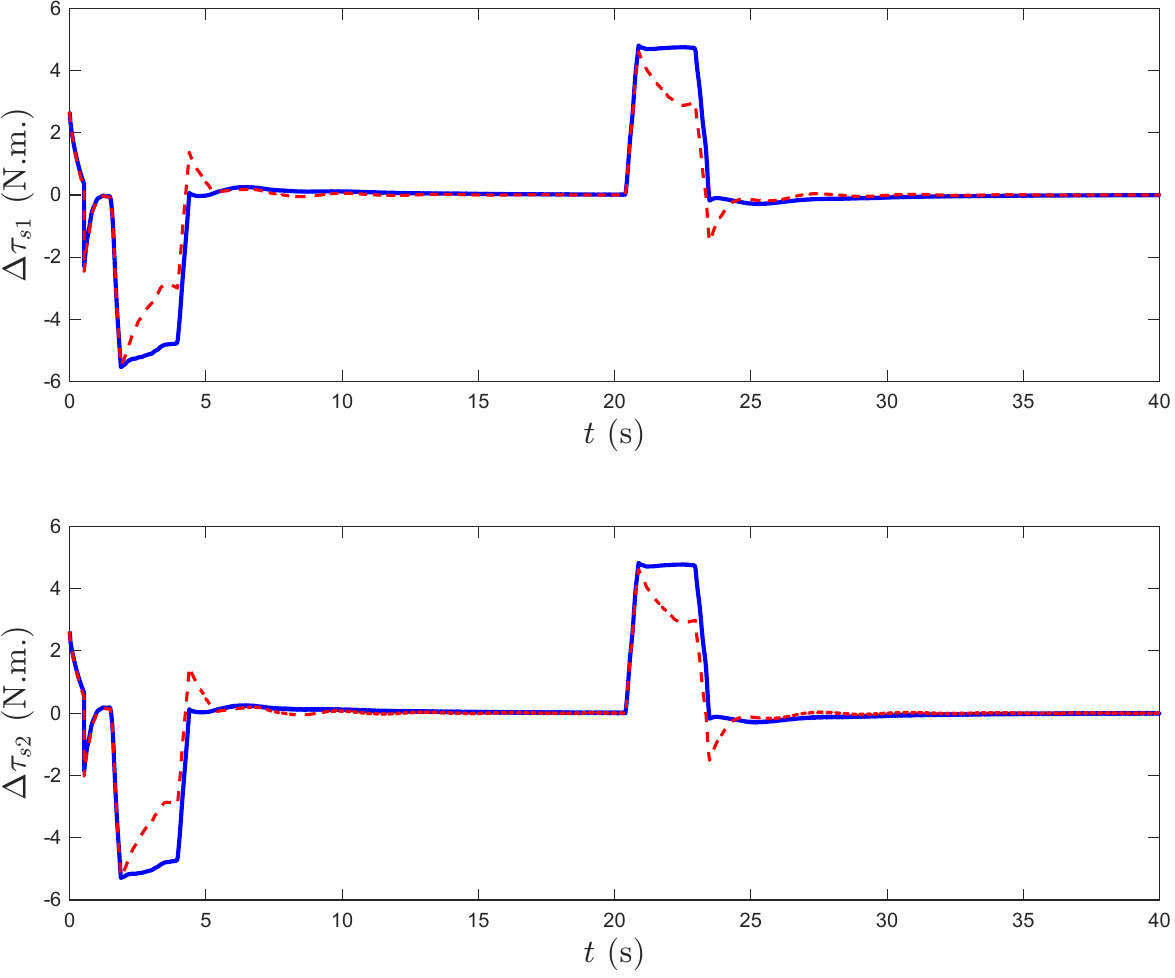}
        \caption{}
        \label{fig:fig18b}
    \end{subfigure}
    \caption{Comparison of force tracking performance with \cite{ID5}.}
    \label{fig:fig18}
\end{figure}
It can be observed from Fig.~\ref{fig:fig17}
that while in both methods error converges to a small neighborhood around zero,
the proposed method exhibits faster error reduction. The steady-state position
and force tracking performances are evaluated using the RMSE over 10-20s,
30-40s, and the entire simulation interval. As summarized in Table~\ref{tab:position_rmse} and
Table~\ref{tab:force_rmse}, the proposed LSM method achieves considerably lower position and force
RMSE values compared to \cite{ID5}, demonstrating improved
steady-state synchronization and force tracking. Because the input and
recurrent reservoir weights are randomly initialized, each simulation may
produce slightly different results. Therefore, to provide a more reliable
evaluation of the proposed controller, the results in Tables~\ref{tab:position_rmse} and~\ref{tab:force_rmse} are
reported as the mean and standard deviation over 10 simulations performed
under identical conditions. The standard deviations further indicate that the
obtained results remain consistent across different random initializations and
do not vary significantly between runs. The corresponding control torques are
shown in Fig.~\ref{fig:fig19}.
\begin{figure}[pos=!htbp]
    \centering
    \begin{subfigure}{0.48\linewidth}
        \centering
        \includegraphics[width=\linewidth]{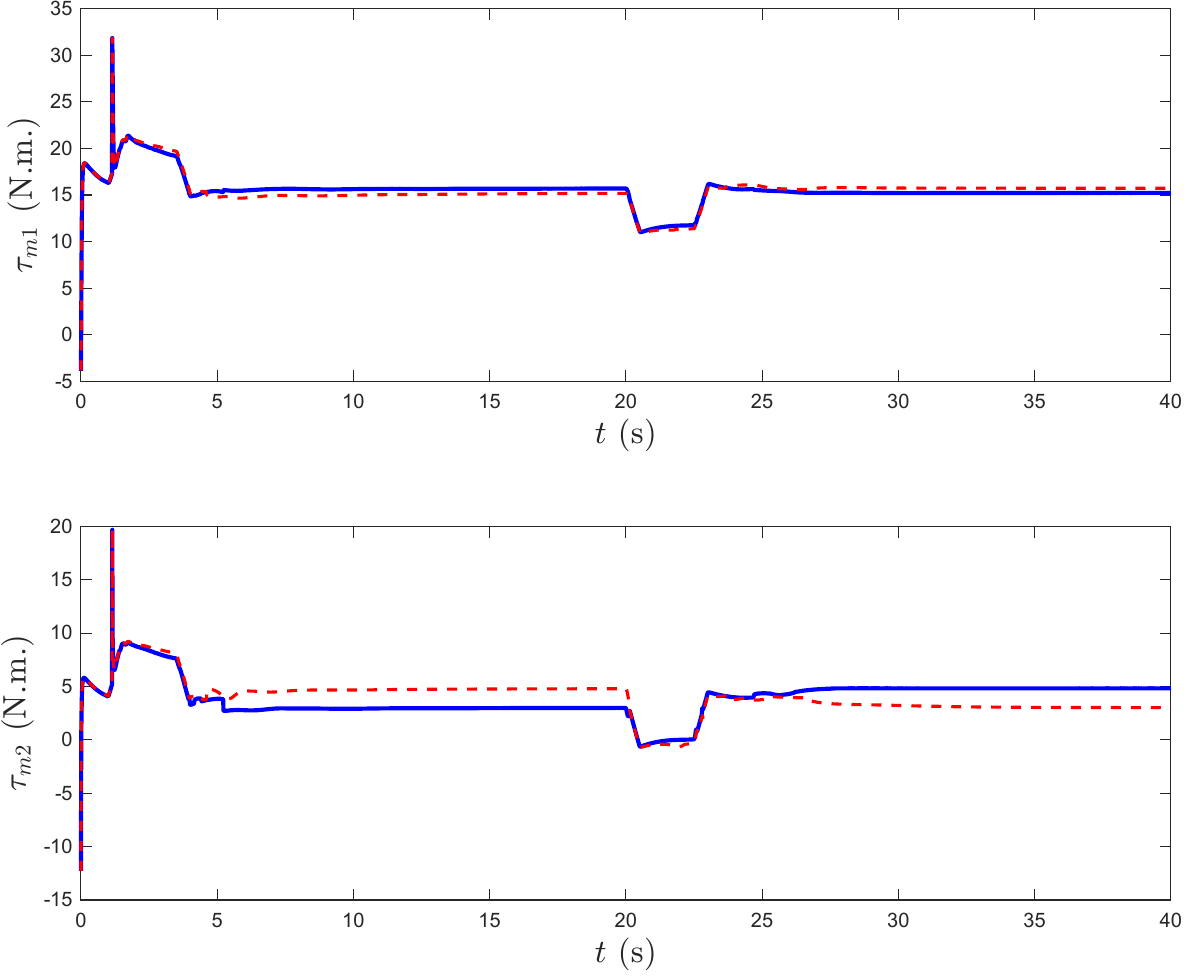}
        \caption{}
        \label{fig:fig19a}
    \end{subfigure}
    \hfill
    \begin{subfigure}{0.48\linewidth}
        \centering
        \includegraphics[width=\linewidth]{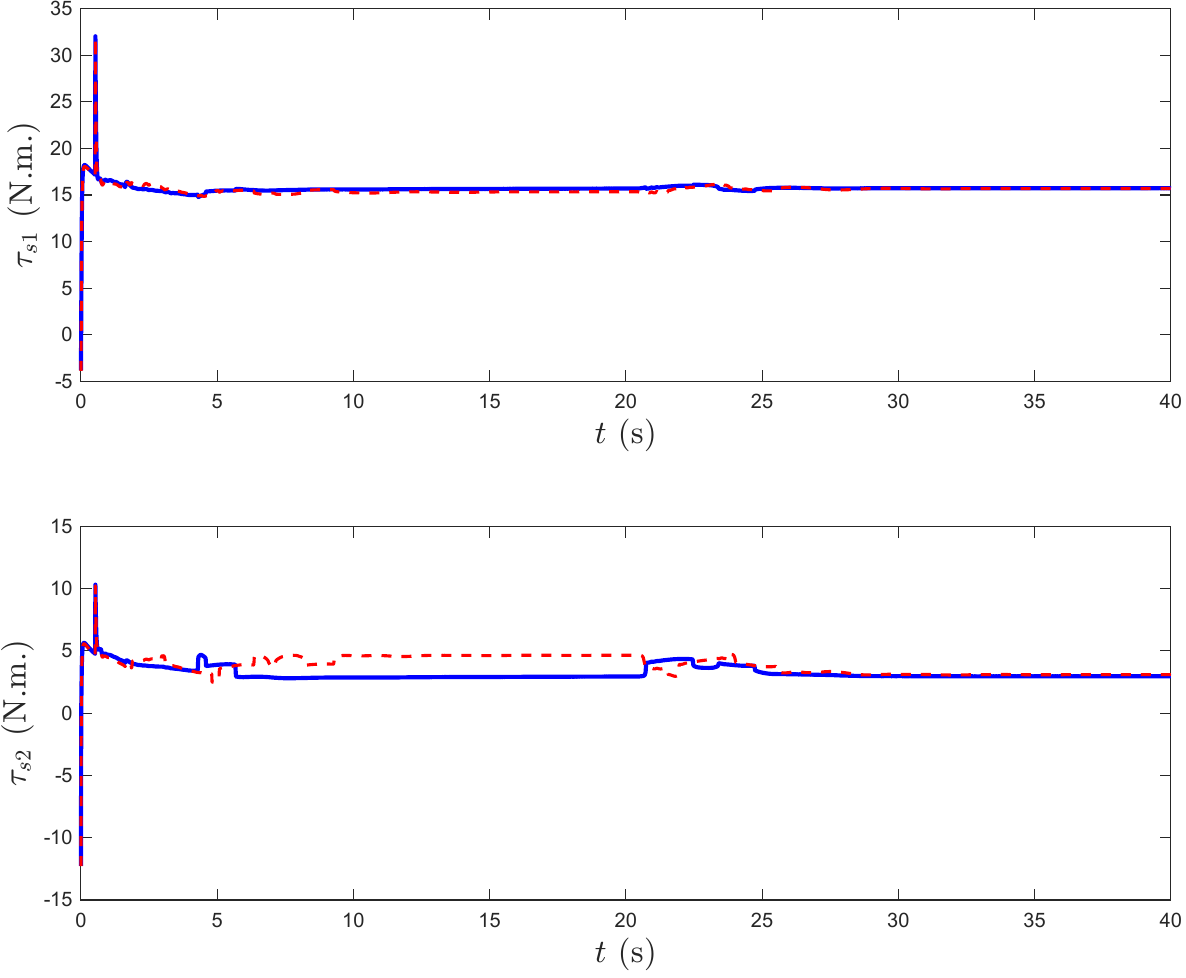}
        \caption{}
        \label{fig:fig19b}
    \end{subfigure}
    \caption{Comparison of input torques with \cite{ID5}.}
    \label{fig:fig19}
\end{figure}
Although both methods achieve satisfactory tracking
performance, the LSM-based controller produces lower joint torques, indicating
that the improved tracking accuracy is obtained with reduced control effort.
Finally, the computational requirements of the two neural approximation
schemes are compared to assess their suitability for online implementation.
All simulations are performed using the same computer, MATLAB/Simulink
configuration and conditions. The simulations are executed on a system
equipped with an Intel Core 17-10870H processor operating at 2.20 GHz, 32 GB
of RAM using MATLAB/Simulink version 2023b Each controller is simulated ten
times under identical conditions. The average execution time is then
calculated. The mean execution time of the proposed LSM-based controller is
2.7205 s, whereas \cite{ID5} requires 3.2803 s. The LSM runs
faster mainly because it relies on simple matrix operations and spike-based
updates, while the RBF network repeatedly evaluates more computationally expensive Gaussian exponential functions. In
addition, only a subset of LSM neurons is typically active at each instant.
Therefore, the proposed LSM achieves superior tracking accuracy with lower
control effort and reduced computational time, indicating greater potential
for online implementation.
\begin{table}[pos=!htbp]
\caption{Comparison of position tracking error's RMSE with \cite{ID5}}
\label{tab:position_rmse}
\begin{tabular}{ccccc}
\toprule
RMSE ($10^{-4}$ Rad) & During & Proposed & RBF Method \\
\midrule
$e_{m1}$ & $10s-20s$ & $5.36 \pm 0.04$ & $14.34$ \\
& $30s-40s$ & $2.24 \pm 0.04$ & $7.35$ \\
& $0-40s$ & $43.78 \pm 0.10$ & $63.54$ \\
\midrule
$e_{m2}$ & $10s-20s$ & $4.07 \pm 0.04$ & $12.67$ \\
& $30s-40s$ & $2.28 \pm 0.04$ & $7.41$ \\
& $0-40s$ & $41.67 \pm 0.10$ & $63.69$ \\
\midrule
$e_{s1}$ & $10s-20s$ & $6.11 \pm 0.03$ & $11.12$ \\
& $30s-40s$ & $1.89 \pm 0.04$ & $6.03$ \\
& $0-40s$ & $19.54 \pm 0.10$ & $40.74$ \\
\midrule
$e_{s2}$ & $10s-20s$ & $4.32 \pm 0.06$ & $9.67$ \\
& $30s-40s$ & $1.12 \pm 0.06$ & $5.89$ \\
& $0-40s$ & $19.27 \pm 0.10$ & $40.82$ \\
\bottomrule
\end{tabular}

\vspace{1mm}

\noindent{\footnotesize mean $\pm$ stddev., $N=10$\par}
\end{table}
\begin{table}[pos=!htbp]
\caption{Comparison of force tracking error's RMSE with \cite{ID5}}
\label{tab:force_rmse}
\begin{tabular}{ccccc}
\toprule
RMSE ($10^{-2}$ N.m) & During & Proposed & RBF Method \\
\midrule
$\Delta\tau_{m1}$ & $10s-20s$ & $0.74 \pm 0.02$ & $2.19$ \\
& $30s-40s$ & $0.42 \pm 0.01$ & $1.97$ \\
& $0-40s$ & $11.11 \pm 0.10$ & $10.57$ \\
\midrule
$\Delta\tau_{m2}$ & $10s-20s$ & $1.14 \pm 0.03$ & $5.51$ \\
& $30s-40s$ & $1.75 \pm 0.04$ & $5.46$ \\
& $0-40s$ & $11.12 \pm 0.20$ & $10.58$ \\
\midrule
$\Delta\tau_{s1}$ & $10s-20s$ & $0.53 \pm 0.02$ & $1.72$ \\
& $30s-40s$ & $0.31 \pm 0.02$ & $1.34$ \\
& $0-40s$ & $11.17 \pm 0.01$ & $12.11$ \\
\midrule
$\Delta\tau_{s2}$ & $10s-20s$ & $0.77 \pm 0.02$ & $3.56$ \\
& $30s-40s$ & $0.67 \pm 0.02$ & $3.32$ \\
& $0-40s$ & $11.19 \pm 0.10$ & $12.16$ \\
\bottomrule
\end{tabular}

\vspace{1mm}

\noindent{\footnotesize mean $\pm$ stddev., $N=10$\par}
\end{table}
\subsection{Temporal-Memory Evaluation}
\label{ssec:tempmemeval}

To further investigate the temporal estimation capability of the proposed LSM, an additional simulation
is conducted using a bounded history-dependent viscoelastic environment. Following the generalized
Maxwell model introduced in \cite{zhang2007} for representing the relaxation and hysteretic behavior of soft
biological tissues, the environment is modified to include internal dynamic states. Such a model provides
a realistic interaction scenario for applications involving soft tissues, compliant materials, and
deformable objects, where the interaction force depends not only on the instantaneous deformation
but also on its previous history. The generalized Maxwell formulation in [1] consists of an equilibrium
spring connected in parallel with several Maxwell branches having different relaxation time scales.

For the considered two-DOF teleoperation system, a two-branch joint-space representation is adopted
as

\begin{equation}
\tau_e = F_e^{*} + D_0\dot{q}_s + K_{\infty}q_s + z_1 + z_2 ,
\label{eq:maxwell_environment}
\end{equation}

where $K_{\infty}=6I$ denotes the relaxed environment stiffness and $z_1,z_2$ represent the internal viscoelastic
torque states. Their dynamics are defined as

\begin{equation}
\dot{z}_1 = K_1\dot{q}_s - \frac{1}{\tau_1}z_1,
\qquad
\dot{z}_2 = K_2\dot{q}_s - \frac{1}{\tau_2}z_2,
\label{eq:maxwell_states}
\end{equation}
where $K_1=2.5I$ and $K_2=1.5I$ determine the contribution of the two viscoelastic branches, while $\tau_1=0.1s$ and $\tau_2=0.5s$ determine their relaxation time scales. Consequently, identical instantaneous values of $q_s$ and $\dot{q}_s$ may correspond to different environment torques depending on the preceding interaction trajectory. This provides a suitable scenario for evaluating the fading-memory capability of the LSM.

To ensure a fair comparison, no modification is made to either the proposed LSM-based controller or the \cite{ID5} compared to simulations before. only the environment model is replaced by the introduced history-dependent viscoelastic model. Therefore, the obtained performance differences directly reflect the ability of the uncertainty approximators to cope with temporally dependent interaction dynamics.

The comparison focuses on steady-state tracking accuracy and control effort. The position and force errors
of the proposed method and \cite{ID5} in the new environment are shown in Figs.~\ref{fig:fig20} and~\ref{fig:fig21}
respectively.
\begin{figure}[pos=!htbp]
    \centering
    \includegraphics[width=\linewidth]{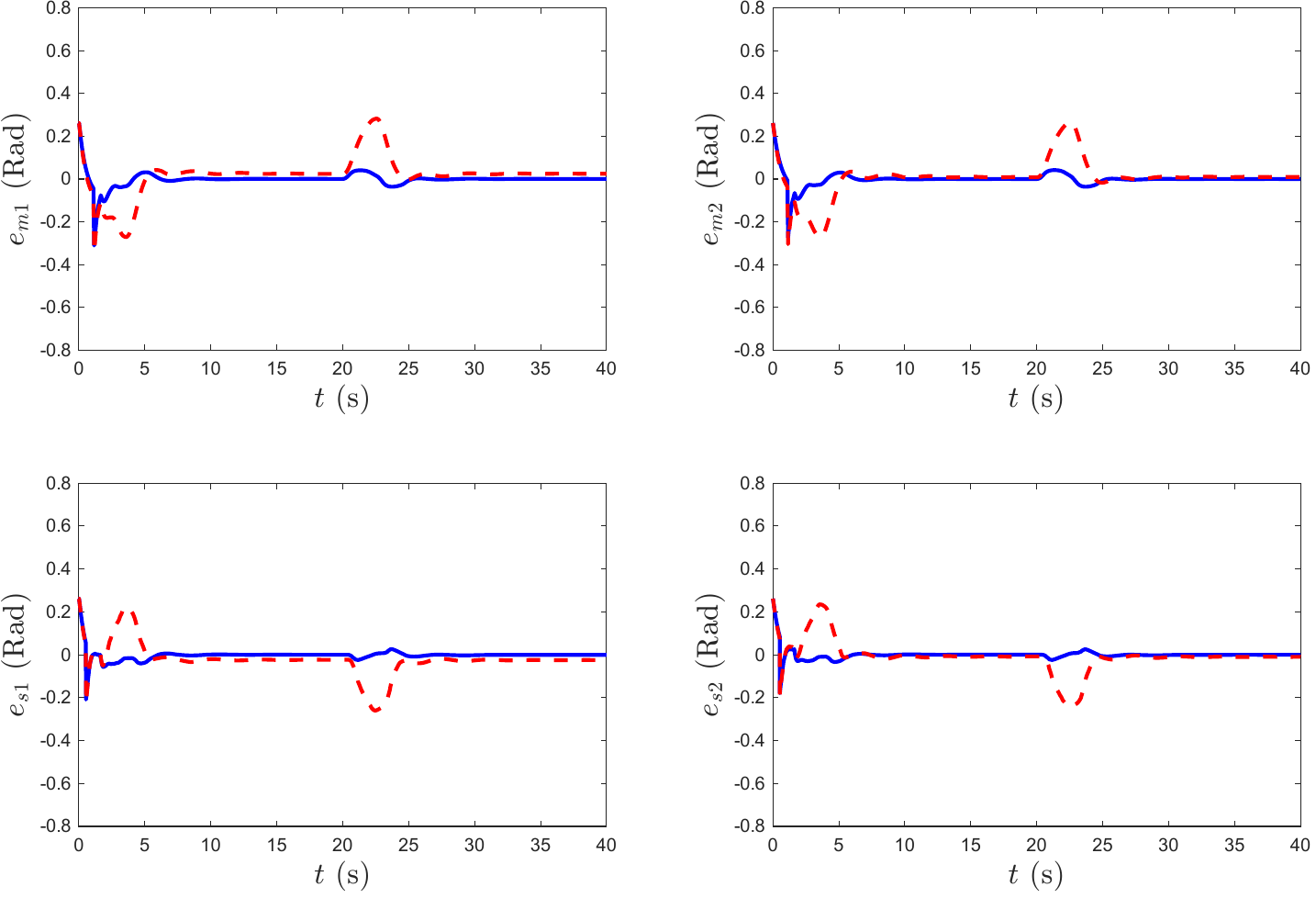}
    \caption{Position tracking error of master and slave manipulators in the new environment.}
    \label{fig:fig20}
\end{figure}
\begin{figure}[pos=!htbp]
    \centering
    \includegraphics[width=\linewidth]{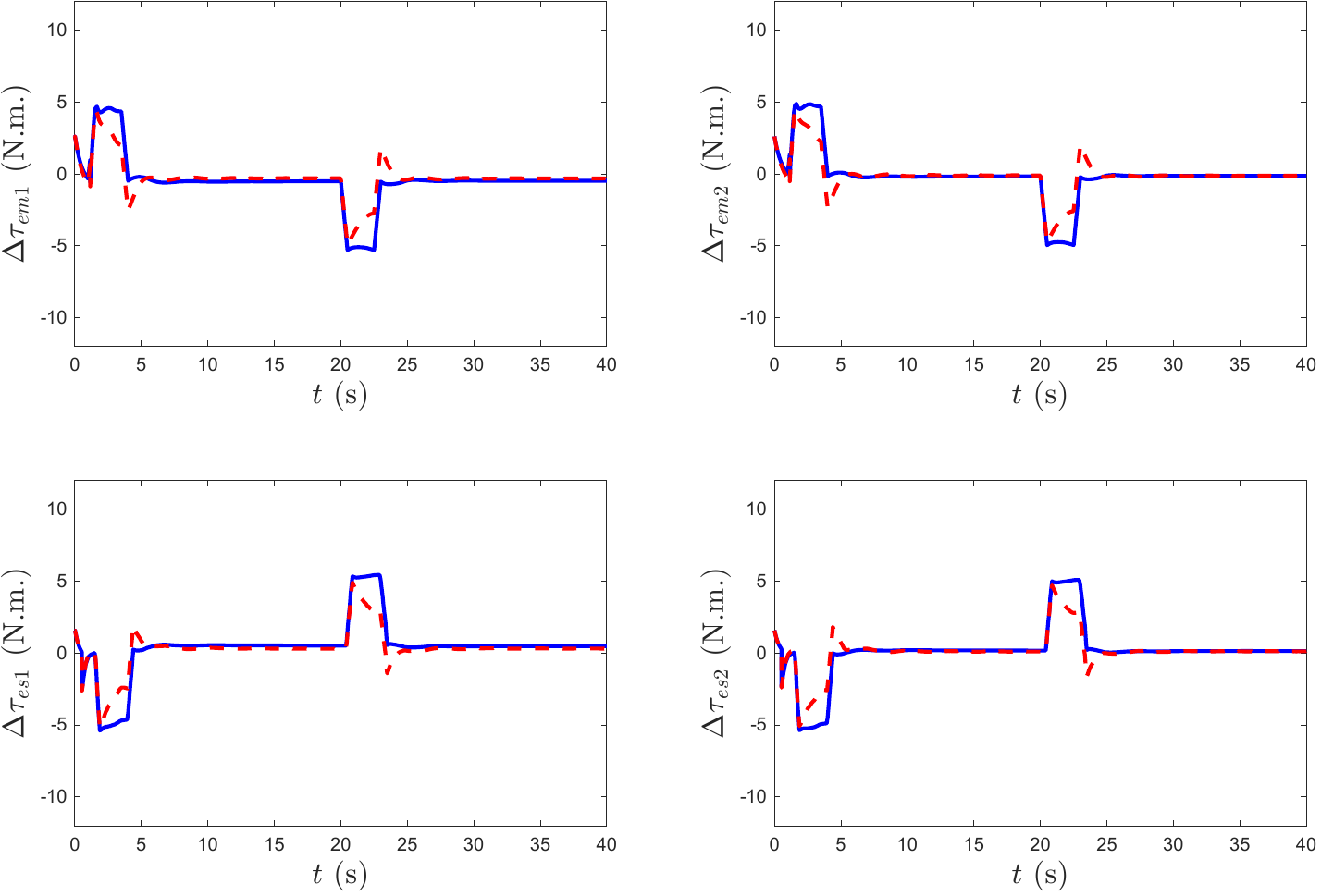}
    \caption{Comparison of force tracking performance with \cite{ID5} in the new environment.}
    \label{fig:fig21}
\end{figure}
Fig.~\ref{fig:fig22} depicts the control signals of the controllers.
\begin{figure}[pos=!htbp]
    \centering
    \begin{subfigure}{0.48\linewidth}
        \centering
        \includegraphics[width=\linewidth]{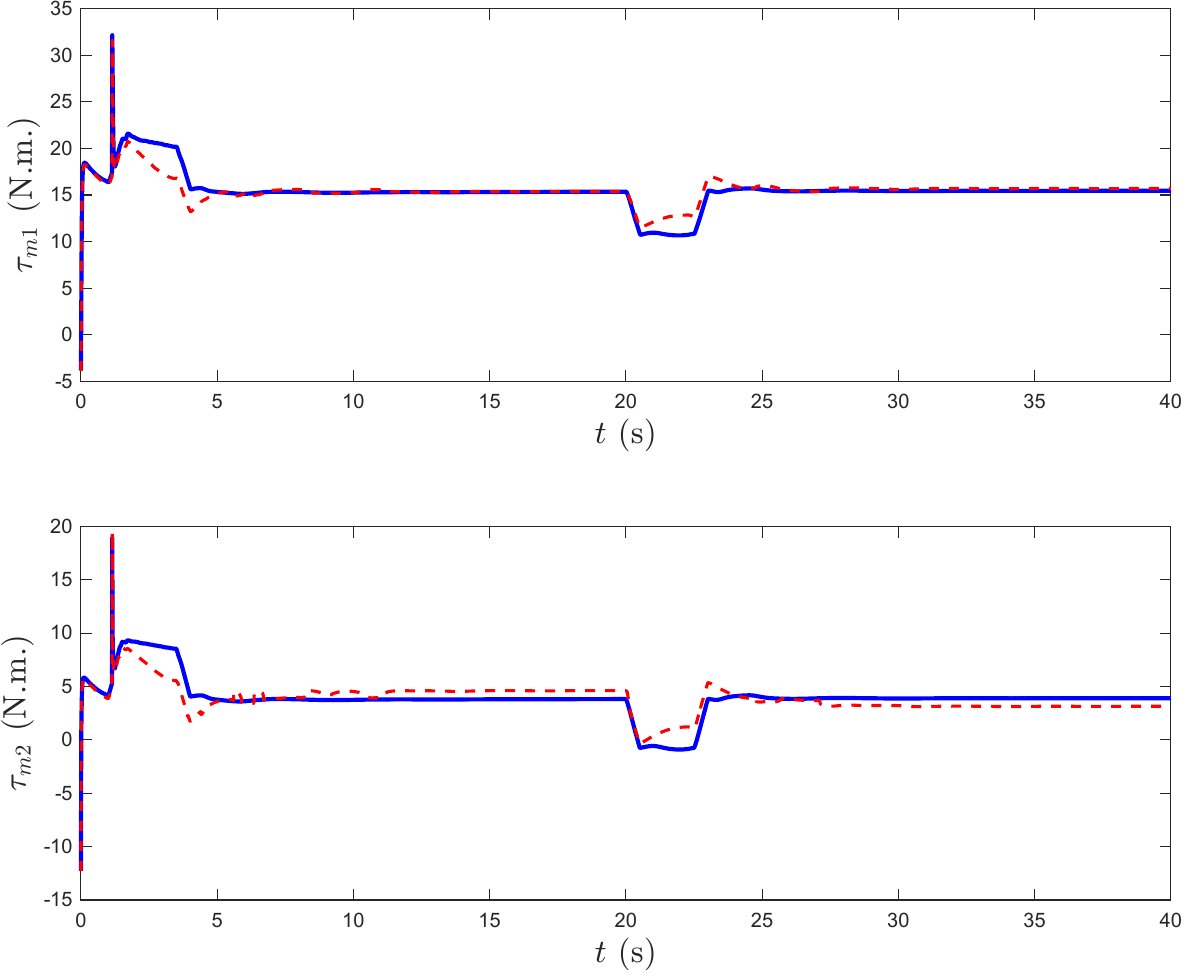}
        \caption{}
        \label{fig:fig22a}
    \end{subfigure}
    \hfill
    \begin{subfigure}{0.48\linewidth}
        \centering
        \includegraphics[width=\linewidth]{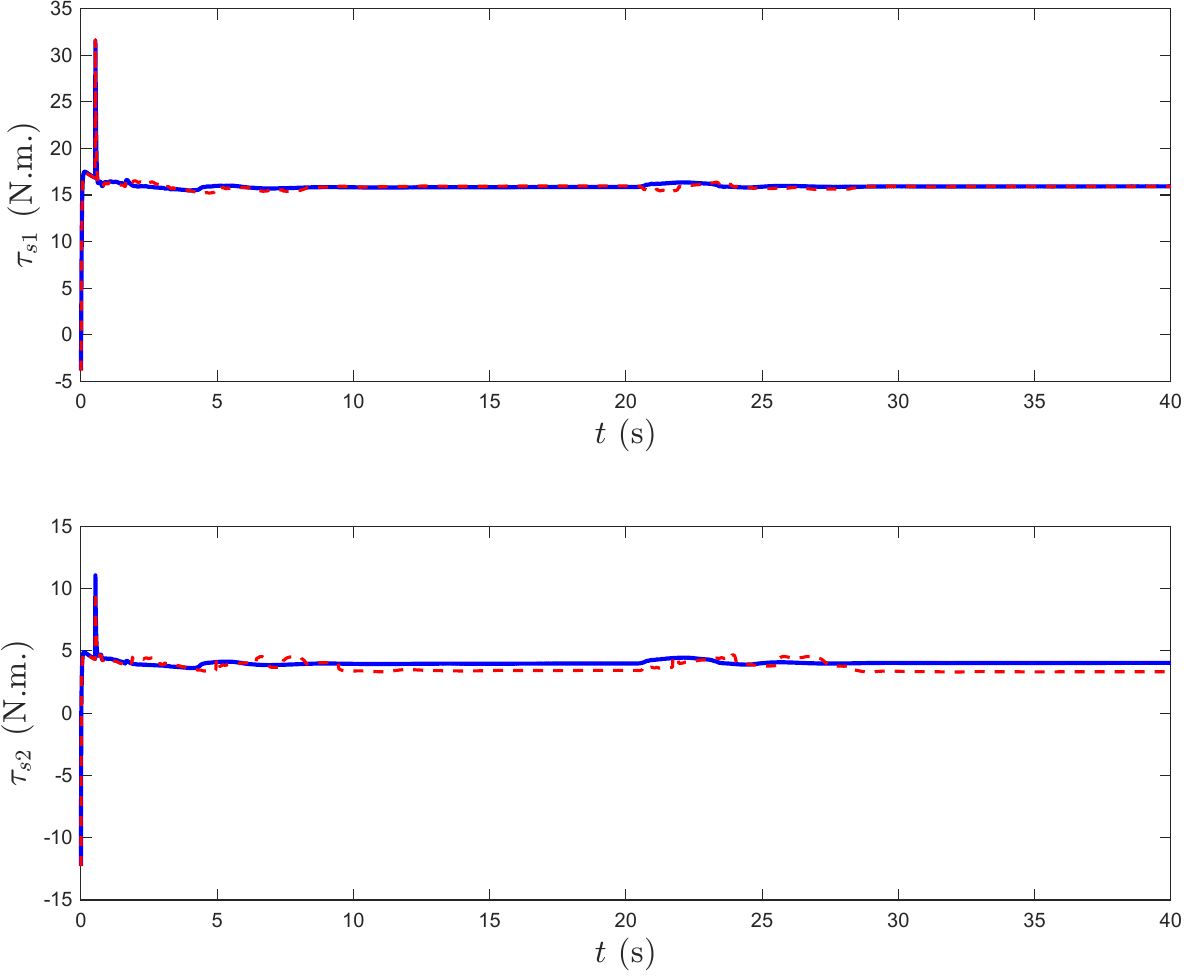}
        \caption{}
        \label{fig:fig22b}
    \end{subfigure}
    \caption{Comparison of input torques with \cite{ID5} in the new environment.}
    \label{fig:fig22}
\end{figure}
The steady-state position and force
tracking performances of the new environment are evaluated using the RMSE over 10-20s, 30-40s, and
the entire simulation interval, as summarized in Table~\ref{tab:position_rmse_new_environment} and Table~\ref{tab:force_rmse_new_environment}.

The results show that the proposed method achieves significantly lower position and force tracking
errors. Notably, these results are obtained without modifying the controller structure or tuning its
parameters, indicating that the proposed method can adapt effectively to the newly introduced history-
dependent environment because the intrinsic fading-memory capability of the LSM, enables it to
capture temporal features of the viscoelastic environment. In contrast, the RBFNN-based method of
\cite{ID5} exhibits degraded tracking performance because RBFNN approximator relies primarily on
the instantaneous information and has limited capability to represent the introduced history-dependent
dynamics of the environment.

\begin{table}[!htbp]
\caption{Comparison of position tracking error's RMSE with \cite{ID5} in the new environment}
\label{tab:position_rmse_new_environment}
\begin{tabular}{cccc}
\toprule
RMSE ($10^{-4}$ Rad) & During & Proposed & RBF Method \\
\midrule
$e_{m1}$ & $10s-20s$ & $7.76 \pm 0.04$ & $251.19$ \\
& $30s-40s$ & $7.75 \pm 0.03$ & $247.67$ \\
& $0-40s$ & $33.14 \pm 0.10$ & $900.08$ \\
\midrule
$e_{m2}$ & $10s-20s$ & $5.45 \pm 0.05$ & $945.29$ \\
& $30s-40s$ & $3.76 \pm 0.04$ & $935.57$ \\
& $0-40s$ & $31.35 \pm 0.10$ & $865.39$ \\
\midrule
$e_{s1}$ & $10s-20s$ & $5.41 \pm 0.03$ & $223.67$ \\
& $30s-40s$ & $6.35 \pm 0.02$ & $564.68$ \\
& $0-40s$ & $17.65 \pm 0.10$ & $966.67$ \\
\midrule
$e_{s2}$ & $10s-20s$ & $6.33 \pm 0.06$ & $842.21$ \\
& $30s-40s$ & $3.47 \pm 0.04$ & $926.10$ \\
& $0-40s$ & $19.19 \pm 0.10$ & $965.63$ \\
\bottomrule
\end{tabular}

\vspace{1mm}

\noindent{\footnotesize mean $\pm$ stddev., $N=10$\par}
\end{table}

\begin{table}[!htbp]
\caption{Comparison of force tracking error's RMSE with \cite{ID5} in the new environment}
\label{tab:force_rmse_new_environment}
\begin{tabular}{cccc}
\toprule
RMSE ($10^{-2}$ N.m) & During & Proposed & RBF Method \\
\midrule
$\Delta\tau_{m1}$ & $10s-20s$ & $0.52 \pm 0.02$ & $130.76$ \\
& $30s-40s$ & $0.47 \pm 0.01$ & $132.57$ \\
& $0-40s$ & $15.35 \pm 0.10$ & $85.01$ \\
\midrule
$\Delta\tau_{m2}$ & $10s-20s$ & $1.79 \pm 0.03$ & $111.13$ \\
& $30s-40s$ & $1.37 \pm 0.04$ & $121.93$ \\
& $0-40s$ & $13.86 \pm 0.10$ & $96.32$ \\
\midrule
$\Delta\tau_{s1}$ & $10s-20s$ & $0.53 \pm 0.02$ & $197.80$ \\
& $30s-40s$ & $0.52 \pm 0.02$ & $131.12$ \\
& $0-40s$ & $14.41 \pm 0.10$ & $101.43$ \\
\midrule
$\Delta\tau_{s2}$ & $10s-20s$ & $0.77 \pm 0.02$ & $103.76$ \\
& $30s-40s$ & $1.75 \pm 0.02$ & $122.68$ \\
& $0-40s$ & $12.88 \pm 0.10$ & $94.43$ \\
\bottomrule
\end{tabular}

\vspace{1mm}

\noindent{\footnotesize mean $\pm$ stddev., $N=10$\par}
\end{table}
\section{Conclusion}
\label{sec:conclusion}
This article addressed the position and force tracking problem in bilateral teleoperation systems subject to asymmetric time-varying communication delays and uncertain nonlinear dynamics. A finite-time adaptive control scheme was developed by integrating hybrid position/force error dynamics, auxiliary variables, and filtering mechanisms with an LSM-based uncertainty estimator. By exploiting the intrinsic temporal processing capability of the LSM, the proposed controller can account for history-dependent uncertainties while retaining a simple online adaptation structure. Closed-loop stability and finite-time convergence of the tracking errors were established through a Lyapunov--Krasovskii analysis.

The effectiveness of the proposed method was evaluated in both spring--damper and history-dependent generalized Maxwell environments. In the spring--damper environment, the proposed method achieved average position and force RMSEs of \(31.09 \times 10^{-4}\) Rad and \(11.14 \times 10^{-2}\) N.m., respectively, compared with \(52.19 \times 10^{-4}\) Rad and \(11.35 \times 10^{-2}\) N.m. for the RBFNN-based method \cite{ID5}. In the generalized Maxwell environment, the corresponding RMSEs were \(25.33 \times 10^{-4}\) Rad and \(14.12 \times 10^{-2}\) N.m. for the proposed method and \(924.44 \times 10^{-4}\) Rad and \(94.29 \times 10^{-2}\) N.m. for the RBFNN-based controller. Moreover, the proposed method achieved a lower mean execution time of \(2.7205\) s, compared with \(3.2803\) s for the RBFNN-based method. In particular, the generalized Maxwell environment highlighted the advantage of the LSM in capturing temporal interaction effects without modifying or retuning the controller structure. Thus, the results demonstrate improved tracking accuracy and computational efficiency and indicate that the proposed LSM-based framework is well suited to teleoperation tasks involving memory-dependent and dynamically evolving interaction uncertainties.

Future work will proceed in several directions. Extending the design to a fixed-time formulation would render the settling-time bound independent of the initial condition. The control and filter gains and reservoir hyperparameters were selected manually; treating the conditions of Theorem~\ref{thm:main} as constraints and searching the remaining freedom by genetic algorithms \cite{Arabsorkhi2024} or reinforcement learning \cite{Attarzadeh2026} may further improve tracking accuracy. The sparse per-step activity of the estimator makes a neuromorphic implementation a natural next step, which would also allow the reported computational advantage to be assessed on embedded hardware. Broader validation across hysteretic and contact-rich environment models, and against rate-based reservoirs such as echo state networks, would clarify the extent to which the observed gains depend on spiking dynamics rather than reservoir structure in general.

\FloatBarrier


\nocite{*} 
\bibliographystyle{elsarticle-num-names}
\bibliography{combined_references}

@article{NEWAPP1,
  author  = {Theodora Kastritsi and Theofanis Prapavesis Semetzidis
             and Zoe Doulgeri},
  title   = {Passive Bilateral Surgical Teleoperation With {RCM}
             and Spatial Constraints in the Presence of Time Delays},
  journal = {IEEE Transactions on Robotics},
  year    = {2025},
  volume  = {41},
  pages   = {612--627},
  doi     = {10.1109/TRO.2024.3502221}
}

@article{NEWAPP2,
  author  = {Edgar M. Hidalgo and Mats Isaksson
             and Mariadas Capsran Roshan and Thomas H. Marwick
             and Leah Wright and Gavin Lambert},
  title   = {Evaluating the Impacts of Network Latency, Haptics,
             and Ergonomics in a Haptically-Enabled Robot for
             Teleoperated Echocardiography},
  journal = {Computers in Biology and Medicine},
  year    = {2025},
  volume  = {195},
  pages   = {110450},
  doi     = {10.1016/j.compbiomed.2025.110450}
}

@article{NEWAPP3,
  author  = {He Li and Shuxiang Guo and Ruijie He and Hanze Wang
             and Masahiko Kawanishi},
  title   = {A Home-Based Upper Limb Rehabilitation System via
             Cloud-Based Teleoperation and {sEMG}-Driven Bilateral Control},
  journal = {IEEE Internet of Things Journal},
  year    = {2025},
  volume  = {12},
  number  = {17},
  pages   = {35510--35521},
  doi     = {10.1109/JIOT.2025.3578857}
}

@article{NEWAPP4,
  author  = {Beatrice Luciani and Alex van den Berg and Matti Lang
             and Alexandre L. Ratschat and Laura Marchal-Crespo},
  title   = {Intuitive Therapist Robot Patient Physical Interaction
             Is Worth a Thousand Words},
  journal = {Scientific Reports},
  year    = {2026},
  note    = {Published online 21 July 2026},
  doi     = {10.1038/s41598-026-63191-x}
}

@article{NEW1,
  author  = {R. J. Anderson and M. W. Spong},
  title   = {Bilateral Control of Teleoperators with Time Delay},
  journal = {IEEE Transactions on Automatic Control},
  year    = {1989},
  volume  = {34},
  number  = {5},
  pages   = {494--501},
  month   = may,
  doi     = {10.1109/9.24201}
}

@article{NEW2,
  author  = {G. Niemeyer and J.-J. E. Slotine},
  title   = {Stable Adaptive Teleoperation},
  journal = {IEEE Journal of Oceanic Engineering},
  year    = {1991},
  volume  = {16},
  number  = {1},
  pages   = {152--162},
  month   = jan,
  doi     = {10.1109/48.64895}
}

@article{NEW3,
  author  = {S. Munir and W. J. Book},
  title   = {Internet-Based Teleoperation Using Wave Variables with Prediction},
  journal = {IEEE/ASME Transactions on Mechatronics},
  year    = {2002},
  volume  = {7},
  number  = {2},
  pages   = {124--133},
  month   = jun,
  doi     = {10.1109/TMECH.2002.1011249}
}

@article{NEW4,
  author  = {Z. Chen and F. H. Huang and W. Song and S. Q. Zhu},
  title   = {A Novel Wave-Variable Based Time-Delay Compensated Four-Channel Control Design for Multilateral Teleoperation System},
  journal = {IEEE Access},
  year    = {2018},
  volume  = {6},
  pages   = {25506--25516},
  doi     = {10.1109/ACCESS.2018.2829601}
}

@article{NEW5,
  author  = {Emanuel Slawi{\'n}ski and Vicente Mut},
  title   = {{PD}-Like Controllers for Delayed Bilateral Teleoperation of Manipulators Robots},
  journal = {International Journal of Robust and Nonlinear Control},
  year    = {2015},
  volume  = {25},
  number  = {12},
  pages   = {1801--1815},
  month   = aug,
  doi     = {10.1002/rnc.3177}
}

@article{NEW6,
  author  = {Saba Al-Wais and Suiyang Khoo and Tae Hee Lee and Lakshmanan Shanmugam and Saeid Nahavandi},
  title   = {Robust {$H_\infty$} Cost Guaranteed Integral Sliding Mode Control for the Synchronization Problem of Nonlinear Tele-Operation System with Variable Time-Delay},
  journal = {ISA Transactions},
  year    = {2018},
  volume  = {72},
  pages   = {25--36},
  month   = jan,
  doi     = {10.1016/j.isatra.2017.10.009}
}

@article{NEW7,
  author  = {Y. N. Yang and C. C. Hua and J. P. Li and X. P. Guan},
  title   = {Finite-Time Output-Feedback Synchronization Control for Bilateral Teleoperation System via Neural Networks},
  journal = {Information Sciences},
  year    = {2017},
  volume  = {406--407},
  pages   = {216--233},
  month   = sep,
  doi     = {10.1016/j.ins.2017.04.034}
}

@article{NEW8,
  author  = {Y. C. Liu and M. H. Khong},
  title   = {Adaptive Control for Nonlinear Teleoperators with Uncertain Kinematics and Dynamics},
  journal = {IEEE/ASME Transactions on Mechatronics},
  year    = {2015},
  volume  = {20},
  number  = {5},
  pages   = {2550--2562},
  month   = oct,
  doi     = {10.1109/TMECH.2015.2388555}
}

@article{NEW9,
  author  = {Z. Chen and B. Liang and T. Zhang and X. Wang and B. Zhang},
  title   = {Adaptive Bilateral Control for Nonlinear Uncertain Teleoperation with Guaranteed Transient Performance},
  journal = {Robotica},
  year    = {2016},
  volume  = {34},
  number  = {10},
  pages   = {2205--2222},
  month   = oct,
  doi     = {10.1017/S0263574714002847}
}

@article{NEW10,
  author  = {D.-H. Zhai and Y. Q. Xia},
  title   = {Adaptive Control of Semi-Autonomous Teleoperation System with Asymmetric Time-Varying Delays and Input Uncertainties},
  journal = {IEEE Transactions on Cybernetics},
  year    = {2017},
  volume  = {47},
  number  = {11},
  pages   = {3621--3633},
  month   = nov,
  doi     = {10.1109/TCYB.2016.2573798}
}

@article{NEW11,
  author  = {Z. J. Li and X. Q. Cao and N. Ding},
  title   = {Adaptive Fuzzy Control for Synchronization of Nonlinear Teleoperators with Stochastic Time-Varying Communication Delays},
  journal = {IEEE Transactions on Fuzzy Systems},
  year    = {2011},
  volume  = {19},
  number  = {4},
  pages   = {745--757},
  month   = aug,
  doi     = {10.1109/TFUZZ.2011.2143417}
}

@article{NEW12,
  author  = {D. Sun and Q. F. Liao and H. L. Ren},
  title   = {Type-2 Fuzzy Modeling and Control for Bilateral Teleoperation System with Dynamic Uncertainties and Time-Varying Delays},
  journal = {IEEE Transactions on Industrial Electronics},
  year    = {2018},
  volume  = {65},
  number  = {1},
  pages   = {447--459},
  month   = jan,
  doi     = {10.1109/TIE.2017.2719604}
}

@article{NEW13,
  author  = {C. G. Yang and X. J. Wang and Z. J. Li and Y. A. Li and C.-Y. Su},
  title   = {Teleoperation Control Based on Combination of Wave Variable and Neural Networks},
  journal = {IEEE Transactions on Systems, Man, and Cybernetics: Systems},
  year    = {2017},
  volume  = {47},
  number  = {8},
  pages   = {2125--2136},
  month   = aug,
  doi     = {10.1109/TSMC.2016.2615061}
}

@article{NEW14,
  author  = {H. Q. Wang and P. X. Liu and S. C. Liu},
  title   = {Adaptive Neural Synchronization Control for Bilateral Teleoperation Systems with Time Delay and Backlash-Like Hysteresis},
  journal = {IEEE Transactions on Cybernetics},
  year    = {2017},
  volume  = {47},
  number  = {10},
  pages   = {3018--3026},
  month   = oct,
  doi     = {10.1109/TCYB.2016.2644656}
}

@article{NEW15,
  author  = {Y. N. Yang and C. C. Hua and X. P. Guan},
  title   = {Adaptive Fuzzy Finite-Time Coordination Control for Networked Nonlinear Bilateral Teleoperation System},
  journal = {IEEE Transactions on Fuzzy Systems},
  year    = {2014},
  volume  = {22},
  number  = {3},
  pages   = {631--641},
  month   = jun,
  doi     = {10.1109/TFUZZ.2013.2269694}
}

@article{NEW16,
  author  = {D.-H. Zhai and Y. Q. Xia},
  title   = {Finite-Time Control of Teleoperation Systems with Input Saturation and Varying Time Delays},
  journal = {IEEE Transactions on Systems, Man, and Cybernetics: Systems},
  year    = {2017},
  volume  = {47},
  number  = {7},
  pages   = {1522--1534},
  month   = jul,
  doi     = {10.1109/TSMC.2016.2631601}
}

@article{NEW17,
  author  = {Y. N. Yang and C. C. Hua and X. P. Guan},
  title   = {Finite-Time Synchronization Control for Bilateral Teleoperation under Communication Delays},
  journal = {Robotics and Computer-Integrated Manufacturing},
  year    = {2015},
  volume  = {31},
  pages   = {61--69},
  month   = feb,
  doi     = {10.1016/j.rcim.2014.07.001}
}

@article{NEW18,
  author  = {Y. N. Yang and C. C. Hua and X. P. Guan},
  title   = {Finite Time Control Design for Bilateral Teleoperation System with Position Synchronization Error Constrained},
  journal = {IEEE Transactions on Cybernetics},
  year    = {2016},
  volume  = {46},
  number  = {3},
  pages   = {609--619},
  month   = mar,
  doi     = {10.1109/TCYB.2015.2410785}
}

@article{NEW19,
  author  = {Z. W. Wang and Z. Chen and B. Liang and B. Zhang},
  title   = {A Novel Adaptive Finite Time Controller for Bilateral Teleoperation System},
  journal = {Acta Astronautica},
  year    = {2018},
  volume  = {144},
  pages   = {263--270},
  month   = mar,
  doi     = {10.1016/j.actaastro.2017.12.046}
}

@article{NEW20,
  author  = {H. C. Zhang and A. G. Song and S. B. Shen},
  title   = {Adaptive Finite-Time Synchronization Control for Teleoperation System with Varying Time Delays},
  journal = {IEEE Access},
  year    = {2018},
  volume  = {6},
  pages   = {40940--40949},
  doi     = {10.1109/ACCESS.2018.2857802}
}

@article{NEW21,
  author  = {Z. W. Wang and Z. Chen and Y. M. Zhang and X. Y. Yu and X. Wang and B. Liang},
  title   = {Adaptive Finite-Time Control for Bilateral Teleoperation Systems with Jittering Time Delays},
  journal = {International Journal of Robust and Nonlinear Control},
  year    = {2019},
  volume  = {29},
  number  = {4},
  pages   = {1007--1030},
  month   = mar,
  doi     = {10.1002/rnc.4423}
}

@article{NEW22,
  author  = {F. Hashemzadeh and M. Tavakoli},
  title   = {Position and Force Tracking in Nonlinear Teleoperation Systems under Varying Delays},
  journal = {Robotica},
  year    = {2015},
  volume  = {33},
  number  = {4},
  pages   = {1003--1016},
  month   = may,
  doi     = {10.1017/S026357471400068X}
}

@article{NEW23,
  author  = {S. Ganjefar and S. Rezaei and F. Hashemzadeh},
  title   = {Position and Force Tracking in Nonlinear Teleoperation Systems with Sandwich Linearity in Actuators and Time-Varying Delay},
  journal = {Mechanical Systems and Signal Processing},
  year    = {2017},
  volume  = {86},
  pages   = {308--324},
  month   = mar,
  doi     = {10.1016/j.ymssp.2016.09.023}
}

@article{NEW24,
  author  = {H. Amini and S. M. Rezaei and M. Zareinejad and H. Ghafarirad},
  title   = {Enhanced Time Delayed Linear Bilateral Teleoperation System by External Force Estimation},
  journal = {Transactions of the Institute of Measurement and Control},
  year    = {2013},
  volume  = {35},
  number  = {5},
  pages   = {637--647},
  month   = jul,
  doi     = {10.1177/0142331212464643}
}

@article{NEW25,
  author  = {L. P. Chan and F. Naghdy and D. Stirling},
  title   = {Position and Force Tracking for Non-Linear Haptic Telemanipulator under Varying Delays with an Improved Extended Active Observer},
  journal = {Robotics and Autonomous Systems},
  year    = {2016},
  volume  = {75},
  pages   = {145--160},
  month   = jan,
  doi     = {10.1016/j.robot.2015.10.007}
}

@article{NEW26,
  author  = {D. Sun and Q. F. Liao and T. Stoyanov and A. Kiselev and A. Loutfi},
  title   = {Bilateral Telerobotic System Using Type-2 Fuzzy Neural Network Based Moving Horizon Estimation Force Observer for Enhancement of Environmental Force Compliance and Human Perception},
  journal = {Automatica},
  year    = {2019},
  volume  = {106},
  pages   = {358--373},
  month   = aug,
  doi     = {10.1016/j.automatica.2019.04.033}
}

@article{NEW27,
  author  = {Y. Yuan and Y. J. Wang and L. Guo},
  title   = {Force Reflecting Control for Bilateral Teleoperation System under Time-Varying Delays},
  journal = {IEEE Transactions on Industrial Informatics},
  year    = {2019},
  volume  = {15},
  number  = {2},
  pages   = {1162--1172},
  month   = feb,
  doi     = {10.1109/TII.2018.2822670}
}

@article{ID1,
  author  = {Siyu Lu and Yuxi Ban and Xia Zhang and Bo Yang and Shan Liu and Lirong Yin and Wenfeng Zheng},
  title   = {Adaptive Control of Time Delay Teleoperation System with Uncertain Dynamics},
  journal = {Frontiers in Neurorobotics},
  year    = {2022},
  volume  = {16},
  pages   = {928863},
  doi     = {10.3389/fnbot.2022.928863}
}

@article{ID3,
  author  = {Kesong Chen and Haochen Zhang},
  title   = {Design of Synchronization Tracking Adaptive Control for Bilateral Teleoperation System with Time-Varying Delays},
  journal = {Sensors},
  year    = {2022},
  volume  = {22},
  number  = {20},
  pages   = {7798},
  doi     = {10.3390/s22207798}
}

@article{ID4,
  author  = {Marco Moran-Armenta and Kleber Pati{\~{n}}o and Emanuel Slawi{\'n}ski and Francisco G. Rossomando and Vicente Mut and Javier Moreno-Valenzuela},
  title   = {Adaptive Compensation for Bilateral Teleoperation Systems With Asymmetric Time-Varying Delays Using Neural Network in {P+d} Control Framework},
  journal = {IEEE Access},
  year    = {2025},
  volume  = {13},
  pages   = {62633--62647},
  doi     = {10.1109/ACCESS.2025.3558119}
}

@article{ID5,
  author  = {Haochen Zhang and Aiguo Song and Huijun Li and Dapeng Chen and Liqiang Fan},
  title   = {Adaptive Finite-Time Control Scheme for Teleoperation With Time-Varying Delay and Uncertainties},
  journal = {IEEE Transactions on Systems, Man, and Cybernetics: Systems},
  year    = {2022},
  volume  = {52},
  number  = {3},
  pages   = {1552--1566},
  doi     = {10.1109/TSMC.2020.3032295}
}

@article{ID7,
  author  = {Fanghao Huang and Wei Zhang and Zheng Chen and Jianzhong Tang and Wei Song and Shiqiang Zhu},
  title   = {{RBFNN}-Based Adaptive Sliding Mode Control Design for Nonlinear Bilateral Teleoperation System Under Time-Varying Delays},
  journal = {IEEE Access},
  year    = {2019},
  volume  = {7},
  pages   = {11905--11912},
  doi     = {10.1109/ACCESS.2019.2891887}
}

@inproceedings{ID10,
  author    = {Zhaohui An and Gaochen Min and Xiangzuo Jiang and Xinbo Yu and Wei He and Carlos Silvestre},
  title     = {Adaptive {RBF}-Neural-Network Control with Force Observer for Teleoperation Robotic System},
  booktitle = {Cognitive Computation and Systems: First International Conference, {ICCCS} 2022, Revised Selected Papers},
  series    = {Communications in Computer and Information Science},
  volume    = {1732},
  pages     = {315--330},
  publisher = {Springer},
  year      = {2023},
  doi       = {10.1007/978-981-99-2789-0_27}
}

@article{ID11,
  author  = {Jingwen Wang and Jiawei Tian and Xia Zhang and Bo Yang and Shan Liu and Lirong Yin and Wenfeng Zheng},
  title   = {Control of Time Delay Force Feedback Teleoperation System With Finite Time Convergence},
  journal = {Frontiers in Neurorobotics},
  year    = {2022},
  volume  = {16},
  pages   = {877069},
  doi     = {10.3389/fnbot.2022.877069}
}

@article{ID13,
  author  = {Shuang Zhang and Shuo Yuan and Xinbo Yu and Linghuan Kong and Qing Li and Guang Li},
  title   = {Adaptive Neural Network Fixed-Time Control Design for Bilateral Teleoperation With Time Delay},
  journal = {IEEE Transactions on Cybernetics},
  year    = {2022},
  volume  = {52},
  number  = {9},
  pages   = {9756--9769},
  doi     = {10.1109/TCYB.2021.3063729}
}

@article{ID52,
  author  = {Phuong Nam Dao and Van Tinh Nguyen and Yen-Chen Liu},
  title   = {Finite-Time Convergence for Bilateral Teleoperation Systems with Disturbance and Time-Varying Delays},
  journal = {IET Control Theory \& Applications},
  year    = {2021},
  volume  = {15},
  number  = {13},
  pages   = {1736--1748},
  doi     = {10.1049/cth2.12155}
}

@article{ID56,
  author  = {Tinh-Van Nguyen and Yen-Chen Liu},
  title   = {Advanced Finite-Time Control for Bilateral Teleoperators With Delays and Uncertainties},
  journal = {IEEE Access},
  year    = {2021},
  volume  = {9},
  pages   = {141951--141960},
  doi     = {10.1109/ACCESS.2021.3119578}
}

@article{ID64,
  author  = {Lei Xi and Haochen Zhang and Jianrong Liu and Wenxu Zhang and Yangming Fan},
  title   = {A Novel {P+d} Control Scheme for Time-Delayed Telerobotic Systems with Damping Adjustment},
  journal = {Processes},
  year    = {2025},
  volume  = {13},
  number  = {3},
  pages   = {611},
  doi     = {10.3390/pr13030611}
}

@article{ID69,
  author  = {Erchao Li and Ze Wang and Haochen Zhang},
  title   = {Zeroing Neural Dynamics-Based Adaptive Interval Type-2 Fuzzy Neural Network Control for Bilateral Teleoperation Systems},
  journal = {Journal of Control, Automation and Electrical Systems},
  year    = {2026},
  volume  = {37},
  number  = {4},
  pages   = {787--805},
  doi     = {10.1007/s40313-026-01284-8}
}

@article{ID72,
  author  = {Yana Yang and Huixin Jiang and Lu Gan and Changchun Hua and Junpeng Li},
  title   = {Fixed-Time Composite Neural Learning Control of Flexible Telerobotic Systems},
  journal = {IEEE Transactions on Cybernetics},
  year    = {2024},
  volume  = {54},
  number  = {6},
  pages   = {3602--3614},
  doi     = {10.1109/TCYB.2023.3325425}
}

@article{ID74,
  author  = {Hang Li and Wusheng Chou},
  title   = {Adaptive {FNN} Backstepping Control for Nonlinear Bilateral Teleoperation With Asymmetric Time Delays and Uncertainties},
  journal = {International Journal of Control, Automation and Systems},
  year    = {2023},
  volume  = {21},
  number  = {9},
  pages   = {3091--3104},
  doi     = {10.1007/s12555-022-0158-9}
}

@article{ID75,
  author  = {Longnan Li and Zhengxiong Liu and Zhiqiang Ma and Xing Liu and Jianhui Yu and Panfeng Huang},
  title   = {Adaptive Neural Learning Finite-Time Control for Uncertain Teleoperation System with Output Constraints},
  journal = {Journal of Intelligent \& Robotic Systems},
  year    = {2022},
  volume  = {105},
  number  = {4},
  pages   = {76},
  doi     = {10.1007/s10846-022-01675-4}
}

@article{ID76,
  author  = {Longnan Li and Zhengxiong Liu and Zhiqiang Ma and Panfeng Huang and Shaofan Guo},
  title   = {Adaptive Neural Learning Fixed-Time Control for Uncertain Teleoperation System with Time-Delay and Time-Varying Output Constraints},
  journal = {International Journal of Robust and Nonlinear Control},
  year    = {2022},
  volume  = {32},
  number  = {16},
  pages   = {8912--8931},
  doi     = {10.1002/rnc.6319}
}

@article{ID81,
  author  = {Naveen Kumar and Niharika Thakur and Yogita Gupta},
  title   = {Time Delay Compensated Disturbance Observer-Based Sliding Mode Slave Controller and Neural Network Model for Bilateral Teleoperation System},
  journal = {Intelligent Service Robotics},
  year    = {2024},
  volume  = {17},
  number  = {4},
  pages   = {931--943},
  doi     = {10.1007/s11370-024-00546-1}
}

@article{ID87,
  author  = {Parham Mohsenzadeh Kebria and Abbas Khosravi and Saeid Nahavandi and Dongrui Wu and Fernando Bello},
  title   = {Adaptive Type-2 Fuzzy Neural-Network Control for Teleoperation Systems With Delay and Uncertainties},
  journal = {IEEE Transactions on Fuzzy Systems},
  year    = {2020},
  volume  = {28},
  number  = {10},
  pages   = {2543--2554},
  doi     = {10.1109/TFUZZ.2019.2941173}
}

@inproceedings{ID88,
  author    = {Parham M. Kebria and Abbas Khosravi and Saeid Nahavandi},
  title     = {Neural Network Control of Teleoperation Systems with Delay and Uncertainties Based on Multilayer Perceptron Estimations},
  booktitle = {2020 International Joint Conference on Neural Networks ({IJCNN})},
  year      = {2020},
  publisher = {IEEE},
  pages     = {1--7},
  doi       = {10.1109/IJCNN48605.2020.9207035}
}

@inproceedings{ID105,
  author    = {A. Khanzadeh and S. Ganjefar and M. Ghaemifar},
  title     = {An Adaptive Controller for Cooperative Teleoperation System with Time-Varying Delay and Formation},
  booktitle = {2023 11th RSI International Conference on Robotics and Mechatronics ({ICRoM})},
  address   = {Tehran, Iran},
  year      = {2023},
  publisher = {IEEE},
  pages     = {48--53},
  doi       = {10.1109/ICRoM60803.2023.10412586}
}

@article{NEWLSM1,
  author  = {Wolfgang Maass and Thomas Natschl{\"a}ger and Henry Markram},
  title   = {Real-Time Computing Without Stable States: A New Framework for Neural Computation Based on Perturbations},
  journal = {Neural Computation},
  year    = {2002},
  volume  = {14},
  number  = {11},
  pages   = {2531--2560},
  month   = nov,
  doi     = {10.1162/089976602760407955}
}

@article{NEWLSM2,
  author  = {Mantas Luko{\v{s}}evi{\v{c}}ius and Herbert Jaeger},
  title   = {Reservoir Computing Approaches to Recurrent Neural Network Training},
  journal = {Computer Science Review},
  year    = {2009},
  volume  = {3},
  number  = {3},
  pages   = {127--149},
  month   = aug,
  doi     = {10.1016/j.cosrev.2009.03.005}
}

@incollection{NEWLSM3,
  author    = {Wolfgang Maass},
  title     = {Liquid State Machines: Motivation, Theory, and Applications},
  booktitle = {Computability in Context: Computation and Logic in the Real World},
  editor    = {S. Barry Cooper and Andrea Sorbi},
  year      = {2011},
  pages     = {275--296},
  publisher = {Imperial College Press},
  address   = {London, U.K.},
  doi       = {10.1142/9781848162778_0008}
}

@inproceedings{NEWLSM4,
  author    = {Michael R. Smith and Aaron J. Hill and Kristofor D. Carlson and Craig M. Vineyard and Jonathon Donaldson and David R. Follett and Pamela L. Follett and John H. Naegle and Conrad D. James and James B. Aimone},
  title     = {A Novel Digital Neuromorphic Architecture Efficiently Facilitating Complex Synaptic Response Functions Applied to Liquid State Machines},
  booktitle = {2017 International Joint Conference on Neural Networks ({IJCNN})},
  year      = {2017},
  pages     = {2421--2428},
  address   = {Anchorage, AK, USA},
  month     = may,
  publisher = {IEEE},
  doi       = {10.1109/IJCNN.2017.7966150}
}

@article{manna2023plsm,
  author  = {Siladittya Manna and Dipayan Das and Saumik Bhattacharya and Umapada Pal and Sukalpa Chanda},
  title   = {{PLSM}: A Parallelized Liquid State Machine for Unintentional Action Detection},
  journal = {IEEE Transactions on Emerging Topics in Computing},
  year    = {2023},
  volume  = {11},
  number  = {2},
  pages   = {474--484},
  doi     = {10.1109/TETC.2022.3211011}
}

@article{eshraghian2023,
  author  = {Jason K. Eshraghian and Max Ward and Emre O. Neftci and Xinxin Wang and Gregor Lenz and Girish Dwivedi and Mohammed Bennamoun and Doo Seok Jeong and Wei D. Lu},
  title   = {Training Spiking Neural Networks Using Lessons From Deep Learning},
  journal = {Proceedings of the IEEE},
  year    = {2023},
  volume  = {111},
  number  = {9},
  pages   = {1016--1054},
  month   = sep,
  doi     = {10.1109/JPROC.2023.3308088}
}

@book{spong2020,
  author    = {Mark W. Spong and Seth Hutchinson and M. Vidyasagar},
  title     = {Robot Modeling and Control},
  edition   = {2},
  publisher = {John Wiley \& Sons},
  year      = {2020},
  isbn      = {978-1-119-52399-4}
}

@article{zhang2007,
  author  = {Wei Zhang and Henry Y. Chen and Ghassan S. Kassab},
  title   = {A Rate-Insensitive Linear Viscoelastic Model for Soft Tissues},
  journal = {Biomaterials},
  year    = {2007},
  volume  = {28},
  number  = {24},
  pages   = {3579--3586},
  month   = aug,
  doi     = {10.1016/j.biomaterials.2007.04.040}
}

@misc{Attarzadeh2026,
  author        = {Armin Attarzadeh and Mohammad Ali Ghaemifar
                   and Alireza Khanzadeh and Soheil Ganjefar},
  title         = {Deep Reinforcement Learning for Adaptive Gain Tuning
                   in Control of Teleoperation Manipulators with Joint
                   Flexibility and Time-Varying Delays},
  year          = {2026},
  eprint        = {2607.21145},
  archivePrefix = {arXiv},
  primaryClass  = {eess.SY},
  doi           = {10.48550/arXiv.2607.21145}
}

@article{Khosravi2024,
  author  = {Kosar Khosravi and Mohammadali Ghaemifar and Saeed Ebadollahi},
  title   = {Enhancing Spatial Awareness: A Survey of Camera-Based
             Frontal View to Bird's-Eye-View Conversion},
  journal = {SSRN Electronic Journal},
  year    = {2024},
  note    = {Available at SSRN 4829554},
  doi     = {10.2139/ssrn.4829554}
}

@inproceedings{Arabsorkhi2024,
  author    = {Hamid Arabsorkhi and Mohammadali Ghaemifar
               and Saeed Ebadollahi and Saina Moradi},
  title     = {{GA}-Tuned Ensemble Learning for Improving the Performance
               of {Wi-Fi RSS}-Based Indoor Positioning},
  booktitle = {2024 10th International Conference on Web Research ({ICWR})},
  year      = {2024},
  pages     = {349--354},
  publisher = {IEEE},
  doi       = {10.1109/ICWR61162.2024.10533318}
}

@misc{Ghaemifar2026,
  author        = {Mohammadali Ghaemifar and Arshia Goshtasbi
                   and Arian Hajizadeh and Armin Attarzadeh
                   and Erfan Riazati},
  title         = {Adaptive {RBFNN} Control of Uncertain Bilateral
                   Teleoperation Systems with Delay-Dependent {LMI}
                   Stability Conditions},
  year          = {2026},
  eprint        = {2608.20182},
  archivePrefix = {arXiv},
  primaryClass  = {eess.SY},
  doi           = {10.48550/arXiv.2608.20182}
}


\end{document}